\newtheorem{definition}{Definition}
\newtheorem{proposition}{Proposition}
\newtheorem{lemma}{Lemma}
\newenvironment{proof}{\paragraph{\textit{Proof}.}}{\hfill$\square$}
\newcommand{\pone}{\alpha} 
\newcommand{\ptwo}{\beta} 
\newcommand{\joint}{\Psi} 
\newcommand{\mom}{\Phi} 
\newcommand{\jopt}{\psi} 
\newcommand{\mopt}{\phi} 
\newcommand{\sone}{\mathcal{A}} 
\newcommand{\stwo}{\mathcal{B}} 
\newcommand{\sjoint}{\mathcal{C}} 
\newcommand{\step}{\eta} 
\newcommand{\orthproj}{\textrm{Orth}\Pi} 
\newcommand{\mirproj}{\textnormal{Mir}\Pi} 
\newcommand{\mirproxproj}{\textnormal{MirProx}\Pi} 
\newcommand{\proxproj}{\textnormal{Prox}\Pi} 
\newcommand{\regfunc}{h} 
\newcommand{\reg}[1]{\regfunc({#1})} 
\newcommand{\bregdiv}{\mathcal{BD}} 
\renewcommand{\grad}{\mathcal{F}} 
\newcommand{\outcomes}{\Omega} 
\newcommand{\out}{\omega} 
\newcommand{\pout}{P_\out} 
\newcommand{\Id}{\mathbb{1}} 
\newcommand{\util}{\mathcal{U}} 
\newcommand{\payob}{U} 
\newcommand{\exutil}{u} 
\newcommand{\bigo}{\mathcal{O}} 
\newcommand{\cfone}{\mathcal{CF}_\pone} 
\newcommand{\cftwo}{\mathcal{CF}_\ptwo} 
\newcommand{\cont}{\gamma} 
\newcommand{\dg}{\mathcal{DG}} 
\newcommand{\super}{\Xi} 
\renewcommand{\tr}{\textnormal{Tr}} 
\newcommand{\pauli}{\mathcal{P}} 
\newcommand{\inner}[2]{\tr \left[{#1}\: {#2}\right]}
\title{A Quadratic Speedup in Finding Nash Equilibria  of Quantum Zero-Sum Games}
\author[1]{Francisca Vasconcelos}
\email{francisca@berkeley.edu}
\author[2]{Emmanouil-Vasileios Vlatakis-Gkaragkounis}
\email{vlatakis@wisc.edu}
\author[3]{Panayotis Mertikopoulos}
\author[4]{Georgios Piliouras}
\author[5]{Michael I. Jordan}
\affil[1]{UC Berkeley}
\affil[2]{UC Berkeley, UW Madison}
\affil[3]{Univ. Grenoble Alpes, CNRS, Inria, Grenoble INP, LIG, 38000 Grenoble, France}
\affil[4]{Google DeepMind}
\affil[5]{UC Berkeley, Inria Paris}
\begin{document}

\maketitle
\begin{abstract}
    Recent developments in domains such as non-local games, quantum interactive proofs, and quantum generative adversarial networks have renewed interest in quantum game theory and, specifically, quantum zero-sum games. Central to classical game theory is the efficient algorithmic computation of Nash equilibria, which represent optimal strategies for both players. In 2008, Jain and Watrous proposed the first classical algorithm for computing equilibria in quantum zero-sum games using the Matrix Multiplicative Weight Updates (MMWU) method to achieve a convergence rate of $\bigo(d/\epsilon^2)$ iterations to $\epsilon$-Nash equilibria in the $4^d$-dimensional spectraplex. In this work, we propose a hierarchy of quantum optimization algorithms that generalize MMWU via an extra-gradient mechanism. Notably, within this proposed hierarchy, we introduce the Optimistic Matrix Multiplicative Weights Update (OMMWU) algorithm and establish its average-iterate convergence complexity as $\bigo(d/\epsilon)$ iterations to $\epsilon$-Nash equilibria. This quadratic speed-up relative to Jain and Watrous' original algorithm sets a new benchmark for computing $\epsilon$-Nash equilibria in quantum zero-sum games. 
\end{abstract}

\newpage
\tableofcontents

\newpage
\section{Introduction}
\subsection{Motivation}
Ever since von Neumann's groundbreaking work in the 1920's \cite{v1928theorie}, game theory has become a foundational pillar of modern mathematics, economics, and computer science.
The central solution concept in non-cooperative game theory is that of a \emph{Nash equilibrium}, i.e., a strategy of joint play in which no single player can benefit from a unilateral deviation~\cite{nash1951non}.
Despite its mathematical significance, the computational perspective on general Nash equilibria is murkier, since even approximating such an equilibrium is a PPAD-complete problem for general classical games.
On the other hand,  Nash equilibria in two-player zero-sum games 
(i.e., where one player's gain is the other's loss) can be computed in polynomial time, 
and such games have found practical applications in a wide variety of domains, 
from resource allocation problems~\cite{pillai2014resource}, to political strategy~\cite{brams2011game}, and the training of machine learning models~\cite{silver2016mastering,GPAM+14,PDSG17}.

When the players' actions are the states of a quantum system, classical game theory no longer applies, and one must consider \emph{quantum games}. These games constitute a natural framework for the study of the exchange and processing of quantum information, resulting in various useful applications to theoretical, computational, and cryptographic settings. The original interest in quantum games dates back to seminal work by Bell \cite{bell64} and CSHS \cite{cshs69}, explaining the Einstein-Podolsky-Rosen (EPR) Paradox. Since then, substantial work on  quantum ``non-local games'' \cite{mermin90, hardy93, Aravind:01, greenberger2007going, reichardt_classical_2013} has lead to novel methods for testing whether two or more spatially separated, non-communicating
quantum devices can generate correlations not reproducible by any pair of classical devices---resulting in entanglement-based tests of quantum advantage. Much of this work culminated in establishing MIP$^*$=RE \cite{mipre}, which, among many notable results, demonstrated that there exists an efficient reduction from the Halting Problem to deciding whether a two-player non-local game has entangled value at most 1 or $\frac{1}{2}$. Furthermore,  it was recently proven that non-local games can be compiled into single-prover interactive games \cite{KalaiLV023}. In general, quantum interactive proofs with competing provers can be modeled as competitive quantum refereed games \cite{Gutoski05,GW05,Gutoski07} and multi-prover quantum interactive proofs \cite{Kobayashi03, Cleve04, kempe_using_2009, Kempe11} can be modeled as cooperative quantum games. Finally, there has been substantial work on quantum coin-flipping \cite{Ambainis01, Spekkens02, mochon2007quantum, Miller20}, a game model where two players directly exchange quantum information, studying how two parties with competing interests can carry out a fair coin flip across a quantum communication channel.

In this work, we consider the specific class of quantum \emph{zero-sum} games, in which two players are in direct competition and do not share entanglement. One of the earliest works in quantum game theory \cite{meyer1996} studied a ``matching pennies''-type zero-sum game to prove that quantum strategies are at least as good as and, in some cases, can outperform classical strategies. Further work on quantum zero-sum games led to a proof that the quantum complexity class QRG(1), of problems having one-turned quantum refereed games, is contained in PSPACE \cite{jain2009parallel}. Finally, in the context of quantum machine learning, there has been substantial recent interest in quantum generative adversarial networks (QGANs) \cite{KilloranQGANs,ChakrabartiHLFW19}, for which the training of the competing generator and discriminator networks can be modeled as a quantum zero-sum game.

As in the classical setting, finding Nash equilibria of \emph{general} quantum games is computationally prohibitive---PPAD-complete to be exact \cite{Bostanci2022quantumgametheory}. However, the situation is more favorable for quantum \emph{zero-sum} games. 
As demonstrated by Jain and Watrous \cite{jain2009parallel}, Nash equilibria of $d$-qubit two-player quantum zero-sum games can be calculated to $\epsilon$-accuracy in polynomial time and linear dependence on the number of qubits (logarithmic in the spectraplex dimensionality), i.e., $\bigo(d/\epsilon^2)$, even with payoff-based information on the players' side \cite{LMBB23}.
In view of this, equilibrium strategies in quantum zero-sum games can in principle be computed algorithmically, thus providing the required guarantees of implementability for proof schemes or cryptographic ciphers that rely on the computation of quantum Nash equilibria.
Our goal is to pursue this objective further,  bringing tools from online learning and computational learning theory to bear to improve the design of algorithms for the computation of approximate Nash equilibria in quantum zero-sum games.

\subsection{Prior Work} \label{sec:prior_work}

Our work builds upon the work of Jain and Watrous \cite{jain2009parallel} for non-interactive quantum zero-sum games.  To provide some context,  \cite{jain2009parallel} leveraged feedback in the form of quantum channels (governed by superoperators) to introduce the Matrix Multiplicative Weights Update (MMWU) algorithm, a two-player variant of the matrix exponentiated gradient algorithm of Tsuda et al. \cite{tsuda2005matrix}, which is itself a special case of Mirror Descent \cite{NY83,BECK2003167}. 
Jain, Piliouras and Sim \cite{jain2022matrix} recently revisited the setting of MMWU in non-interactive quantum zero-sum games by studying its day-to-day behavior. They showed that a continuous-time analogue of MMWU, that they call \emph{quantum replicator dynamics}, leads to cyclic behavior (formally Poincar\'{e} recurrence) in quantum zero-sum games with interior equilibrium as the quantum relative entropy between a fully mixed Nash equilibrium and the evolving state of the system is time-invariant.
Beyond the setting of quantum zero-sum games, variants of the MMWU algorithm have also been used in many important applications such as
proving QIP=PSPACE \cite{DBLP:journals/jacm/JainJUW11},
solving SDPs \cite{arora2007combinatorial},
finding balanced separators \cite{chen2023simple},
enhancing spectral sparsification \cite{allen2015spectral},
covariance optimization tasks \cite{MBM12,MM16,BMB20}, and
matrix learning \cite{KSST12}. We refer the interested reader to the survey by Arora, Hazan, and Kale \cite[Section 5]{arora2012multiplicative}, as well as the numerous follow-up works referring to it, for a more detailed treatment of the history and applications of MMWU.

For the quantum zero-sum games of interest in this work, despite recent work on no-regret learning dynamics for zero-sum and general quantum games \cite{LMBB23,lin2023noregret}, to the best of our knowledge, the $\bigo(d/\epsilon^2)$ convergence rate of the MMWU algorithm is the state of the art for finding $\epsilon$-approximate Nash equilibria.
In this work, we ask:
\begin{center}
    \textbf{\textsf{Can we improve on the performance of MMWU in quantum zero-sum games?}}
\end{center}
More precisely, our paper's goal and main contribution is to provide an accelerated version of the MMWU algorithm which achieves $\epsilon$-accuracy in linear, $O(d/\epsilon)$ time, instead of $O(d/\epsilon^2)$.

To achieve this quadratic speedup, our point of departure is the literature on classical \emph{finite} games. The classical version of the MMWU algorithm---known, among other names, as the Multiplicative Weights Update (MWU) method \cite{arora2012multiplicative}---similarly achieves an $\bigo_d(1/\epsilon^2)$ iteration complexity.%
\footnote{Hereon, we will use the $\bigo_d(\cdot)$ notation to hide the dimensionality dependence of the iteration complexity. There will be explicit discussion of dimensionality dependence in the quantum setting in \Cref{sec:dim_dep}.} 
However, Nemirovski \cite{Nemirovski2004Prox} and Auslender \& Teboulle \cite{AT05} showed that, by a suitable modification of the MWU method---now commonly known as the \emph{mirror-prox} family of algorithms---it is possible to accelerate this rate to $\bigo_d(1/\epsilon)$, a rate which was recently shown by Ouyang and Xu \cite{OX21} to be order-optimal in the setting of min-max convex-concave problems.

The main algorithmic insight of the mirror-prox template is the combination of an extra-gradient step in the spirit of Korpelevich \cite{korpelevich1976extragradient} with an iterate-averaging mechanism à la Polyak \& Juditsky \cite{PJ92}.
Coupled with the smoothness of the underlying objective, this combination yields a $\bigo(1/\epsilon)$ convergence rate, albeit with a possibly suboptimal dependence on the dimension of the problem.
In the case of the simplex---which is the state space of classical, finite games---the dependence on the dimension can be improved dramatically by instantiating the mirror-prox with an entropic regularizer, which essentially boils down to a variant of MWU ``with advice'' \cite{Nemirovski2004Prox}.
The only notable drawback of the resulting algorithm is that it requires not one, but two oracle queries per iteration, which doubles the computation cost, and requires additional coordination\,/\,communication from the players' side.
Building on an original idea by Popov \cite{popov1980modification}, this ``cost doubling'' issue was mitigated by the so-called \emph{optimistic mirror descent} (OMD) proposal of Raklhin and Sridharan \cite{RS13-NIPS}, which optimistically reused past gradient information as a surrogate for the extra-gradient step of the mirror-prox algorithm.
In this way, when applied to mixed extensions of classical, finite games, OMD ultimately achieves an order-optimal convergence rate in terms of both $\epsilon$ and $d$, all the while using a single oracle call per iteration.
We summarize this hierarchy of optimization methods and the relevant trade-offs in \Cref{fig:classical_methods}.

\subsection{Our Contributions and Methodology}
Our work focuses on developing an accelerated algorithm with similar, linear-time performance guarantees in the \emph{quantum} setting.
The main contributions of this work are thus twofold:
\textit{a}) we propose the  \emph{Optimistic Matrix Multiplicative Weights Update (OMMWU) algorithm} (\Cref{alg:OMMWU}) for finding $\epsilon$-Nash equilibria in quantum zero-sum games;
and
\textit{b})
we show that OMMWU achieves the following iteration complexity guarantees.

\begin{restatable}[OMMWU Iteration Complexity]{corollary}{colommwu}\label{thm:main-result-entroy}%
    In a $4^d$-dimensional spectraplex, OMMWU  computes average-iterate $\epsilon$-Nash equilibria in $\bigo(d/{\epsilon})$ iterations.
\end{restatable}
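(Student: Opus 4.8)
The plan is to obtain \Cref{thm:main-result-entroy} as a direct instantiation of the general average-iterate convergence guarantee established above for the OMMWU branch of our hierarchy, specialized to the negative von Neumann entropy regularizer $\regfunc(\rho)=\tr[\rho\log\rho]$ on the spectraplex of $2^d\times 2^d$ density matrices. That general guarantee bounds the ergodic duality gap after $T$ rounds by $\bigo\!\big(\Omega^2/(\step T)\big)$, where $\Omega^2$ is the Bregman radius of $\regfunc$ around the initialization and where the step-size $\step$ may be taken as large as a constant multiple of the ratio $K/L$ of the strong-convexity modulus $K$ of $\regfunc$ to the Lipschitz modulus $L$ of the game's payoff field $\grad$. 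Setting this bound equal to $\epsilon$ and solving for $T$ then produces the iteration count, so the proof reduces to pinning down the three problem constants $\Omega^2$, $K$, and $L$ for the entropic setup.

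First I would compute the Bregman radius. Initializing both players at the maximally mixed state $\Id/2^d$, the Bregman divergence of $\regfunc$ is exactly the quantum relative entropy, and $\sup_{\rho}\bregdiv(\rho,\Id/2^d)=\sup_{\rho}\big[d\log 2 - S(\rho)\big]=d\log 2$ since the von Neumann entropy $S$ ranges over $[0,d\log 2]$; hence $\Omega^2=\Theta(d)$. Second, I would record that the payoff of a normalized quantum zero-sum game is bilinear in the two density matrices, so the induced operator $\grad$ is affine, and its Lipschitz modulus with respect to the trace (Schatten-$1$) norm, measured into the operator (Schatten-$\infty$) norm, is controlled by the operator norm of the game's payoff superoperator, which is $\bigo(1)$ after normalization; thus $L=\bigo(1)$. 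Third, I would invoke strong convexity of $\regfunc$ with respect to the trace norm: the negative von Neumann entropy is $1$-strongly convex in $\norm{\cdot}_{1}$ uniformly over the spectraplex (the matrix analogue of the fact that negative Shannon entropy is $1$-strongly convex in $\ell_1$, equivalently a Pinsker-type inequality for quantum relative entropy). This gives $K=\Theta(1)$, so the step-size may be chosen $\step=\Theta(1)$, \emph{independently of $d$}.

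Combining the three ingredients, the ergodic duality gap is $\bigo(d/T)$, so $T=\bigo(d/\epsilon)$ rounds suffice; since the duality gap upper-bounds the Nash gap in a zero-sum game, this is exactly the claimed complexity. The step I expect to be the main obstacle is the dimension-free strong convexity of the von Neumann entropy in trace norm: unlike the scalar entropy computation behind $\Omega^2$, this is a genuinely matrix-analytic statement — one cannot naively lift the scalar Hessian bound, because the perturbation need not commute with $\rho$ — and it is precisely what guarantees a $d$-independent step-size, so that the only dimensional penalty is the $\Theta(d)$ Bregman radius, which is what yields $\bigo(d/\epsilon)$ rather than a worse rate. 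A secondary point needing care is verifying that reusing the previous round's payoff operator as the optimistic surrogate (the feature distinguishing OMMWU from its mirror-prox sibling) inflates $L$ by at most a constant factor, so that the single-oracle-call variant retains the same guarantee.
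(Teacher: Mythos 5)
Your proposal follows essentially the same route as the paper: instantiate the general OMMP average-iterate bound $\bigo\big(\mathcal{D}_h/(\step T)\big)$ with the von Neumann entropy, bound the Bregman radius from the maximally mixed initialization by the maximum entropy $\Theta(d)$, and take a dimension-independent step size from the $\Theta(1)$ strong convexity of the entropy in Schatten-$1$ norm paired with the $\bigo(1)$ Lipschitz constant of $\grad$ in the $(\|\cdot\|_\infty,\|\cdot\|_1)$ norm pair. The two technical points you flag as obstacles are exactly the ones the paper resolves (the trace-norm strong convexity by citation to a quantum Pinsker-type result, and the optimistic surrogate within the OMMP convergence analysis), so the argument is correct and matches the paper's.
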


\noindent In order to achieve this result, we forge a link between quantum zero-sum games and established works in classical game theory and optimization. In addition to OMMWU---which surpasses the
$\bigo(d/\epsilon^2)$ rate of MMWU with a single gradient call per iteration---we also propose a hierarchical family of algorithms and techniques that can be applied to various semidefinite programming problems (quantum or otherwise) and thus may be of independent interest.
\\
\\
\textsf{\textbf{Methodology.}} In terms of methodology, we move away from  the \emph{channel-based} view of quantum zero-sum games in \cite{jain2009parallel} and instead adopt a \emph{gradient-based} perspective, similar to prior work on learning in general quantum games~\cite{lotidis2023learning,LMB23-CDC} (see also~\cite{lin2023noregret} connecting the same online-learning, regret-minimization perspective to new classes of quantum correlated equilibria). In this framework, Alice and Bob's feedback is characterized not by traditional quantum superoperators, but rather by mathematical gradient operators,
which comprise the game's gradient feedback operator $\mathcal{F}$. We prove several important properties of $\mathcal{F}$, namely establishing Lipschitz continuity and monotonicity.
In doing so, we conclude that solving a quantum zero-sum game to a desired accuracy essentially boils down to solving a smooth, semidefinite convex-concave problem with first-order oracle information.

\begin{algorithm}[t] 
\caption{Optimistic Matrix Multiplicative Weights Update (OMMWU)}\label{alg:OMMWU}
\begin{algorithmic}
\State \textbf{Accuracy Parameter:} $\epsilon$
\State \textbf{Regularization Function:} $h:\{\mathcal{A},\mathcal{B}\}\to\mathbb{R}$
\State \textbf{Strong Convexity Parameter (of $h$):} $\mu_h$
\State \textbf{Diameter (of $\bregdiv_\regfunc$):} 
$\mathcal{D}_h=\sup_{X,Y\in \sjoint} \bregdiv_\regfunc(X\|Y)$
\State \textbf{Lipschitz Parameter (of $\grad$):} $\gamma_\grad$
\State \textbf{Proximal Map:} $
    \proxproj^{\regfunc,\step}_{\sone,\stwo}(X,Y):=
    {\arg\min}_{C\in\sjoint}\{ \langle Y,C-X\rangle -\frac{1}{\step}\bregdiv_\regfunc (C||X)\}.$
\vspace{0.1in}
\State $\step \gets  \ {\mu_h}/(2\gamma_\grad)$ 
\Comment{Step Size}
\State $N \gets \lceil \mathcal{D}_h/(\step\cdot\epsilon) \rceil$ 
\Comment{Number of Rounds}
\State $(\pone_0,\ptwo_0) \gets \left(\frac{1}{2^n}\Id_\sone, \frac{1}{2^m}\Id_\stwo\right)$ \Comment{State Initialization}
\State $(\hat{\pone}_0,\hat{\ptwo}_0) \gets (\pone_0,\ptwo_0)$ \Comment{Intermediate State Initialization}

\vspace{0.1in}
\For{$t\in[1,N-1]$}

\vspace{0.1in}

\State $\pone_{t+1}\gets\Lambda (\log \hat{\pone}_t +\eta \grad_\pone(\ptwo_t))$ \Comment{State Updates}
\vspace{0.05in}
\State $\ptwo_{t+1}\gets \Lambda (\log \hat{\ptwo}_t +\eta \grad_\ptwo(\pone_t))$

\vspace{0.1in}
\State $\hat{\pone}_{t+1}\gets
\Lambda (\log \hat{\pone}_t +\eta \grad_\pone(\ptwo_{t+1}))$ \Comment{Momentum Updates}
\vspace{0.05in}
\State $\hat{\ptwo}_{t+1}\gets\Lambda (\log \hat{\ptwo}_t +\eta \grad_\ptwo(\pone_{t+1}))$

\vspace{0.1in}
\EndFor
\end{algorithmic}
\vspace{0.1in}
\Return $\bigg( \bar{\pone}=\frac{1}{N}\sum_{t=0}^{N-1}\pone_t, \bar{\ptwo}=\frac{1}{N}\sum_{t=0}^{N-1}\ptwo_t \bigg)$
\end{algorithm} 

Our approach is based on endowing the problem's feasible region---a product of spectraplexes---with an OMD template in the spirit of Rakhlin and Sridharan \cite{RS13-NIPS}.
A difficulty that arises when trying to combine these elements is that the product structure of standard exponential weight algorithms does not carry over automatically to the non-commutative matrix variables that arise in the quantum setting. 
The key observation that allows us to retain the speed-up of a mirror prox method and the state-of-art dimensionality dependence is that the von Neumann entropy (which is the underlying regularizer of the MMWU algorithm) can be encoded in a matrix exponentiation step, as per the original MMWU algorithm (see also~\cite{canyakmaz2023multiplicative} for generalizations of this connection for optimization over symmetric cones).
This allows us to implement an optimistic update structure in the problem's dual space, where the primary arithmetic operation is ordinary addition---which is an abelian operation on the space of Hermitian matrices.
Thus, drawing inspiration from a classical hierarchy of optimization methods, we propose a \emph{design of quantum zero-sum game algorithms} (depicted in \Cref{fig:quantum_methods}) achieving an optimized convergence rate with a single gradient call per iteration.

Within this design, we demonstrate that  the  MMWU algorithm of \cite{jain2009parallel} instantiates (via a von Neumann entropy regularizer) the \emph{Matrix Dual Averaging (MDA)} method (\Cref{alg:mda}). Following a classical proof technique in the spirit of Ene and Nguyên \cite{ene2022adaptive}, we also prove that another of the proposed algorithms in our hierarchy---the \textit{Optimistic Matrix Mirror Prox (OMMP)} method (\Cref{alg:SCMMP})---yields a quadratic speedup relative to the convergence rate of MDA, while still requiring only one gradient call per iteration:
\begin{restatable}[Main Result]{theorem}{mainres}\label{thm:main-result}%
    The OMMP method computes $\epsilon$-Nash equilibria in $\bigo_d({1}/{\epsilon})$ steps.
\end{restatable}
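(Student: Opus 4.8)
The plan is to read the OMMP iteration as single-call optimistic mirror descent run on the monotone variational inequality induced by the game, and then to execute the standard mirror-prox energy argument in the non-commutative (matrix) setting. Writing $z_t=(\pone_t,\ptwo_t)$ and $\hat z_t=(\hat\pone_t,\hat\ptwo_t)$ for the joint leading and anchor iterates in $\sjoint$, and bundling the two players' (sign-adjusted) gradients into the single monotone operator $\grad$, I would first invoke the two structural facts already established in the excerpt: that $\grad$ is monotone (equivalently, the game is convex-concave) and $\gamma_\grad$-Lipschitz, so that an $\epsilon$-Nash equilibrium is exactly an $\epsilon$-solution of the saddle-point problem $\min_{\pone}\max_{\ptwo}$ over $\sjoint$. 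Monotonicity lets me lower-bound the Nash gap of any candidate by a first-order (``regret'') expression, while Lipschitz continuity controls the error incurred by reusing a stale gradient.

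The workhorse is a per-round bound on $\langle\grad(z_t),z_t-u\rangle$ for an arbitrary comparator $u\in\sjoint$, the quantity whose average controls the gap. Both the leading iterate $z_t$ and the anchor iterate $\hat z_t$ are Bregman-proximal steps from the previous anchor $\hat z_{t-1}$---the former using the stale gradient $\grad(z_{t-1})$, the latter the fresh gradient $\grad(z_t)$. Splitting $z_t-u=(\hat z_t-u)+(z_t-\hat z_t)$, the optimality condition of the anchor step controls $\langle\grad(z_t),\hat z_t-u\rangle$ and that of the leading step controls $\langle\grad(z_{t-1}),z_t-\hat z_t\rangle$; each application of the three-point identity for $\bregdiv_\regfunc$ contributes a telescoping pair together with a strictly negative divergence. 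Concretely, the anchor step gives
\[ \step\langle\grad(z_t),\hat z_t-u\rangle \le \bregdiv_\regfunc(u\|\hat z_{t-1})-\bregdiv_\regfunc(u\|\hat z_t)-\bregdiv_\regfunc(\hat z_t\|\hat z_{t-1}), \]
and combining it with the leading step---after writing $\grad(z_t)=\grad(z_{t-1})+[\grad(z_t)-\grad(z_{t-1})]$---leaves, beyond telescoping and negative-divergence terms, only the optimism error $\step\langle\grad(z_t)-\grad(z_{t-1}),z_t-\hat z_t\rangle$.

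The crux is absorbing this optimism error into the negative divergences the prox steps supply. Bounding it by $\step\gamma_\grad\|z_t-z_{t-1}\|\,\|z_t-\hat z_t\|$ via Lipschitzness and then by Young's inequality, and invoking $\mu_h$-strong convexity of $\regfunc$ to replace each surviving $\bregdiv_\regfunc$ with a $\tfrac{\mu_h}{2}\|\cdot\|^2$ term, the choice $\step=\mu_h/(2\gamma_\grad)$ is exactly calibrated so that these second-order contributions are nonpositive after summation. The delicate bookkeeping is the factor $\|z_t-z_{t-1}\|$: consecutive leading iterates are not proximal neighbours, so I would route it through the anchor via $\|z_t-z_{t-1}\|\le\|z_t-\hat z_{t-1}\|+\|\hat z_{t-1}-z_{t-1}\|$ and charge each piece to the divergences $\bregdiv_\regfunc(z_t\|\hat z_{t-1})$ and $\bregdiv_\regfunc(z_{t-1}\|\hat z_{t-2})$ left over from the leading steps of the current and previous rounds. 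What remains is purely telescoping, so summing over $t$ and using the diameter bound $\bregdiv_\regfunc(u\|\hat z_0)\le\mathcal{D}_h$ yields $\sum_t\langle\grad(z_t),z_t-u\rangle\le\mathcal{D}_h/\step$ for every $u\in\sjoint$.

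Finally I would convert this regret bound into an average-iterate guarantee. By monotonicity and convexity-concavity, together with Jensen's inequality, the Nash gap of the averaged iterate $(\bar\pone,\bar\ptwo)$ is at most $\max_{u\in\sjoint}\tfrac1N\sum_t\langle\grad(z_t),z_t-u\rangle\le\mathcal{D}_h/(\step N)$. Setting this to $\epsilon$ reproduces the round count $N=\lceil\mathcal{D}_h/(\step\,\epsilon)\rceil=\bigo(\gamma_\grad\mathcal{D}_h/(\mu_h\epsilon))$, i.e. $\bigo_d(1/\epsilon)$ iterations. The genuinely quantum ingredients live only in the preliminaries and the constants: one must check that the three-point identity and prox optimality conditions remain valid for the matrix Bregman divergence generated by a differentiable convex $\regfunc$ on the product of spectraplexes (they do, since they rest only on convexity and differentiability, not on commutativity), and---when specializing to OMMWU in \Cref{thm:main-result-entroy}---that the von Neumann entropy is $\mu_h$-strongly convex in the trace norm and $\grad$ is Lipschitz in the dual norm, which is what pins $\mathcal{D}_h=\bigo(d)$ and $\mu_h,\gamma_\grad=\bigo(1)$ and converts the hidden-dimension rate into the explicit $\bigo(d/\epsilon)$ bound.
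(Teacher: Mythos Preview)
Your proposal is correct and follows essentially the same architecture as the paper: reduce the duality gap of the average iterate to a regret sum via monotonicity, split $\langle\grad(\joint_t),\joint_t-Z\rangle$ into the same three pieces (through the anchor $\mom_t$ and the stale gradient $\grad(\joint_{t-1})$), bound the two prox pieces by first-order optimality plus the three-point identity, absorb the optimism error using Lipschitzness with the step size $\step=\mu_h/(2\gamma_\grad)$, and telescope.

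The one substantive technical difference is in how the optimism error $(B_t)=\langle\grad(\joint_t)-\grad(\joint_{t-1}),\joint_t-\mom_t\rangle$ is handled. You bound it directly by Cauchy--Schwarz, Lipschitzness, and Young's inequality, then route $\|\joint_t-\joint_{t-1}\|$ through the anchor and charge to the negative divergences of the current and previous rounds. The paper instead uses a Fenchel-duality argument (following Ene--Nguy\^en): it recognises $\joint_t=\nabla\jopt_t^*(0)$ and $\mom_t=\nabla\jopt_t^*(\grad(\joint_{t-1})-\grad(\joint_t))$ as gradients of the conjugate of the prox objective, invokes strong/smooth duality to get $\|\joint_t-\mom_t\|\le(\step/\mu)\|\grad(\joint_t)-\grad(\joint_{t-1})\|_*$, and hence $(B_t)\le(\step/\mu)\|\grad(\joint_t)-\grad(\joint_{t-1})\|_*^2$. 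Only after this does it apply Lipschitzness and the triangle-inequality split $\|\joint_t-\joint_{t-1}\|^2\le 2\|\joint_t-\mom_{t-1}\|^2+2\|\joint_{t-1}-\mom_{t-1}\|^2$. Both routes land on the same balance $\step\le\mu/(2\gamma_\grad)\Rightarrow(II)+(III)\le 0$; yours is more elementary, the paper's is slightly cleaner in that the index shift is deferred to a single reindexing of the $\|\mom_t-\joint_t\|^2$ sum rather than spread across the Young step.
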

\noindent Thus, OMMP establishes a new benchmark for computing $\epsilon$-Nash finite-valued quantum zero-sum games. It is by instantiating OMMP with a von Neumann entropy regularizer, that the OMMWU algorithm and $\bigo(d/{\epsilon})$ iteration complexity is achieved.
\\
\\
\textbf{\textsf{Proof Insights.}}
The core idea of an \emph{optimistic} algorithm is to leverage the gradient from the previous iteration as a forecast for the subsequent iteration's gradient. This is predicated on the presumption that if the game's adversary employs a consistent (or stabilized) strategy, then the discrepancy between the gradients of successive iterations will be negligible. Therefore, the formulation of our algorithm is crafted by devising a parallel to such methods, circumventing the limitations imposed by the non-commutativity of the matrix operator, which is a common hurdle in quantum problems. Upon establishing the algorithm in this manner, we demonstrate that the (quantum) superoperators, corresponding to the method's gradients, exhibit two significant characteristics: \emph{monotonicity} and \emph{Lipschitz continuity/smoothness}. 

Monotonicity is a valuable property that allows us to: $(i)$ transform any Nash equilibrium into a more manageable variational inequality problem, and $(ii)$, akin to convexity in single-agent optimization, it guarantees a consistent progression towards the equilibrium.

Lipschitz continuity, in contrast, is the essential quality that accelerates the convergence rate from \(O(1/\epsilon^2)\) to \(O(1/\epsilon)\). To comprehend the significance of this attribute, let's view the situation through the lens of the minimization participant—named Bob, with his maximization counterpart being Alice. Bob finds himself involved in an online convex minimization game. In the absence of any predictive information about the forthcoming functions, which are influenced by Alice's choices, folklore results suggests that the optimal rate is \(\Theta(1/\sqrt{T})\), or equivalently \(\Theta(1/\epsilon^2)\) iterations. However, should Alice's strategy remain relatively static, Bob can anticipate that the upcoming minimization challenges will be similar, affording him a critical advantage for achieving a rate of \(O(1/T)\), or \(O(1/\epsilon)\).

This insight reveals that the algorithm's performance is deeply tied to the Lipschitz parameter of the gradient operator. Here, the concept of mirror proxies plays a vital role in the proof. Utilizing mirror steps, distinct regularizers enable either more assertive or more refined movements within the strategy space of the spectraplex for both parties. Contrary to the exponential dependency implied by a Frobenius norm, we show that the von Neumann entropy regularizer maintains a dimension-independent Lipschitz constant under a norm that is congruent with the specific needs of our problem.
\\
\\
\textbf{\textsf{Future Work.}} In addition to our novel methodology, we conclude with the following conjecture that aims to tighten our method's performance guarantees and is likely of interest to both the quantum and optimization communities:
\begin{restatable}{conjecture}{mainconj} \label{conj:mainconj}%
    The rates of \Cref{thm:main-result} are tight:
    specifically, there exists a quantum two-player, zero-sum game for which OMMP methods require \(\Omega_d(1/\epsilon)\) iterations to compute \(\epsilon\)-Nash equilibria.
\end{restatable}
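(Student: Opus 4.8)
The natural route is a \emph{reduction to the classical simplex setting}, where matching $\Omega_d(1/\epsilon)$ lower bounds for mirror-prox-type methods are within reach of the techniques of Nemirovski~\cite{Nemirovski2004Prox} and Ouyang--Xu~\cite{OX21}. First, I would prove a \emph{diagonal-invariance lemma}: given any classical bilinear zero-sum game $(p,q)\mapsto p^\top M q$ on $\Delta_n\times\Delta_m$, construct a quantum zero-sum game on the spectraplex $\sone\times\stwo$ of dimensions $2^n,2^m$ whose gradient operator $\grad$ maps diagonal density matrices to diagonal Hermitian matrices and agrees, on the diagonal, with the classical payoff gradients. Since (i) the von Neumann entropy restricted to diagonal states is the Shannon entropy, (ii) matrix logarithm, matrix exponential, and trace-normalization all preserve the subalgebra of diagonal matrices, and (iii) OMMWU is initialized at the maximally mixed (hence diagonal) state, the entire OMMWU/OMMP trajectory on such a game stays diagonal; reading off diagonals, it coincides verbatim with classical optimistic mirror-prox run on $M$. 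I would then check that the quantum Nash gap of this game, restricted to diagonal strategies, equals the classical Nash gap, so that any iteration lower bound for the classical algorithm transfers to OMMP on the embedded quantum game.

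Second, I would establish the classical lower bound itself. The key structural fact is that, in the log-domain form emphasized in \Cref{alg:OMMWU}, the dual iterate $\log\pone_{t+1}$ lies in $\log\pone_0 + \mathrm{span}$ of the gradients observed so far, and trace-normalization preserves support; hence after $t$ iterations the iterates inherit the ``support footprint'' of the first $t$ gradients. I would then design a \emph{staircase} payoff matrix $M$ (in the spirit of the bidiagonal/chain instances of~\cite{OX21}, affinely shifted and rescaled to sit on a product of simplices with $\bigo(1)$ entropic diameter and $\bigo(1)$ $\ell_\infty\!\to\!\ell_1$ norm) such that $Mq$ has its ``newly activated'' coordinate one step further along the staircase than $q$, the unique equilibrium is supported near the far end of the staircase, and any strategy pair whose footprint has not yet reached the end has Nash gap $\Omega(1/t)$. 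Optimizing, reaching gap $\epsilon$ forces $t=\Omega(1/\epsilon)$, with the hidden constant governed by the staircase length, i.e., by $d$ --- matching \Cref{thm:main-result}.

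The main obstacle is the second step: abstract first-order lower bounds à la Ouyang--Xu are proved in the Euclidean geometry against \emph{linear-span} methods, whereas OMMP lives in the entropic geometry of the spectraplex and its mirror map is genuinely nonlinear, so one cannot invoke those results as a black box. Two things must be handled with care: (a) re-deriving the hard instance in the $\ell_\infty\!\to\!\ell_1$ operator-norm regime so that the $\bigo(1)$-smoothness and $\bigo(1)$-diameter normalization is honest (this is precisely the regime in which the $\bigo(1/\epsilon)$ upper bound of~\cite{Nemirovski2004Prox} is tight, so such an instance should exist); and (b) verifying that the support/span invariant above is robust enough to constrain \emph{every} step-size schedule and \emph{every} extra-gradient/optimistic variant inside the OMMP family --- i.e., that the resisting-oracle argument goes through for the concrete multiplicative updates, not merely for idealized span methods. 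A secondary, more cosmetic obstacle is bookkeeping the constants so that the bound is genuinely $\Theta_d(1/\epsilon)$ rather than $\Omega(1/\epsilon^{1-o(1)})$, and confirming that the resulting game is finite-valued in the sense required by \Cref{thm:main-result}.
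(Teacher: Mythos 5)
First, be aware that the paper does not prove this statement: \Cref{conj:mainconj} is explicitly left open, and the Discussion section is devoted to explaining why the two standard classical routes (LP duality and gradient-based lower bounds) do not obviously transfer to the spectraplex. Your proposal must therefore be judged as an attack on an open problem, and as such it has a genuine gap at its load-bearing step. The diagonal-invariance reduction in your first step is fine: a diagonal payoff observable yields diagonal payoff gradients, the logit map preserves the diagonal subalgebra, and OMMWU starts at the maximally mixed state, so the trajectory collapses to classical entropic optimistic mirror prox. But this means the entire argument rests on a classical $\Omega_d(1/\epsilon)$ lower bound for that \emph{specific} method on the simplex, and that lower bound is neither available nor obviously true. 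The Nemirovski and Ouyang--Xu results you invoke are minimax lower bounds over a class of span-respecting first-order methods in a Euclidean oracle model; they certify that no method in that class beats $\bigo(1/\epsilon)$, but they do not place a fixed nonlinear algorithm such as entropic mirror prox inside that class. Your proposed ``support footprint'' surrogate for the span invariant also fails as stated: the logit map assigns strictly positive mass to every coordinate after a single update, so the footprint is full immediately and constrains nothing (the usable invariant for multiplicative updates is that coordinates with equal cumulative feedback retain equal relative weight, which is weaker and would need a different hard instance). Worse, the paper's own discussion notes that optimistic gradient methods on polytopes are known to achieve $\log(1/\epsilon)$ (linear) rates precisely because classical zero-sum games have finitely many pure equilibria and polyhedral structure --- and a diagonal embedding inherits exactly this structure, so your reduction targets the class of instances most likely to be \emph{easy}. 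If \Cref{conj:mainconj} is true, the hard instance is plausibly genuinely quantum, exploiting the uncountably many rank-one extreme points of the spectraplex, which your construction can never produce.

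A secondary but substantive issue is that you never fix what ``requires $\Omega_d(1/\epsilon)$ iterations'' quantifies over. If the output is pinned to the uniform average iterate returned by \Cref{alg:SCMMP}, then a far more elementary argument is the natural route: whenever the unique equilibrium is not the maximally mixed state, the $t=\bigo(1)$ terms contribute a $\Theta(1/T)$ bias of $\overline{\joint}_T$ toward the initialization, and for a generic bilinear game the duality gap is bounded below by a constant multiple of that displacement --- no staircase construction or oracle-complexity machinery is needed. If instead the method is allowed to output its last iterate, your plan collides head-on with the linear-rate results above, and you would need to rule out a $\log(1/\epsilon)$ last-iterate rate for OMMP on the spectraplex before any $\Omega_d(1/\epsilon)$ claim could stand. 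Resolving which of these two readings is intended --- and, in the second case, confronting the genuinely quantum geometry rather than reducing it away --- is the real content of \Cref{conj:mainconj}, and the proposal as written does not engage with it.
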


\subsection{Paper Organization}
The paper is divided as follows. \Cref{sec:prelims} reviews preliminaries, such as the quantum zero-sum games setup and mathematical concepts necessary for our proofs. \Cref{sec:algos} presents the series of quantum optimization methods for computing approximate Nash equilibria of quantum zero-sum games, as depicted in \Cref{fig:quantum_methods}. It begins with the MMWU method of \cite{jain2009parallel} and shows how the algorithm is an instantiation of the more general Matrix Dual Averaging (MDA) method.
 It then introduces  the Matrix Mirror Prox (MMP) method, which achieves the optimal rate, but requires two gradients per iteration. The section concludes with our proposed OMMP algorithm, which requires only a single gradient call per iteration. Finally, \Cref{sec:convergence} elaborates the convergence analysis of OMMP algorithm, proving that it achieves an $\bigo_d(1/\epsilon)$ rate. The paper concludes with a discussion of different implementations of the OMMP method, via different regularizers. We show that a von Neumann entropy regularizer, instantiating the OMMWU algorithm, achieves a better dimensionality dependence than the Frobenius regularizer. Thus, the OMMWU algorithm achieves the desired $\bigo(d/\epsilon)$ and a quadratic speedup relative to MMWU. For concision, we defer some details and proofs to the Appendix.

\begin{figure}[htpb!]
    \centering
    \includegraphics[width=\textwidth]{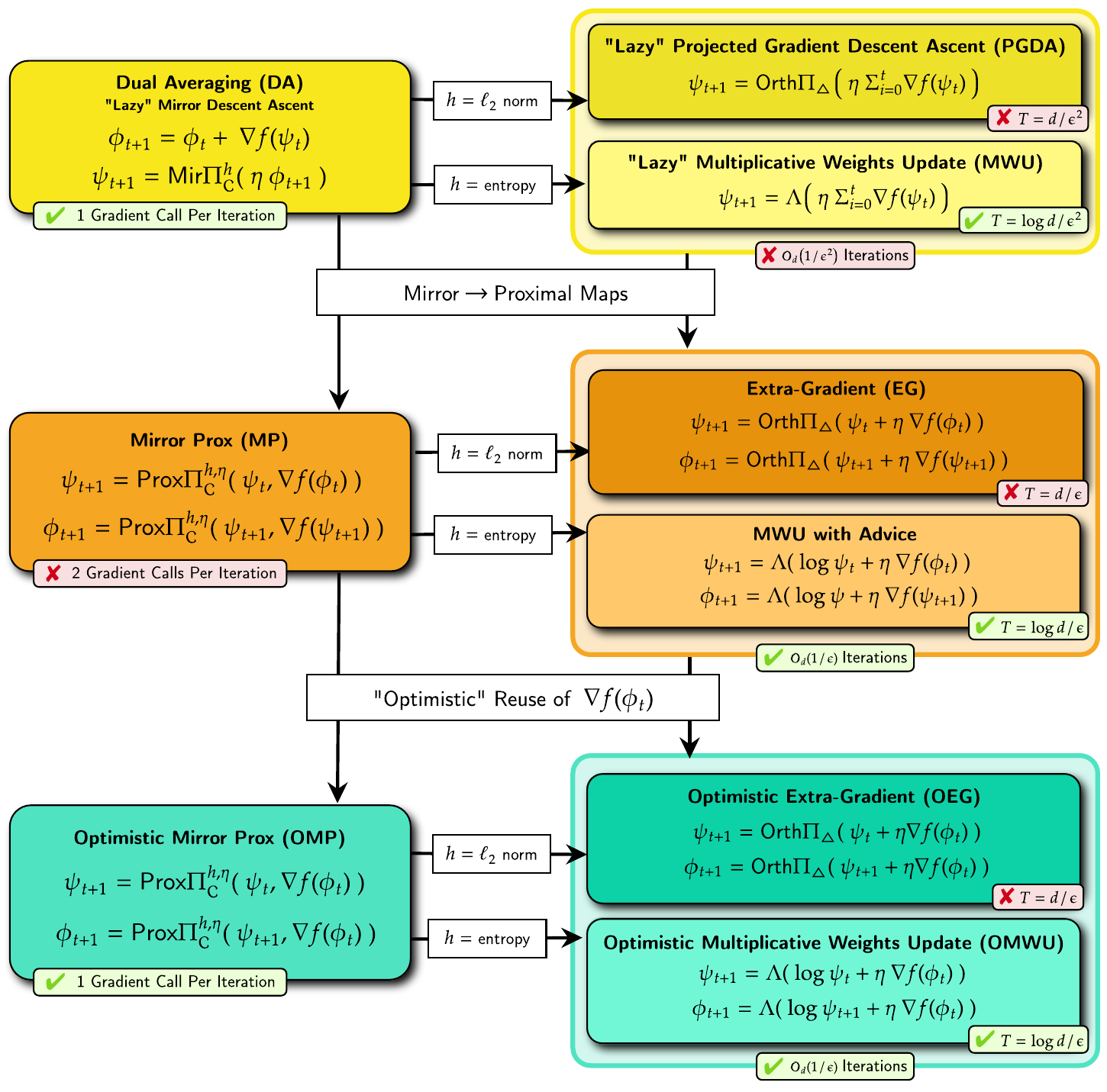}
    \caption{{\textbf{\sffamily Design of Classical Zero-Sum Game Algorithms.}}  This diagram provides the update rules for and relationships between different learning algorithms for classical zero-sum games.
    The left-hand side presents the Bregman generalized methods, parameterized by distance-generating/regularization function $h$. The right-hand side presents instantiations based on the  $\ell_2$ norm and entropy function, inducing an orthogonal projection $\orthproj$ and logit map $\Lambda$, respectively. Note that moving from $h=\ell_2$ to $h=$entropy results in a logarithmic improvement in the total number of rounds $T$, by reducing the dependence on the simplex dimension $d$.  Furthermore, moving from the mirror map $\mirproj$ of the Dual Averaging method (a ``lazy'' variant of the classic Mirror Descent Ascent algorithm) to the proximal maps $\proxproj$ of the Mirror Prox method results in a quadratic improvement in convergence, achieving the desired $O(1/\epsilon)$ rate. Furthermore, by reusing the ``past gradient,'' the Single-Call Mirror Prox method retains the rate of Mirror Prox, but reduces the total number of gradient calls per iteration from two to one. }
    \label{fig:classical_methods}
\end{figure}

\begin{figure}[htpb!]
    \centering
    \includegraphics[width=\textwidth]{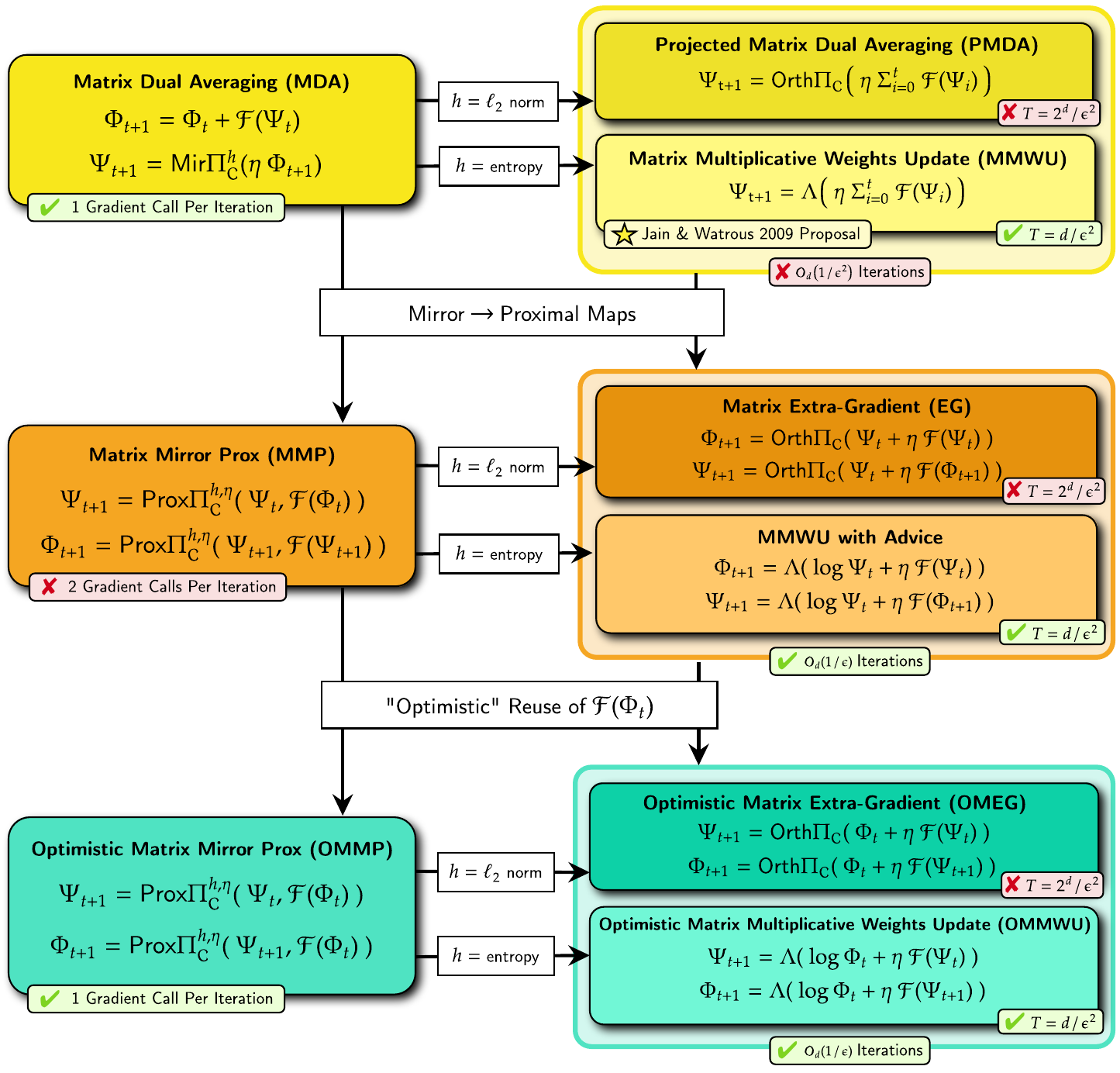}
    \caption{{\textbf{\sffamily Design of Quantum Zero-Sum Game Algorithms.}} This diagram provides the update rules for and relationships between the learning algorithms for quantum zero-sum games
    as proposed in this work. The left-hand side presents the Bregman generalized methods, parameterized by distance-generating/regularization function $h$. The right-hand side presents instantiations based on the von Neumann entropy function and Frobenius ($\ell_2$) norm, inducing an orthogonal projection $\orthproj$ and logit map $\Lambda$, respectively. Note that moving from $\ell_2$ to an entropy function as the regularizer results in a logarithmic improvement in the total number of rounds $T$, by reducing the dependence on the spectraplex dimension $D=4^d$.  Furthermore, moving from the mirror map $\mirproj$ of the Matrix Dual Averaging method (Jain and Watrous' MMWU proposal \cite{jain2009parallel}) to the proximal maps $\proxproj$ of the Matrix Mirror Prox method results in a quadratic improvement in convergence, achieving the desired $O(1/\epsilon)$ rate. Furthermore, by reusing the ``past gradient,'' the Single-Call Matrix Mirror Prox method retains the rate of Matrix Mirror Prox, but reduces the total number of gradient calls per iteration from two to one.  }
    \label{fig:quantum_methods}
\end{figure}

\newpage
\section{Preliminaries} \label{sec:prelims}

\subsection{Quantum Zero-Sum Games}
In this work, we consider the restricted class of non-cooperative zero-sum quantum games. Within this setting, two competing players, Alice and Bob, each transmit a mixed quantum state to a referee (henceforth referred to as Roger), who performs a joint measurement on both states. Roger also possesses a utility function that attributes payoffs to the players based on the measurement outcomes. Since this is a zero-sum game, Bob's payoff will be the negative value of Alice's---e.g., if Alice wins \$5, Bob will lose \$5. We will now describe the game, as depicted in \Cref{fig:game_depict}, in more detail.

\begin{figure}[h!]
    \centering
    \includegraphics[width=.7\textwidth]{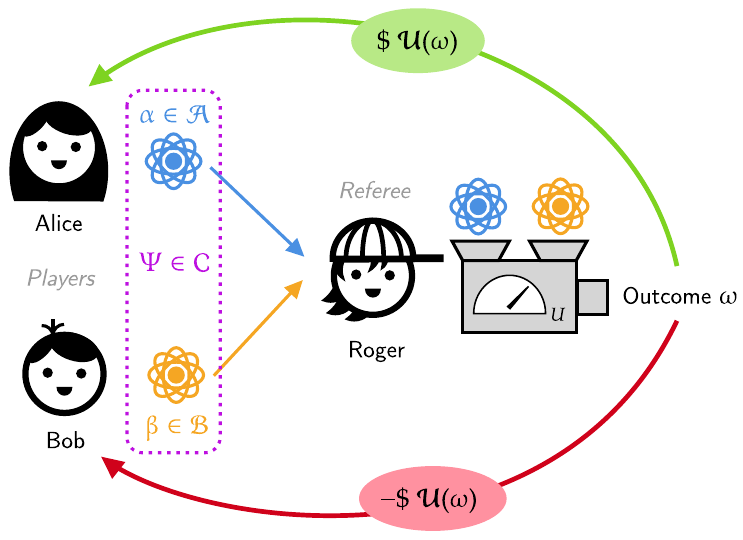}
    \caption{An illustration of a single round of the quantum zero-sum game.}
    \label{fig:game_depict}
\end{figure}

Concretely, Alice and Bob, play the game by independently preparing multi-qubit quantum states $\pone$ and $\ptwo$, respectively, to be sent to Roger. It is important to note that these are \textit{mixed} states, corresponding to ``mixed strategies,'' and that Alice and Bob do not share entanglement. Thus, Alice and Bob's set of possible moves is constrained to the set of density matrices lying in their respective \emph{spectraplexes}.  We will assume Alice prepares $n$-qubit states in her spectraplex,
\begin{align}
    \sone = \{\pone \in\mathcal{H}^{n}_+ : \tr(\pone)=1\},
\end{align}
while Bob prepares $m$-qubit states in his spectraplex,
\begin{align}
    \stwo = \{\ptwo \in\mathcal{H}^{m}_+ : \tr(\ptwo)=1\},
\end{align}
where $\mathcal{H}^{d}_+$ denotes the set of $2^d \times 2^d$-dimensional, positive Hermitian matrices. In \Cref{appendix:spectraplex}, we show  that all spectraplexes are compact and convex sets. 
Denote the joint set of their states as the direct sum
\begin{align}
    \joint = (\pone, \ptwo) \in \sjoint = \sone \oplus \stwo,
\end{align}
with tensor product denoted by $ \joint^\otimes = \pone \otimes \ptwo$.
As a general notational point, we will use lower-case variables (e.g., $\alpha$ or $\beta$) to refer to states in the individual spectraplexes of either Alice ($\sone$) or Bob ($\stwo$) and upper-case variables (e.g., $\Psi$ or $\Phi$) to denote joint states, which lie in the joint space ($\sjoint$).

Alice and Bob send their states to Roger, who performs a joint measurement. This joint measurement has a finite number of possible measurement outcomes, $\outcomes$, defining an $(n+m)$-qubit \emph{positive operator-valued measurement (POVM)},
\begin{align}
    \{ \pout \}_{\omega\in\outcomes}, \text{ where } \sum_{\omega\in\Omega} P_\omega = \Id.
\end{align}
Thus, Roger's probability of observing outcome $\omega\in\Omega$ is 
\begin{equation}
    p_\out (\joint)= \tr \left(\pout^\dagger \joint^\otimes\right).
\end{equation}
Dependent on the measurement outcome, Roger gives Alice and Bob a reward/payoff. Let 
\begin{equation} \label{eqn:util_func}
    \util:\outcomes\rightarrow\mathbb{R}
\end{equation}
denote Alice's \emph{utility} function, which we will assume to be finite-valued and have image $[-1,1]$ under normalization.  We define Alice's \emph{payoff observable} as
\begin{align} \label{eqn:pay_obs_decomp}
    \payob = \sum_{\out\in\outcomes} \util(\out) \pout,
\end{align}  
such that Alice's \emph{expected payoff} is
\begin{equation}
    \exutil_{\text{Alice}}:=\exutil (\joint_t) =\tr (\payob^\dagger \joint^\otimes)=\sum_{\out\in\outcomes} \util (\out)\tr (\pout^\dagger \joint^\otimes) = \sum_{\out\in\outcomes} \util(\out) p_\out (\joint). 
\end{equation}
Since the game is zero-sum, this implies that Bob's utility is $-\util$, with payoff observable $-\payob$ and expected payoff $
    \exutil_{\text{Bob}}:=-\exutil (\joint)$.
In playing the game, Alice and Bob aim to maximize their expected payoff, written as $ \max_\pone \exutil(\pone,\ptwo)$ and $ \max_\ptwo \exutil(\pone,\ptwo)$, respectively.
Since the game is zero-sum, in maximizing his expected payoff, Bob can be seen as trying to minimize Alice's payoff: $
    \max_\ptwo -\exutil(\pone,\ptwo)=\min_\ptwo \exutil(\pone,\ptwo).$
    
Finally, since $\exutil (\pone,\ptwo)$ is a bilinear function and the spectraplexes are compact and convex sets, von Neumann's Min-Max Theorem \cite{v1928theorie} states
\begin{equation}
    \max_\pone \min_\ptwo \exutil (\pone,\ptwo)=\min_\ptwo \max_\pone  \exutil(\pone,\ptwo).
\end{equation}

\subsection{Quantum Nash Equilibria} \label{sec:nash_eq}
A \emph{Nash equilibrium} of the quantum game is a pair of states $(\pone^*, \ptwo^*)$, such that each player has no incentive to change to a different state unilaterally:
\begin{align}
    \exutil (\pone^*,\ptwo^*) &\geq \exutil (\pone,\ptwo^*), \:\: \forall \pone \in \sone \label{eqn:nashone}\\
    \exutil (\pone^*,\ptwo^*) &\leq \exutil (\pone^*,\ptwo), \:\: \forall \ptwo \in \stwo. \label{eqn:nashtwo}
\end{align}
Since $\sjoint$ is convex and $\exutil$ is linear in $\pone$ and $\ptwo$, the existence of Nash equilibria follows from Debreu's equilibrium existence theorem \cite{debreu1952social}.

\begin{figure}[t!]
    \centering
    \includegraphics[width=.7\textwidth]{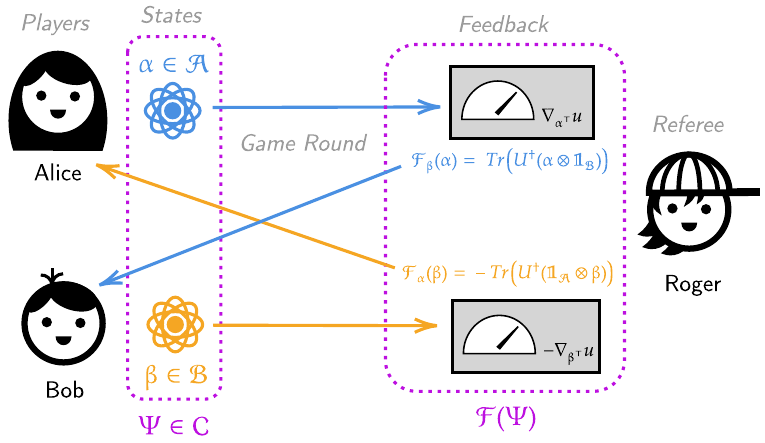}
    \caption{An illustration of the game feedback in a single round of the quantum zero-sum game (gradient-based view).}
    \label{fig:game_feedback}
\end{figure}

In playing the game, Alice and Bob receive feedback from Roger. In the original work on quantum zero-sum games, Jain and Watrous \cite{jain2009parallel} define Alice's feedback as a quantum channel, given by the superoperator $\super :  \stwo  \rightarrow \sone$, applied solely to Bob's state. The superoperator is parameterized by the payoff observable as
\begin{equation} \label{eqn:alice_super}
    \super (\ptwo) = \tr_\stwo [\payob (\Id_\sone \otimes \ptwo^\top)],
\end{equation}
such that the expected utility can be  expressed as
\begin{equation}
    \exutil(\pone,\ptwo)=\tr [\payob (\pone\otimes\ptwo)]=\tr \left[\pone \:\tr_\stwo [\payob (\Id_\sone \otimes \ptwo)]\right]=\tr [\pone \:\super(\ptwo^\top)].
\end{equation}
Similarly, Bob's feedback is defined by the quantum channel, corresponding to adjoint superoperator $\super^*:\sone\rightarrow\stwo$, applied solely to Alice's state.
This adjoint superoperator is defined as
\begin{equation} \label{eqn:bob_super}
    \super^* (\pone) = \tr_\sone [\payob ( \pone^\top \otimes \Id_\stwo)],
\end{equation}
and is uniquely determined by the condition
\begin{equation} \label{eqn:super_relation}
    \exutil(\pone,\ptwo) = \tr [\pone \:\super(\ptwo^\top)] = \tr [\super^*(\pone^\top) \:\ptwo].
\end{equation}

In this work, we will move away from thinking about Roger's feedback in terms of quantum channels towards a gradient-based view, as leveraged in recent work on learning general quantum games \cite{lotidis2023learning}.
It is mathematically equivalent to think of the channel-based feedback in terms of gradients of the expected utility with respect to Alice and Bob's states. 
For a review of complex differentiation and gradients, refer to \Cref{appendix:complex_diff}.
Namely, we define Alice's feedback, or \emph{payoff gradient}, as  
\begin{align} \label{eqn:pgrad1}
    \grad_\pone (\ptwo) &= \nabla_{\pone^\top} \exutil (\pone, \ptwo)  = \nabla_{\pone^\top} \tr \left[\payob^\dagger (\pone\otimes \ptwo)\right] = \tr_\stwo \left[\payob^\dagger (\Id_n \otimes \ptwo)\right],
\end{align}
which is equivalent to $\super(\ptwo^\top)$ of \Cref{eqn:alice_super}. Similarly Bob's feedback is defined as the payoff gradient
\begin{align} \label{eqn:pgrad2}
    \grad_\ptwo (\pone) &= -\nabla_{\ptwo^\top} \exutil (\pone, \ptwo)  = -\nabla_{\ptwo^\top} \tr \left[\payob^\dagger (\pone\otimes \ptwo)\right] = -\tr_\sone \left[\payob^\dagger (\pone \otimes \Id_m)\right],
\end{align}
which is equivalent to $-\super^*(\pone^\top)$ of \Cref{eqn:bob_super}. Similarly to \Cref{eqn:super_relation}, we have that
\begin{align}
    \exutil(\pone,\ptwo) = \inner{\pone}{\grad_\pone (\ptwo)} = -\inner{\ptwo}{\grad_\ptwo (\pone)}.
\end{align}
Under the gradient-based view, $\tr [\pone \: \grad_\pone (\ptwo)]=\tr [\pone \: \nabla_{\pone^\top} \exutil (\pone, \ptwo)]$ can be interpreted as the directional derivative of $\exutil$, as a function of Alice's state, in the direction of Alice's state. Similarly, $-\tr [\grad_\ptwo (\pone) \:\ptwo]=\tr [\nabla_{\ptwo^\top} \exutil (\pone, \ptwo)\:\ptwo]$ can be interpreted as the directional derivative of $\exutil$, as a function of Bob's state, in the direction of Bob's state.
As will be demonstrated throughout the work, moving away from a channel-based to this gradient-based view of the quantum zero-sum game feedback proves invaluable for mapping gradient-based insights from the classical games and optimization literature directly to the quantum setting.

Henceforth, we will denote the joint state as $\joint=(\pone,\ptwo)$, the joint feedback of both players, as follows:\footnote{In a slight abuse of notation, let $\grad(\joint)=\grad(\pone,\ptwo)$.}
\begin{align} \label{eqn:grad_joint}
    \grad (\joint) = \Big(\grad_\pone (\ptwo), \grad_\ptwo (\pone)\Big),
\end{align}
and the game's Nash equilibrium as $\joint^*=(\pone^*,\ptwo^*)$.
Thus, as depicted in \Cref{fig:game_feedback}, in each round $t$ of the game, after playing the joint state $\joint_t$, Alice will receive the individual payoff gradient $\grad_{\pone_t} (\ptwo_t)$ and Bob will receive $\grad_{\ptwo_t} (\pone_t)$, comprising joint feedback $\grad(\joint_t)$.

By using standard arguments~\cite{scutari2010convex}, it can be shown that the solutions of the variational inequality
\begin{align} \label{eqn:vi}
    \inner{(\joint-\joint^*)}{\grad (\joint^*)} \leq 0, \:\:\:\: \forall\: \joint \in \sjoint,
\end{align}
are the game's Nash equilibria, $\joint^*$.\footnote{Note that in the direct sum notation, for matrices $A_1$, $A_2$, $B_1$, and $B_2$, where $\text{dim}(A_1)=\text{dim}(A_2)=n\times n$ and $\text{dim}(B_1)=\text{dim}(B_2)=m\times m$, we have that $\tr [(A_1,B_1) (A_2,B_2)]=\tr[A_1 A_2]+\tr[B_1 B_2]$.}
Specifically, solutions satisfying \Cref{eqn:vi} are known as \textit{strong solutions} to the variational inequality. Alternatively, \textit{weak solutions} to the variational inequality are $\joint^*\in\sjoint$ satisfying
\begin{equation}\label{eqn:vi_weak}
    \inner{(\joint^*-\joint)}{\grad (\joint)} \geq 0, \:\:\:\: \forall\: \joint \in \sjoint.
\end{equation}
We offer intuitive visualizations of both the weak and strong solutions of variational inequalities as vector inner products in \Cref{fig:vi_soln}.

\begin{figure}[t!]
    \centering
    \includegraphics[width=.5\textwidth]{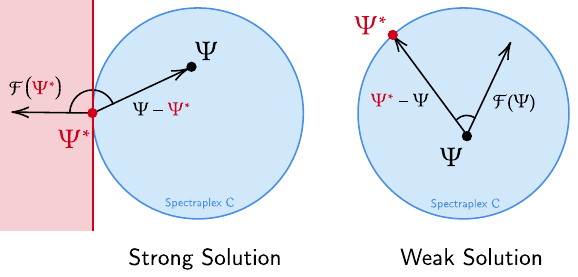}
    \caption{Intuitive illustrations of the strong and weak solutions $\Psi^*$ of the variational inequalities given in \Cref{eqn:vi} and \Cref{eqn:vi_weak}, respectively. For strong solutions, the inner product between $\Psi-\Psi^*$ and $\grad(\Psi^*)$ must be non-positive for all $\Psi$ in the joint spectraplex $\sjoint$. Meanwhile, for weak solutions, the inner product between $\Psi^*-\Psi$ and $\grad(\Psi)$ must be non-negative for all $\Psi\in\sjoint$.}
    \label{fig:vi_soln}
\end{figure}

As we will discuss in \Cref{sec:convergence} and prove in \Cref{sec:f_props}, $\grad$ is both monotone (\Cref{lemma:monotone}) and Lipschitz-continuous (\Cref{lemma:lipschitz}). Additionally, in \Cref{appendix:properties}, we prove the following two important lemmas:
\begin{restatable}{lemma}{strongtoweak}\label{thm:strong_weak}%
   If $\grad$ is monotone, then a strong solution is a weak solution.
\end{restatable}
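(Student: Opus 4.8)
The plan is to prove the implication by a single algebraic decomposition, using monotonicity of $\grad$ (\Cref{lemma:monotone}) to transfer the inequality "anchored at $\joint^*$" that defines a strong solution into the inequality "anchored at an arbitrary point $\joint$" that defines a weak solution. This is the non-commutative analogue of the classical fact that a Stampacchia solution of a monotone variational inequality is automatically a Minty solution, so no quantum-specific machinery is needed beyond linearity of the trace pairing $\inner{\cdot}{\cdot}$.

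First I would fix a strong solution $\joint^* \in \sjoint$, so that $\inner{(\joint - \joint^*)}{\grad(\joint^*)} \leq 0$ for every $\joint \in \sjoint$ by \Cref{eqn:vi}. Then, for an arbitrary $\joint \in \sjoint$, I would write the weak-solution test quantity as
\begin{align*}
    \inner{(\joint - \joint^*)}{\grad(\joint)}
    = \inner{(\joint - \joint^*)}{\grad(\joint) - \grad(\joint^*)}
    + \inner{(\joint - \joint^*)}{\grad(\joint^*)}.
\end{align*}
The second summand is nonpositive because $\joint^*$ is a strong solution. For the first summand I would invoke monotonicity of $\grad$ applied to the pair $(\joint, \joint^*)$, which in the sign convention of \Cref{eqn:vi} reads $\inner{(\joint - \joint^*)}{\grad(\joint) - \grad(\joint^*)} \leq 0$. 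Adding the two bounds gives $\inner{(\joint - \joint^*)}{\grad(\joint)} \leq 0$, equivalently $\inner{(\joint^* - \joint)}{\grad(\joint)} \geq 0$, which is exactly \Cref{eqn:vi_weak}; since $\joint$ was arbitrary, $\joint^*$ is a weak solution.

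The only point requiring care is bookkeeping of the sign conventions: because the variational inequalities here are written in the max--min form of \Cref{eqn:vi} and \Cref{eqn:vi_weak}, "$\grad$ monotone" must be read as $\inner{(\joint - \joint')}{\grad(\joint) - \grad(\joint')} \leq 0$, consistent with \Cref{lemma:monotone}, rather than with the opposite inequality. Beyond that there is no genuine obstacle. (Note that the converse direction — a weak solution is a strong solution — would additionally require continuity of $\grad$ via a line-segment/Minty argument, but that is not part of this statement.)
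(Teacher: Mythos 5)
Your argument is the classical Stampacchia-to-Minty decomposition and is structurally the same as the paper's proof: split $\inner{(\joint-\joint^*)}{\grad(\joint)}$ into a monotonicity pairing plus the strong-solution pairing, bound each summand, and read off \Cref{eqn:vi_weak}. In terms of the route taken, nothing is missing.

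The one point that needs repair is precisely the sign you flag and then wave away. \Cref{lemma:monotone} states monotonicity with the orientation $\inner{(\grad(X)-\grad(Y))}{(X-Y)}\geq 0$, whereas your first summand requires $\inner{(\joint-\joint^*)}{\grad(\joint)-\grad(\joint^*)}\leq 0$; since the trace pairing is invariant under cyclic permutation, these are the same scalar with opposite inequalities, so your assertion that the $\leq 0$ reading is ``consistent with \Cref{lemma:monotone}'' contradicts the lemma as literally stated. Indeed, with the paper's stated conventions ($\geq 0$ monotonicity, \Cref{eqn:vi} written with $\leq 0$, \Cref{eqn:vi_weak} with $\geq 0$), the implication does not hold for a general monotone operator: monotonicity then only lower-bounds $\inner{(\joint-\joint^*)}{\grad(\joint)}$ by a nonpositive quantity, which proves nothing. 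What rescues your argument (and the paper's own write-up, which commits the mirror-image slip by quoting \Cref{eqn:vi} with its sign reversed) is that the proof of \Cref{lemma:monotone} in fact shows the pairing $\inner{(\grad(X)-\grad(Y))}{(X-Y)}$ is identically zero for these bilinear games, so the inequality holds in both directions. To make your proof airtight you should either invoke that exact-zero identity when bounding the first summand, or state explicitly that, because \Cref{eqn:vi} and \Cref{eqn:vi_weak} are written in the maximization orientation, the operative monotonicity convention must be the reversed one; as written, the appeal to \Cref{lemma:monotone} does not deliver the inequality you use.
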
 
\vspace{-0.15in}
\begin{restatable}{lemma}{weaktostrong}\label{thm:weak_strong}%
   If $\grad$ is Lipschitz continuous, then a weak solution is a strong solution.
\end{restatable}
\noindent Therefore, by the monotonicity and Lipschitzness of $\grad$, the strong and weak solutions (described, respectively, by \Cref{eqn:vi} and \Cref{eqn:vi_weak}) are equally valid criteria for Nash equilibria of the finite-valued zero-sum quantum games under consideration. We will generally leverage the weak-solution criterion in our analysis.

Finally, in this work, we focus on the iteration complexity of our methods for computing an \textit{$\epsilon$-approximate Nash equilibrium}, $\tilde{\joint}^*=(\tilde{\pone}^*,\tilde{\ptwo}^*)$, which---as defined by \cite{jain2009parallel}---must satisfy
\[
    \max_{\pone} \exutil(\pone,\tilde{\ptwo}^*)-\epsilon ||\payob||_{\infty} \leq \exutil(\tilde{\pone}^*,\tilde{\ptwo}^*) \leq \min_{\ptwo} \exutil(\tilde{\pone}^*,\ptwo)+\epsilon ||\payob||_{\infty},
\]
where $||U||_{\infty}$ denotes the infinity norm of the payoff observable $\payob$. From this definition and the maximal utility of Nash equilibria it follows that
\begin{align} \label{eqn:dual_gaps}
        \epsilon ||\payob||_{\infty} \geq \exutil(\tilde{\pone}^*,\tilde{\ptwo}^*) - \min_{\ptwo} \exutil(\tilde{\pone}^*,\ptwo) &\geq 0 \\
        \epsilon ||\payob||_{\infty} \geq \max_{\pone} \exutil(\pone,\tilde{\ptwo}^*) - \exutil(\tilde{\pone}^*,\tilde{\ptwo}^*) &\geq 0.
\end{align}
Summing these equations gives the \textit{duality gap}, 
\begin{equation}
    \dg(\pone',\ptwo')= \max_{\pone} \exutil(\pone,\ptwo') - \min_{\ptwo} \exutil(\pone',\ptwo),
\end{equation}
which effectively serves as a measure of proximity between any proposed states, $(\pone', \ptwo')$, and an exact Nash equilibrium, $(\pone^*, \ptwo^*)$. From \Cref{eqn:dual_gaps} it follows that for an $\epsilon$-Nash equilibrium, the duality gap is bounded as
\begin{equation}
    \epsilon' = 2\epsilon||\payob||_{\infty} \geq \dg(\pone',\ptwo')\geq 0.
\end{equation}
Therefore, to evaluate the error of a proposed solution $\tilde{\joint}$, relative to any true Nash equilbrium, we leverage the notion of an \emph{error/merit function}, which corresponds to the duality gap in the setting of min-max optimization:
\begin{align} \label{eqn:error_func}
    \textnormal{Error}(\Tilde{\joint})=\sup_{\joint\in\sjoint} \inner{(\joint-\Tilde{\joint})}{\grad(\joint)}.
\end{align} 
It is easy to check that $\textnormal{Error}(\Tilde{\joint})\geq 0$ for all $\Tilde{\joint}$ and $\textnormal{Error}(\Tilde{\joint})= 0$ if and only if $\Tilde{\joint}=\joint^*$.

\subsection{Convexity, Smoothness, \& Duality} \label{sec:conv_dual}
Crucial to our analysis will be functional notions of convexity, smoothness, and duality~\cite{boyd_vandenberghe_2004, shalev2012online}. We begin by defining convexity and strong convexity of a differentiable function $f:\sjoint \rightarrow \mathbb{R}$.
\begin{definition}[Convex Function] \label{def:convex}
A differentiable function  $f:\sjoint \rightarrow \mathbb{R}$ is \emph{convex} if
\begin{align}
    f (X) \geq f (Y) + \langle\nabla f(Y),X-Y\rangle, \hspace{0.1in} \forall X,Y \in \sjoint.
\end{align}
\end{definition}
\begin{definition}[$\mu$-Strongly Convex Function] \label{def:strong_convex}
A convex differentiable function  $f:\sjoint \rightarrow \mathbb{R}$ is \emph{$\mu$-strongly convex} with respect to norm $\|\cdot\|$ if
\begin{align}
    f (X) \geq f (Y) + \langle\nabla f(Y),X-Y\rangle+\frac{\mu}{2} \|X - Y\|^2, \hspace{0.1in} \forall X,Y \in \sjoint,
\end{align}
or equivalently, following from the definition of the Bregman divergence in \Cref{eqn:breg_div}, 
\begin{align}
    \bregdiv_f (X \| Y) \geq \frac{\mu}{2} \|X - Y\|^2, \hspace{0.1in} \forall X,Y \in \sjoint.
\end{align}
\end{definition}
Since strong convexity implies standard convexity (with $\mu=0$), the optimal points (minima/maxima) of any strongly convex function must satisfy first-order optimality.
\begin{restatable}[First-Order Optimality Condition]{fact}{firstopt}\label{thm:firstopt}%
   For an optimization problem,
   \begin{align*}
       \min f(X) \:\:\text{ subject to } \:\: X\in \mathcal{X},
   \end{align*}
   with convex differentiable $f$, a feasible point $X$ is optimal if and only if
   \begin{align*}
       \langle\nabla f(X), Y-X\rangle \geq 0, \:\: \forall Y\in\mathcal{X}.
   \end{align*}
\end{restatable}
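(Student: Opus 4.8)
The plan is to establish the equivalence by combining the first-order characterization of convexity of $f$ (\Cref{def:convex}) with convexity of the feasible set $\mathcal{X}$ --- in our application, a spectraplex or the direct sum $\sjoint$, whose convexity is recorded in \Cref{appendix:spectraplex}. The two implications are handled separately, and neither requires more than elementary calculus once these two structural facts are in place.

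\textbf{Sufficiency (``if'').} Suppose $\langle \nabla f(X), Y - X\rangle \geq 0$ for every $Y \in \mathcal{X}$. Invoking \Cref{def:convex} with $Y$ in place of its ``$X$'' and $X$ in place of its ``$Y$'' gives $f(Y) \geq f(X) + \langle \nabla f(X), Y - X\rangle \geq f(X)$ for all $Y \in \mathcal{X}$, so $X$ is a global minimizer on $\mathcal{X}$. This direction is immediate and uses only convexity of $f$.

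\textbf{Necessity (``only if'').} Suppose $X$ is optimal and fix any $Y \in \mathcal{X}$. Since $\mathcal{X}$ is convex, the segment $X_t := X + t(Y - X) = (1-t)X + tY$ belongs to $\mathcal{X}$ for every $t \in [0,1]$. Define the real-valued function $\varphi(t) := f(X_t)$ of the real parameter $t$; optimality of $X$ forces $\varphi(t) \geq \varphi(0)$ for all $t \in [0,1]$, hence the one-sided derivative at $0$ satisfies $\varphi'(0^+) \geq 0$. By differentiability of $f$ and the chain rule (for the matrix/complex case, using the directional-derivative formalism reviewed in \Cref{appendix:complex_diff}), $\varphi'(0^+) = \langle \nabla f(X), Y - X\rangle$, which gives the asserted inequality for the arbitrary $Y$.

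The only point that deserves care --- and it is a subtlety rather than a genuine obstacle --- is the necessity direction: convexity of $\mathcal{X}$ is exactly what guarantees that the entire segment $X_t$ remains feasible (so that $\varphi(t) \geq \varphi(0)$ may be asserted for all small $t > 0$), and differentiability of $f$ is what lets us pass from ``$\varphi$ attains a one-sided minimum at $0$'' to the identification $\varphi'(0^+) = \langle \nabla f(X), Y - X\rangle \geq 0$. Everything else is bookkeeping.
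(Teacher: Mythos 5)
Your proof is correct and is the standard argument: sufficiency from the first-order characterization of convexity (\Cref{def:convex}), and necessity by restricting $f$ to the segment $(1-t)X+tY$ (feasible by convexity of $\mathcal{X}$) and taking the one-sided derivative at $t=0$. The paper itself states this as a Fact without proof, deferring to the convex-optimization literature, so there is nothing to compare against; your one flagged subtlety --- that the ``only if'' direction genuinely requires convexity of the feasible set, which the statement leaves implicit but which holds for the spectraplexes at hand --- is exactly the right thing to note.
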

Another property of interest for the convex differentiable function $f$ is smoothness, which implies Lipschitz continuity of $\nabla f$.
\begin{definition}[$\beta$-Smooth Function]\label{def:smoothness-ineq-definition} A convex differentiable function  $f:\sjoint \rightarrow \mathbb{R}$ is $\beta$-smooth with respect to norm $\|\cdot\|$ if
\begin{align}
    f (X) \leq f (Y) + \langle\nabla f(Y),X-Y\rangle+\frac{\beta}{2} \|X - Y\|^2, \hspace{0.1in} \forall X,Y \in \sjoint,
\end{align}
or equivalently, following from the definition of the Bregman divergence in \Cref{eqn:breg_div}, 
\begin{align}
    \bregdiv_f (X \| Y) \leq \frac{\beta}{2} \|X - Y\|^2, \hspace{0.1in} \forall X,Y \in \sjoint.
\end{align}
\end{definition}
Before establishing the Lipschitz condition of $\nabla f$, we introduce the \emph{dual norm}.
\begin{definition}[Dual Norm]\label{defn:dual_norm} If the generic norm of a matrix $X$ is denoted $\|X\|$, the dual norm is defined as
\begin{align}
    \| X\|_* = \sup \{\langle Y, X \rangle : \|Y\| \leq 1\}.
\end{align}
\end{definition}
As shown in \cite{watrous2011theory,meyer2023matrix}, the Frobenius norm $\|\cdot\|_F$ is self-dual, meaning $\| X\|_*=\| X\|_{F}$.

\begin{proposition}[$\beta$-Lipschitz] 
\label{def:smoothness-grad-definition}A convex differentiable function  $f:\sjoint \rightarrow \mathbb{R}$ is $\beta$-smooth with respect to norm $\|\cdot \|$ if its gradient $\nabla f$ is $\beta$-Lipschitz continuous,
\begin{align}
    \| \nabla f(X)-\nabla f(Y)\|_* \leq \beta \|X - Y\| \hspace{0.1in} \forall X,Y \in \sjoint.
\end{align}
\end{proposition}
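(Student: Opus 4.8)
The plan is to prove the equivalence between $\beta$-smoothness of $f$ and $\beta$-Lipschitz continuity of $\nabla f$ by routing everything through the integral form of the first-order Taylor remainder, i.e.\ the Bregman divergence $\bregdiv_f(X\|Y) = f(X) - f(Y) - \langle\nabla f(Y), X-Y\rangle$. The statement as written only requires the implication ``$\nabla f$ is $\beta$-Lipschitz $\Rightarrow$ $f$ is $\beta$-smooth,'' which does not even use convexity, so I would establish this first. Fix $X,Y\in\sjoint$ and set $Z_t = Y + t(X-Y)$ for $t\in[0,1]$; since $\sjoint$ is convex (\Cref{appendix:spectraplex}), each $Z_t\in\sjoint$, so $\nabla f(Z_t)$ is defined. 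The scalar map $g(t)=f(Z_t)$ is differentiable with $g'(t)=\langle\nabla f(Z_t),X-Y\rangle$, hence by the fundamental theorem of calculus $f(X)-f(Y)=\int_0^1\langle\nabla f(Z_t),X-Y\rangle\,dt$; subtracting $\langle\nabla f(Y),X-Y\rangle$ gives $\bregdiv_f(X\|Y)=\int_0^1\langle\nabla f(Z_t)-\nabla f(Y),X-Y\rangle\,dt$. Applying the generalized Cauchy--Schwarz inequality $\langle A,B\rangle\le\|A\|_*\,\|B\|$ (immediate from \Cref{defn:dual_norm}), then the Lipschitz hypothesis with $\|Z_t-Y\|=t\|X-Y\|$, yields $\bregdiv_f(X\|Y)\le\int_0^1\beta t\,\|X-Y\|^2\,dt=\tfrac{\beta}{2}\|X-Y\|^2$, which is precisely the $\beta$-smoothness inequality of \Cref{def:smoothness-ineq-definition}. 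The only delicate bookkeeping is that gradients live in the dual space, so the correct pairing uses $\|\cdot\|_*$, and convexity of $\sjoint$ is exactly what keeps the segment $\{Z_t\}$ feasible so that $\nabla f$ is evaluated only where it is defined.

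For the converse implication, which genuinely uses convexity, I would run the standard ``three-point'' argument. Given $X,Y$, introduce the auxiliary function $\phi_Y(W)=f(W)-\langle\nabla f(Y),W\rangle$, which is convex, still $\beta$-smooth (it differs from $f$ by a linear term), and satisfies $\nabla\phi_Y(Y)=0$, so $Y$ minimizes $\phi_Y$; define $\phi_X$ analogously. Evaluating the $\beta$-smoothness inequality for $\phi_Y$ at a point of the form $X-\tfrac1\beta G$, where $G$ is a dual-norm-attaining representative of $\nabla\phi_Y(X)=\nabla f(X)-\nabla f(Y)$ (in the Frobenius-norm instantiation used elsewhere in the paper the norm is self-dual, so $G=\nabla\phi_Y(X)$), gives $\phi_Y(Y)\le\phi_Y(X)-\tfrac{1}{2\beta}\|\nabla f(X)-\nabla f(Y)\|_*^2$, and the symmetric inequality for $\phi_X$. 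Adding the two and cancelling the $f$ and linear terms produces the co-coercivity bound $\langle\nabla f(X)-\nabla f(Y),X-Y\rangle\ge\tfrac1\beta\|\nabla f(X)-\nabla f(Y)\|_*^2$; a final Cauchy--Schwarz step then gives $\|\nabla f(X)-\nabla f(Y)\|_*\le\beta\|X-Y\|$.

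The forward direction is essentially routine. The obstacle I expect to have to address is the \emph{constrained} domain: the point $X-\tfrac1\beta G$ used in the converse may leave $\sjoint$, so strictly the converse holds cleanly only when $\sjoint$ is the whole ambient space of Hermitian matrices, or---as is the case in every place this proposition is actually invoked---when the norm is the self-dual Frobenius norm, in which case the descent-lemma argument runs on the ambient Hilbert space and then restricts. Since the applications in this paper only need the forward implication (Lipschitz bounds on $\grad$ feeding the smoothness-based convergence analysis), I would present the forward proof in full and state the converse with the self-duality caveat noted explicitly.
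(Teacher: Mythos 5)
Your proposal is correct, but note that the paper does not actually prove this proposition: immediately after stating it, the text defers the equivalence of \Cref{def:smoothness-grad-definition} and \Cref{def:smoothness-ineq-definition} to the citation \cite{zhou2018fenchel}, so there is no in-paper argument to match against. What you supply is the standard self-contained treatment. Your forward direction (Lipschitz gradient implies the quadratic upper bound) is the classical descent lemma: parametrize the segment $Z_t = Y + t(X-Y)$, write $\bregdiv_f(X\|Y) = \int_0^1 \langle \nabla f(Z_t)-\nabla f(Y), X-Y\rangle\,dt$, and bound via the dual-norm Cauchy--Schwarz inequality; this is exactly the implication the proposition as literally stated asserts, it needs no convexity of $f$, and your use of convexity of $\sjoint$ to keep $Z_t$ feasible is the right bookkeeping. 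Your converse via co-coercivity (minimizing the tilted function $\phi_Y$ and adding the two symmetric inequalities) is likewise the standard Nesterov-style argument, and your caveat about the constrained domain is a genuine issue that the paper glosses over: the auxiliary point $X - \tfrac{1}{\beta}G$ need not lie in the spectraplex, so smoothness on $\sjoint$ alone does not cleanly yield a Lipschitz gradient on $\sjoint$ without either extending $f$ to an open neighborhood or exploiting self-duality of the Frobenius norm on the ambient space. Since the convergence analysis only ever consumes the forward implication (Lipschitz bounds on $\grad$ feeding smoothness-type estimates), your decision to prove that direction in full and flag the converse is the right allocation of rigor; the only cosmetic imprecision is the normalization of the dual-norm-attaining direction $G$ (one wants $\|G\|=1$ with $\langle \nabla\phi_Y(X), G\rangle = \|\nabla\phi_Y(X)\|_*$ and a step of length $\|\nabla\phi_Y(X)\|_*/\beta$), which is easily repaired.
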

For the equivalence of \Cref{def:smoothness-grad-definition} and \Cref{def:smoothness-ineq-definition}, see \cite{zhou2018fenchel}.
\noindent Geometrically, $\beta$-smoothness can be interpreted as a convex quadratic \emph{upper} bound,
\begin{align}
    g_X^+(Y) = f (Y) + \langle\nabla f(Y),X-Y\rangle+\frac{\beta}{2} \|X - Y\|^2,
\end{align}
of $f$ such that $g_X^+(X) = f(X)$ and $g_X^+(Y) \geq f(Y), \:\forall Y \in \sjoint$. Meanwhile, $\mu$-strong convexity can be interpreted as a convex quadratic \emph{lower} bound,
\begin{align}
    g_X^-(Y) = f (Y) + \langle\nabla f(Y),X-Y\rangle+\frac{\mu}{2} \|X - Y\|^2,
\end{align}
of $f$ such that $g_X^-(X) = f(X)$ and $g_X^-(Y) \leq f(Y), \:\forall Y \in \sjoint$.   In this way, strong convexity is dual to smoothness.

Fenchel conjugacy relates two equivalent representations of a convex function. In the standard representation, a convex function $f$ is represented by pairs $(X,f(X))$ consisting of coordinates $X$ and corresponding function evaluations $f(X)$. Alternatively, as depicted in \Cref{fig:fenchel_conj}, a convex function can be represented by its tangent at each point. These tangents are encoded as the pairs $(\theta, f^*(\theta))$, where $\theta=\nabla f(X)$ for some $X$ is the gradient and  $-f^*(\theta)$ is the intercept. The function $f^*$, which calculates the tangent's intercept $f^*(\theta)$ from a given gradient $\theta$, is known as the Fenchel conjugate function, formally defined as follows.
\begin{definition}[Fenchel Conjugate] The \emph{Fenchel conjugate} of a function $f:\sjoint \rightarrow\mathbb{R}$ is the function $f^*:\sjoint \rightarrow\mathbb{R}$ such that
\begin{align}
    f^*(\theta) = \max_{X \in\sjoint} \left\{ \langle X, \theta\rangle - f(X) \right\}.
\end{align}
\end{definition}
\noindent Relating back to \Cref{defn:dual_norm} of the dual norm, if it were the case that for all $X$, $f(X)=0$, then $f^*(\theta)=\max_{X \in\sjoint} \langle X, \theta\rangle =||\theta||_*$. Therefore, the Fenchel conjugate can be seen as maximizing the dual norm while also minimizing the penalty induced by function $f$. In this way, $f$ can be seen as a ``regularizer'' function. 

The definition of Fenchel conjugacy immediately implies the \emph{Fenchel-Young Inequality},
\begin{align}
    \forall X \in \sjoint, \hspace{0.1in}  f^*(\theta) \geq \langle X, \theta\rangle - f(X).
\end{align}
Furthermore, taking the gradient of the Fenchel conjugate should result in the value of $X$ which maximizes $\langle X, \theta\rangle - f(X)$, resulting in \emph{Danskin's Theorem}, as follows. For more details, refer to Proposition 4.5.1 of \cite{bertsekas2003convex}.
\begin{proposition}[Danskin's Theorem] \label{lemma:danskin}  Let $f:\sjoint \rightarrow \mathbb{R}$ be a strongly convex function. For all $X \in \sjoint$, we have
\begin{align*}
    \nabla f^* (\theta) = \arg\min_{X\in\sjoint} \left\{ f(X)-\langle X, \theta\rangle\right\}.
\end{align*}
\end{proposition}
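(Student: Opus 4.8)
The plan is to read the displayed identity as ``$\nabla f^*(\theta)$ equals the unique maximizer of $X\mapsto\langle X,\theta\rangle-f(X)$.'' First I would record that this maximizer is well defined: since $\sjoint$ is compact and convex and $f$ is $\mu$-strongly convex, the map $X\mapsto \langle X,\theta\rangle-f(X)$ is continuous and $\mu$-strongly concave on $\sjoint$, so it attains a maximum at a single point $X^*(\theta)$; equivalently $\arg\min_{X\in\sjoint}\{f(X)-\langle X,\theta\rangle\}=\{X^*(\theta)\}$ and $f^*$ is finite-valued on the whole ambient space of Hermitian matrices. Hence the target reduces to showing $f^*$ is differentiable at $\theta$ with $\nabla f^*(\theta)=X^*(\theta)$, all inner products being the real trace pairing $\langle X,Y\rangle=\tr[XY]$ used throughout and the notion of gradient being the one from \Cref{appendix:complex_diff}.

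The second step is the ``easy half'': the subgradient inclusion $X^*(\theta)\in\partial f^*(\theta)$. This is immediate from Fenchel--Young: for any $\theta'$, the point $X^*(\theta)$ is feasible in the definition of $f^*(\theta')$, so $f^*(\theta')\ge \langle X^*(\theta),\theta'\rangle-f(X^*(\theta))$, while $f^*(\theta)=\langle X^*(\theta),\theta\rangle-f(X^*(\theta))$ by optimality; subtracting gives $f^*(\theta')-f^*(\theta)\ge\langle X^*(\theta),\theta'-\theta\rangle$ for all $\theta'$, i.e. $X^*(\theta)$ is a subgradient of the convex function $f^*$ at $\theta$.

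The third step upgrades ``subgradient'' to ``gradient'' by showing $\partial f^*(\theta)=\{X^*(\theta)\}$; a finite convex function is differentiable exactly at points where its subdifferential is a singleton, and there the gradient is that unique subgradient. I would present this via biconjugation as the self-contained route: because $f$ is convex and closed (continuous on the compact set $\sjoint$), one has $f^{**}=f$, so $Z\in\partial f^*(\theta)$ iff $\theta\in\partial f(Z)$ iff equality holds in Fenchel--Young, $f(Z)+f^*(\theta)=\langle Z,\theta\rangle$, which says precisely that $Z$ attains the maximum defining $f^*(\theta)$; uniqueness of that maximizer forces $Z=X^*(\theta)$. Combined with the inclusion from the second step this yields $\nabla f^*(\theta)=X^*(\theta)$. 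I would also remark that the strong-convexity/smoothness duality discussed in \Cref{sec:conv_dual} gives the same conclusion (a $\mu$-strongly convex $f$ has $(1/\mu)$-smooth conjugate, hence differentiable conjugate), and moreover the quantitative bound $\|X^*(\theta)-X^*(\theta')\|\le(1/\mu)\|\theta-\theta'\|_*$ on the optimal-solution map, which is the form relied on by the proximal/mirror steps appearing later.

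The main obstacle is this last upgrade: everything through the subgradient inclusion is bookkeeping, but producing an honest derivative requires either the closedness/biconjugation machinery or the strong-convexity$\Rightarrow$smoothness transfer, and one must check that $\theta$ lies in the interior of $\operatorname{dom} f^*$ so that no boundary nonsmoothness intervenes --- which is automatic here because $\sjoint$ is bounded, making $f^*$ finite (hence continuous, hence subdifferentiable) on the entire space. A minor additional point of care is purely notational: equating the gradient (a matrix) with an $\arg\min$ is to be read as equality with the unique minimizer, which is legitimate precisely by the strong convexity invoked in the hypothesis.
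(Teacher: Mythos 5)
Your proof is correct. Note, however, that the paper itself does not prove this proposition at all: it states it as a known fact and defers to Proposition 4.5.1 of \cite{bertsekas2003convex}, so there is no in-paper argument to compare against. Your argument is the standard Legendre--Fenchel one --- existence and uniqueness of the maximizer from strong concavity on the compact set $\sjoint$, the subgradient inclusion $X^*(\theta)\in\partial f^*(\theta)$ from Fenchel--Young, and the upgrade to a genuine gradient by showing the subdifferential is a singleton via biconjugation --- and it is complete. Two small points of care: first, the closedness of $f$ needed for $f^{**}=f$ does not follow from convexity alone (a convex function finite on a compact convex set can fail to be upper semicontinuous at the boundary), but the paper's standing assumption that distance-generating functions are lower semicontinuous is exactly what is needed, so you should invoke that rather than ``continuous on the compact set''; second, the quantifier ``for all $X\in\sjoint$'' in the statement is evidently a typo for ``for all $\theta$,'' which your reading silently and correctly repairs. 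Your closing remark that the strong-convexity/smoothness duality of \Cref{lemma:convex_smooth} gives the differentiability of $f^*$ directly, together with the Lipschitz bound on the solution map, is a useful observation, since that quantitative form is what the paper actually uses in the bound on $(B_t)$ in \Cref{sec:convergence}.
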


Finally, as stated below and explained further in 
Corollary 3.5.11 of \cite{zalinescu2002convex}, 
there is a duality between the the convexity of $f$ and the smoothness of $f^*$.
\begin{lemma}[Strong-Smooth Duality] \label{lemma:convex_smooth}
    A closed convex function $f:\sjoint \rightarrow \mathbb{R}$ is $\alpha$-strongly convex with respect to norm $\|\cdot\|$ if and only if $f^*$ is $\frac{1}{\alpha}$-smooth with respect to the corresponding dual norm $\|\cdot\|_*$. 
\end{lemma}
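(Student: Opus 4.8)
The plan is to prove the two implications separately, both resting on the gradient correspondence between $f$ and its conjugate: by Danskin's Theorem (\Cref{lemma:danskin}), $X=\nabla f^*(\theta)$ is the unique maximizer of $\langle X,\theta\rangle-f(X)$, so first-order optimality forces $\theta=\nabla f(X)$; and since $f$ is closed and convex, biconjugation gives $f=f^{**}$, so $X$ and $\theta$ play symmetric, dual roles. I will also use the Lipschitz-gradient characterization of smoothness (\Cref{def:smoothness-grad-definition}), the defining property of the dual norm (\Cref{defn:dual_norm}), and the finite-dimensional identity $(\|\cdot\|_*)_*=\|\cdot\|$.

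\emph{($\Rightarrow$)} Assume $f$ is $\alpha$-strongly convex w.r.t.\ $\|\cdot\|$. Fix $\theta_1,\theta_2$ and set $X_i=\nabla f^*(\theta_i)$, so that $\theta_i=\nabla f(X_i)$. Writing the inequality of \Cref{def:strong_convex} at the pair $(X_1,X_2)$ and again at $(X_2,X_1)$ and adding cancels the function values and yields $\langle\theta_1-\theta_2,\,X_1-X_2\rangle\ge\alpha\|X_1-X_2\|^2$. Bounding the left-hand side via $\langle\theta_1-\theta_2,\,X_1-X_2\rangle\le\|\theta_1-\theta_2\|_*\,\|X_1-X_2\|$ and dividing through by $\|X_1-X_2\|$ gives $\|\nabla f^*(\theta_1)-\nabla f^*(\theta_2)\|\le\tfrac{1}{\alpha}\|\theta_1-\theta_2\|_*$, which by \Cref{def:smoothness-grad-definition} is exactly $\tfrac{1}{\alpha}$-smoothness of $f^*$ with respect to $\|\cdot\|_*$.

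\emph{($\Leftarrow$)} Assume $f^*$ is $\tfrac{1}{\alpha}$-smooth w.r.t.\ $\|\cdot\|_*$. Applying the smoothness upper bound of \Cref{def:smoothness-ineq-definition} to $f^*$ and minimizing both sides (the standard ``descent lemma'' manipulation) produces the co-coercivity estimate $\langle\nabla f^*(\theta_1)-\nabla f^*(\theta_2),\,\theta_1-\theta_2\rangle\ge\alpha\|\nabla f^*(\theta_1)-\nabla f^*(\theta_2)\|^2$. Substituting $X_i=\nabla f^*(\theta_i)$ and $\theta_i=\nabla f(X_i)$ turns this into $\langle\nabla f(X_1)-\nabla f(X_2),\,X_1-X_2\rangle\ge\alpha\|X_1-X_2\|^2$, i.e.\ strong monotonicity of $\nabla f$; integrating this estimate along the segment $t\mapsto X_2+t(X_1-X_2)$, $t\in[0,1]$, recovers the defining inequality of $\alpha$-strong convexity of $f=f^{**}$, which is the claim since $(\|\cdot\|_*)_*=\|\cdot\|$.

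\emph{Main obstacle.} The sketch above presupposes that $f$ and $f^*$ are differentiable, and the one genuinely nontrivial computation is the co-coercivity derivation in the converse direction. The delicate general case — closed convex, possibly nonsmooth $f$, as in Corollary 3.5.11 of \cite{zalinescu2002convex} — requires replacing gradients by the subgradient correspondence $\theta\in\partial f(X)\iff X\in\partial f^*(\theta)$ and arguing that $\tfrac{1}{\alpha}$-smoothness of $f^*$ forces $\partial f^*$ to be everywhere single-valued with $\tfrac{1}{\alpha}$-Lipschitz selection. Since the regularizers used in this work (the von Neumann entropy and the squared Frobenius norm) are differentiable on the relevant domains, the differentiable version above is all we need, and we defer the general statement to \cite{zalinescu2002convex}.
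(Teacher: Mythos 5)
The paper does not actually prove this lemma: it states it and defers entirely to Corollary 3.5.11 of \cite{zalinescu2002convex}, so your blind attempt is being compared against a citation rather than an argument. What you give is the standard self-contained textbook proof in the differentiable case, and it is essentially correct: the forward direction via strong monotonicity of $\nabla f$ plus the generalized Cauchy--Schwarz inequality $\langle\theta_1-\theta_2,X_1-X_2\rangle\le\|\theta_1-\theta_2\|_*\|X_1-X_2\|$, and the converse via co-coercivity of the smooth conjugate followed by integration along a segment. This buys the reader something the paper does not provide, namely a visible mechanism for why the constants are exactly reciprocal, which is the fact actually used later when the $\frac{\step}{\mu}$-Lipschitz bound on $\nabla\jopt_t^*$ is extracted. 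Two technical points deserve one sentence each if this were to be included. First, the identification $\theta=\nabla f(X)$ from first-order optimality is only exact when the maximizer of $\langle X,\theta\rangle-f(X)$ is unconstrained or lies in the relative interior of $\sjoint$; on a constrained domain the condition is the variational inequality $\langle\theta-\nabla f(X),Y-X\rangle\le 0$, and one should either extend $f$ by $+\infty$ outside $\sjoint$ and work with subgradients, or note that the regularizers used here (von Neumann entropy, squared Frobenius norm) push maximizers into the interior. Second, in the converse direction you silently assume every $X$ in the domain arises as $\nabla f^*(\theta)$ for some $\theta$; this follows from nonemptiness of $\partial f(X)$ together with the biconjugation identity $f=f^{**}$ for closed convex $f$, but it is the one place where closedness is genuinely used and is worth stating. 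You correctly flag the fully general nonsmooth case as the content of the cited corollary, so with those two sentences added your sketch would stand as a legitimate replacement for the citation in the setting the paper actually needs.
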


\begin{figure}[t]
    \centering
    \includegraphics[width=.8\textwidth]{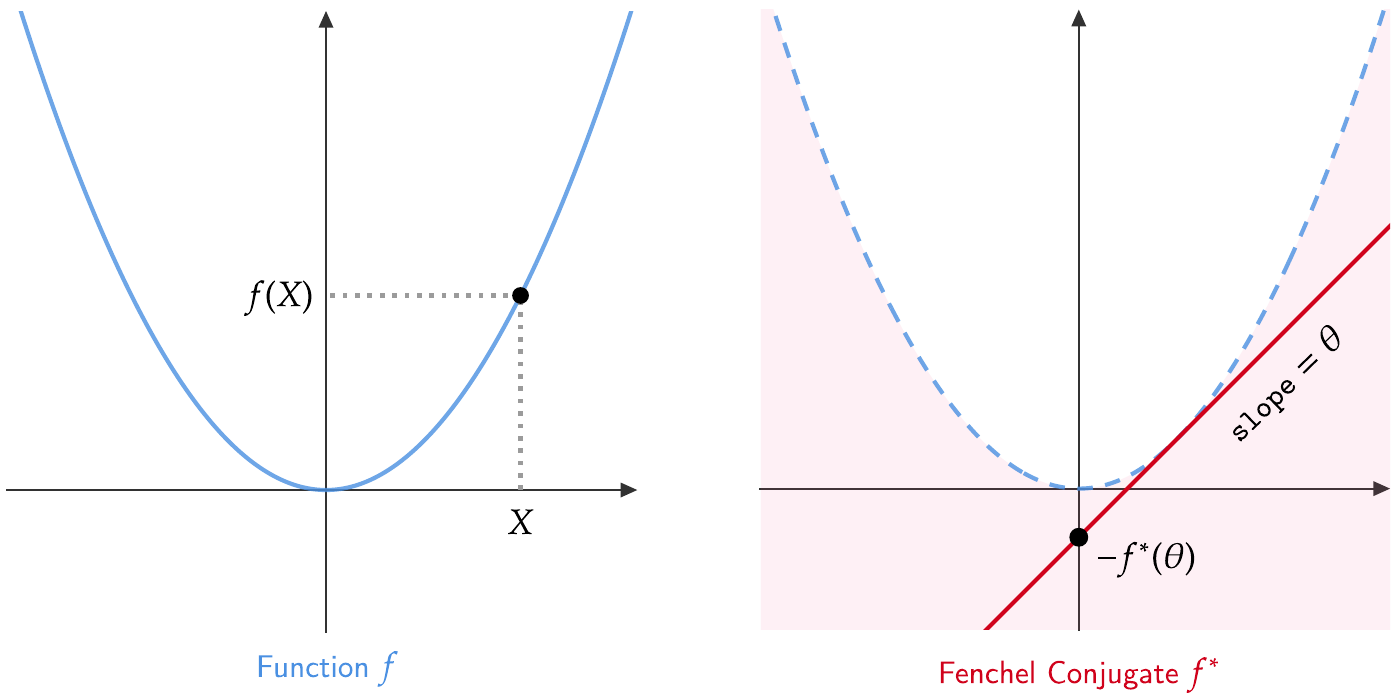}
    \caption{Illustration of a function \( f \) and its Fenchel conjugate \( f^* \). The function \( f \) is represented by a collection of pairs \( (X, f(X)) \), while its Fenchel conjugate \( f^* \) is visualized as the set of tangents to \( f \), characterized by pairs \( (\theta, f^*(\theta)) \). [Adapted from Figure 2.3 in \cite{shalev2012online}.]}
    \label{fig:fenchel_conj}
\end{figure}

\subsection{Bregman Divergence \& Mirror Map}
Critical to understanding our proposed algorithm and its improved convergence over Jain and Watrous' MMWU proposal are the notions of ``mirror maps'' and Bregman divergence. 

Historically, the first proposals for constrained Gradient Descent Ascent leveraged Euclidean ($\ell_2$ norm) projections in order to ensure that the algorithm output would lie within the feasible set. However, it was eventually realized that for certain problems, as we will discuss in the following subsection, regularization based on this $\ell_2$ norm results in suboptimal constants in the convergence rates and, by leveraging different distance metrics, faster rates can be achieved. In order to develop algorithms which could encompass any such desired distance metric, the Bregman divergence was introduced. 

\begin{figure}[t]
    \centering
    \includegraphics[width=.5\textwidth]{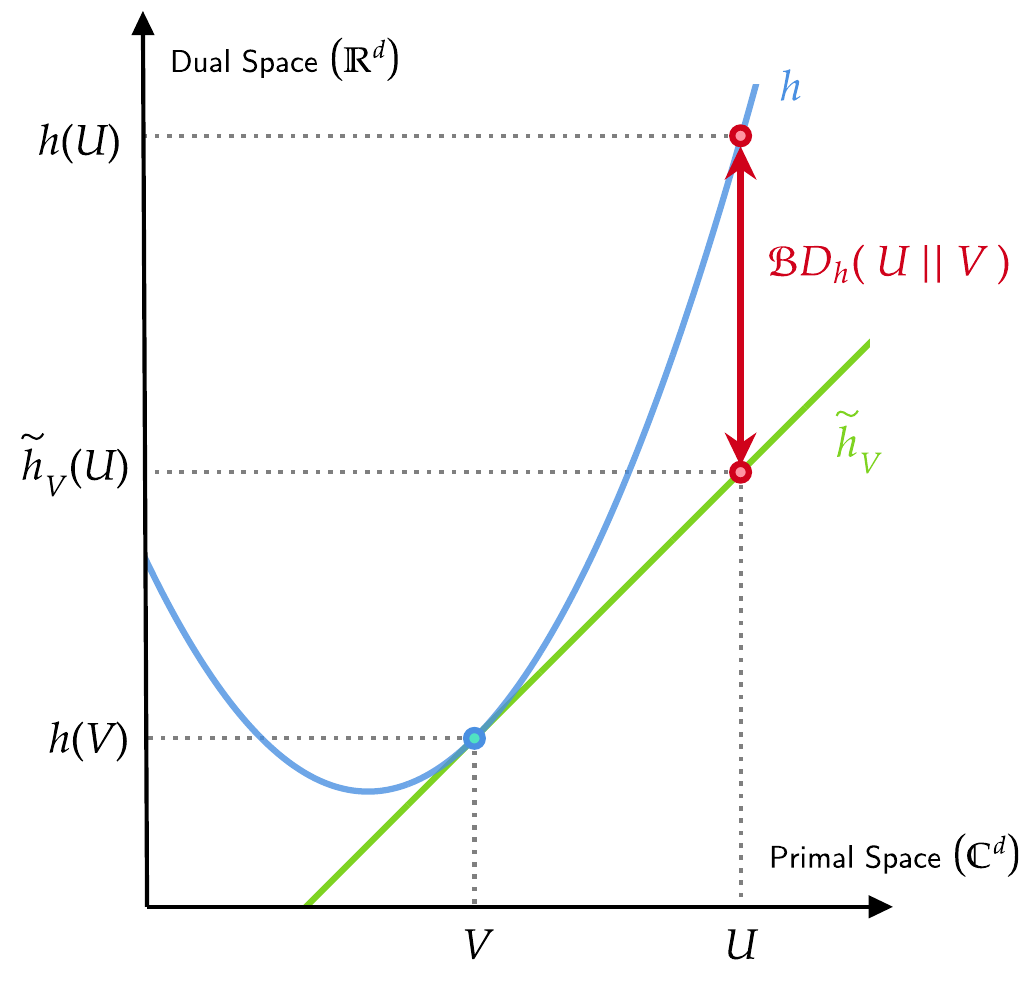}
    \caption{\textbf{\sffamily Bregman Divergence.} The Bregman divergence $\bregdiv_h(U||V)$ between two states in the primal space $U,V \in \mathcal{X}^d$, relative to regularizer $\regfunc$, is given by the difference in the dual space ($\mathbb{R}^{d}$) between $\regfunc(U)$ and its linear approximation via the first-order Taylor expansion of $\regfunc$ around $V$, $\tilde{h}_V (U)$.}
    \label{fig:breg_div}
\end{figure}
The Bregman divergence is parameterized by a ``distance-generating'' or ``regularizer'' function. Formally, a \emph{distance-generating function} (DGF) on the $d$-dimensional spectraplex \(\mathcal{H}_+^d\) is defined as a convex, lower semi-continuous function \(\regfunc: \mathbb{C}^d \rightarrow \mathbb{R} \cup \{\infty\}\), where the \emph{effective} domain of \(h\) is \(\mathcal{X}^d\)
and $\mathbb{C}^d$ is the space of $2^d\times2^d$ complex matrices. Denoting the space of $2^d\times2^d$ real matrices as $\mathbb{R}^{d}$, let
$\partial h: \mathcal{H}_+^d \rightarrow \mathbb{R}^{d}$ denote the \emph{subdifferential} of $\regfunc$ defined as\footnote{The subdifferential is  the set-valued generalization of the derivative for convex functions which are not necessarily differentiable.}
\begin{equation}
\label{eq:subdiff}
\partial\regfunc(X):= \left\{V\in\mathbb{R}^{d}:
		\regfunc(Y) \geq \regfunc(X) + \tr[(Y-X)^\dagger V], \:\:  \forall \: Y\in\mathcal{H}_+^d\right\}.
\end{equation}
 The \emph{domain of subdifferentiability} of $\regfunc$ is defined as
\begin{equation} \label{eq:subdom} 
    \textrm{dom}\partial\regfunc:= \{X\in\mathcal{X}^d:\partial\regfunc (X) \neq \emptyset\}.
\end{equation}

Having defined the DGF $\regfunc$ and its subdifferential $\partial\regfunc$, we are ready to define the Bregman divergence. Given two states $U,V\in\mathcal{H}_+^d$, the Bregman divergence induced by $h$ is the non-negative function $\bregdiv_\regfunc: \textrm{dom}\regfunc\times  \textrm{dom}\partial\regfunc\rightarrow\mathbb{R}$ defined as 
\begin{equation} \label{eqn:breg_div}
    \bregdiv_\regfunc (U || V) := \regfunc(U)-\regfunc(V)-\tr[(U-V)^\dagger\partial \regfunc(V)],
\end{equation}
where the subdifferential maps from states in the \emph{feasible region} $\mathcal{H}_+^d$ of the \emph{primal space}  $\mathbb{C}^d$,  to the \emph{dual space} of subgradients $\mathbb{R}^d$. The function $\partial h$, which performs a primal-to-dual mapping, is often referred to as a \emph{mirror map}.

Intuitively, as illustrated in \Cref{fig:breg_div}, the Bregman divergence $\bregdiv_h(U||V)$ between two states $U$ and $V$ in the primal space, relative to regularizer $\regfunc$, is given by the difference in the dual space between $\regfunc(U)$ and its linear approximation via the first-order Taylor expansion of $\regfunc$ around $V$,
\begin{equation}
    \widetilde{\regfunc}_V (U) = \regfunc(V)+\tr[(U-V)^\dagger \partial\regfunc(V)].
\end{equation}
Since $\regfunc$ is convex, this difference is always non-negative. Note that $\bregdiv_h(U || V)$ is not necessarily equivalent to $ \bregdiv_h(V||U)$. 

Finally,  our analysis will use
the well-known \emph{Three-Point Identity},
\begin{equation}
\label{eq:3point}
    \tr[(Z - X)^\dagger\nabla_{X^\top} \bregdiv_h(X||Y) ] = \bregdiv_h(Z||Y)  - \bregdiv_h(X||Y)  - \bregdiv_h(Z|| X),
\end{equation}
which follows from the definition of the Bregman divergence \cite{mertikopoulos2018optimistic}.

\subsection{Mirror \& Proximal Steps} \label{sec:mirror_prox_step}
We will now introduce ``mirror steps'' and ``proximal steps,'' which incorporate the mirror map generalization of distance in constrained optimization updates.\footnote{For further background, we refer the interested reader to \cite{mertikopoulos2018optimistic} and Chapter 4 of \cite{Bubeck2015}.} These steps implement the algorithmic state updates of the classical games algorithms discussed in the Prior Work (\Cref{sec:prior_work}). Similar to a standard constrained optimization gradient step, the goal of these procedures is to produce an update that lies in the feasible set, closer to the optimum. However, by generalizing the Euclidean distance of traditional Gradient Descent Ascent steps, mirror and proximal steps enable incorporation of domain-specific knowledge about the geometry of the optimization task into the updates. This can lead to substantial improvements in convergence rate.

For intuition, let us first consider  a classical single-agent case as an illustrative example. When optimizing a function \( f \) over a \((d-1)\)-dimensional simplex, restricted by the \( \ell_{\infty} \)-norm, standard Projected Subgradient Descent (PSD) reaches a convergence rate of \( \mathcal{O}(\sqrt{d/T}) \). In contrast, classical Mirror Descent Ascent using a negative Shannon entropy regularizer significantly improves the rate to \( \mathcal{O}(\sqrt{\log d/T}) \). 

Our objective is to adapt this methodology to the more challenging, exponentially larger \( (4^d-1) \)-dimensional spectraplex \( \mathcal{H}^{d}_+ \), in a min-max framework. Specifically, we aim to incorporate domain-specific regularization techniques to achieve an \( \mathcal{O}(d/T) \) convergence rate. This would offer an exponential dimensionality improvement from standard Projected Subgradient Ascent Descent, which has an \( \mathcal{O}(2^{d}/\sqrt{T}) \) convergence rate \cite{nesterov_primal-dual_2009,Ang2023ProjectedGA}, and a quadratic speed-up over the Jain-Watrous algorithm, which has an \( \mathcal{O}(d/\sqrt{T}) \) convergence rate \cite{jain2009parallel}.

Note that intuitions for the following definitions and proofs are provided in \Cref{sec:mirror_prox_append}.
\begin{figure}[t]
    \centering
    \includegraphics[width=.48\textwidth]{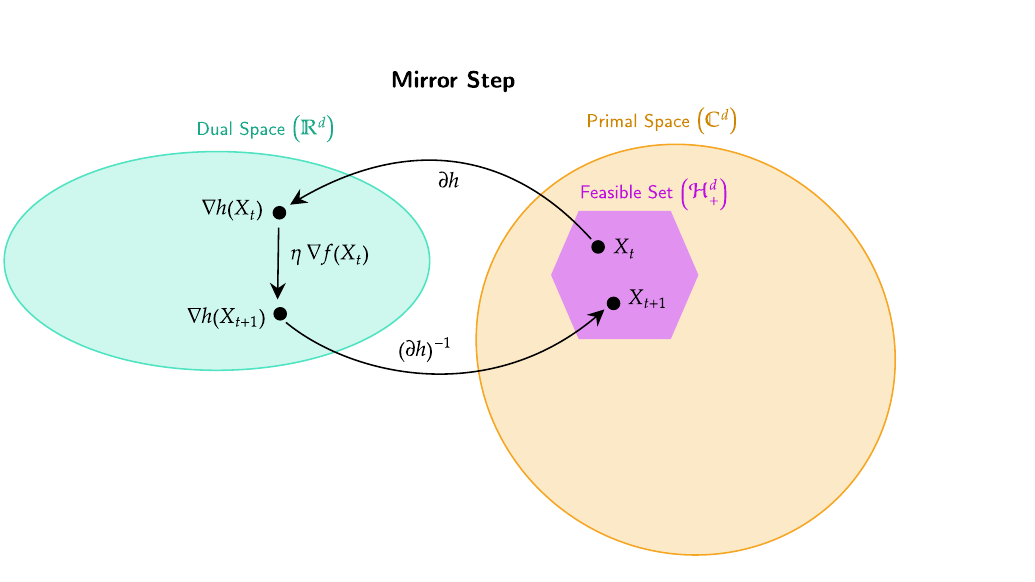}
    \hspace{.02\textwidth}
    \includegraphics[width=.48\textwidth]{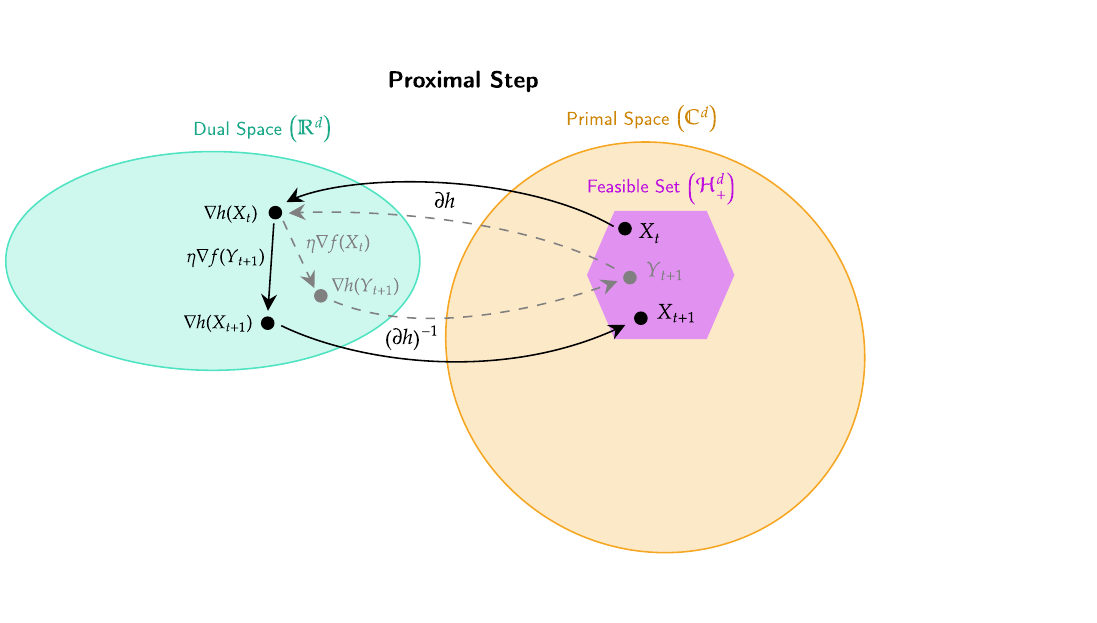}
    \caption{\textbf{\sffamily Mirror \& Proximal Step.}Visualizations of the mirror and proximal steps described in \Cref{alg:mirror} and \Cref{alg:prox}, respectively. [Adapted from Figures 4.1 and 4.2 in \cite{Bubeck2015}.]}
    \label{fig:mirror_prox_map}
\end{figure}

\subsubsection{Mirror Step}
Given a function $f:\mathbb{C}^d\rightarrow\mathbb{R}$ that we aim to optimize, convex feasible set $\mathcal{H}_+^d\subset \mathbb{C}^d$, DGF $\regfunc$, subgradients $\partial\regfunc$,  and initial primal state $X_t\in\mathcal{H}_+^d$, each call to a \textit{mirror step} oracle generates an updated state $X_{t+1}$  via \Cref{alg:mirror} (as depicted in \Cref{fig:mirror_prox_map}). In Step (3), the mirror map, $\partial h^{-1}$, can be implemented as follows.
\begin{restatable}[Mirror Map]{definition}{mirrorproject}\label{prop:mirproj}
   For a state $D$ in the dual space $\mathcal{D}$, the \emph{mirror map}, or \emph{regularized best response}, 
    \begin{align}
        \mirproj_{\mathcal{P}}^\regfunc (D) :=  \underset{P\in\mathcal{P}}{\arg\max} \{\tr[D^\dagger P]-\regfunc (P)\},
    \end{align}
    implements the inverse mirror map, $\partial h^{-1}:\mathcal{D}\rightarrow \mathcal{P}$, mapping $D\in\mathcal{D}$ to the corresponding state $P$ in the feasible region $\mathcal{P}$ of the primal space.
\end{restatable}

\begin{algorithm} [t]
    \caption{Mirror Step}\label{alg:mirror}
    \begin{enumerate}
        \item Initial primal state $X_t \in\mathcal{H}_+^d$ is mapped to a dual state $\nabla\regfunc(X_t)\in
        \partial\regfunc (X_t) \subset \mathbb{R}^d$.
        \item Gradient $\nabla f: \mathbb{C}^d \rightarrow \mathbb{R}^{d}$ is calculated with respect to $X_t$.
        \item In the dual space, the state is updated according to the gradient with step size $\step$, resulting in a new dual state 
        \begin{equation*}
            \nabla\regfunc (X_{t+1}) = \nabla\regfunc (X_t)+ \step \nabla f (X_t) \:\: \in \mathbb{R}^{d}.
        \end{equation*}
        \item Mapping back to the primal space, this corresponds to the updated feasible state
        \begin{equation*}
            X_{t+1}\in\partial h^{-1}(\nabla\regfunc (X_{t+1})+ \step \nabla f (X_t)) \:\: \in \mathcal{H}_+^d.
        \end{equation*}
    \end{enumerate}
\end{algorithm}

\begin{algorithm} [t!]
    \caption{Proximal Step}\label{alg:prox}
    \begin{enumerate}
    \item From initial feasible primal state $X_t \in \mathcal{H}_+^d$, updated feasible primal state $Y_{t+1} \in \mathcal{H}_+^d$ is generated via the Mirror Step procedure of \Cref{alg:mirror}.
    \item Initial feasible primal state $X_t$ is mapped to a dual state $\nabla\regfunc(X_t)\in
    \partial\regfunc (X_t)\subset \mathbb{R}^d$.
    \item Gradient $\nabla f: \mathbb{C}^d \rightarrow \mathbb{R}^{d}$ is calculated with respect to $Y_{t+1}$.
    \item In the dual space, the state is updated according to the gradient with step size $\step$, resulting in a new dual state 
    \begin{equation*}
        \nabla\regfunc (X_{t+1}) = \nabla\regfunc (X_t)+ \step \nabla f (Y_{t+1}) \:\: \in \mathbb{R}^{d}.
    \end{equation*}
    \item Mapping back to the primal space, this corresponds to the updated feasible state
    \begin{equation*}
        X_{t+1}\in\partial h^{-1}(\nabla\regfunc (X_t)+ \step \nabla f (Y_{t+1})) \:\: \in \mathcal{H}_+^d.
    \end{equation*}
\end{enumerate}
\end{algorithm}

\noindent Therefore,  the full mirror step procedure, mapping from $X_t$ to $X_{t+1}$, is achieved via the mirror map of $Y_t = \nabla \regfunc (X_t) + \step \nabla f(X_t)$ on the feasible set $\mathcal{H}_+^d$, as
\begin{equation} \label{eqn:mirror_proj_update}
    X_{t+1} = \mirproj_{\mathcal{H}_+^d}^\regfunc (\nabla \regfunc (X_t) + \step \nabla f(X_t)).
\end{equation}
\noindent Alternatively, the mirror step procedure can be expressed as a \emph{mirror proximal map}. 

\begin{restatable}[Mirror Step]{lemma}{mirrorprox}\label{prop:mirstep}
   For state $X_t$ in the feasible set $\mathcal{H}_+^d$ of primal space $\mathbb{C}^d$,
    \begin{align} \label{eqn:mirror_prox_proj_update}
        \mirproxproj_{\mathcal{X}^d}^{\regfunc,\step,f} (X_t) :=  \underset{X_{t+1}\in\mathcal{H}_+^d}{\arg\min} \:  \left\{  \tr[\nabla f(X_t)^\dagger X_{t+1}]+ \frac{1}{\step}\bregdiv_\regfunc (X_{t+1} || X_t)\right\},
    \end{align}
    implements the mirror step mapping to an updated feasbile set primal state $X_{t+1}\in\mathcal{H}_+^d$.
\end{restatable}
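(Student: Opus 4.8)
The plan is to show that the minimization problem defining $\mirproxproj_{\mathcal{X}^d}^{\regfunc,\step,f}(X_t)$ and the maximization problem defining $\mirproj_{\mathcal{H}_+^d}^\regfunc\big(\nabla\regfunc(X_t)+\step\nabla f(X_t)\big)$ (\Cref{prop:mirproj}) coincide, in the sense that after an affine change of objective they are the \emph{same} optimization problem over the \emph{same} feasible set $\mathcal{H}_+^d$, so that their optimizers agree; together with \Cref{eqn:mirror_proj_update} this is exactly the content of the lemma. So the whole argument is an unpacking of the Bregman divergence plus a sign/rescaling bookkeeping.

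Concretely, I would expand the Bregman term via \Cref{eqn:breg_div}: for any candidate $X\in\mathcal{H}_+^d$,
\begin{equation*}
\tr[\nabla f(X_t)^\dagger X]+\tfrac{1}{\step}\bregdiv_\regfunc(X\|X_t)
=\tr[\nabla f(X_t)^\dagger X]+\tfrac{1}{\step}\reg{X}-\tfrac{1}{\step}\tr[(X-X_t)^\dagger\nabla\regfunc(X_t)],
\end{equation*}
where $\nabla\regfunc(X_t)$ denotes a fixed selection from $\partial\regfunc(X_t)$ and the term $\tfrac{1}{\step}\reg{X_t}$ has already been dropped as constant in $X$. Since all traces here are real, $\tr[(X-X_t)^\dagger\nabla\regfunc(X_t)]=\tr[\nabla\regfunc(X_t)^\dagger X]+\mathrm{const}$, so minimizing the left-hand side over $\mathcal{H}_+^d$ is equivalent, upon multiplying by $-\step<0$ and discarding constants, to maximizing $\tr[(\nabla\regfunc(X_t)+\step\nabla f(X_t))^\dagger X]-\reg{X}$ over $\mathcal{H}_+^d$ — which is precisely the problem defining $\mirproj_{\mathcal{H}_+^d}^\regfunc\big(\nabla\regfunc(X_t)+\step\nabla f(X_t)\big)$, i.e.\ the right-hand side of \Cref{eqn:mirror_proj_update}.

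To close the argument I would record well-posedness: since a distance-generating function $\regfunc$ is strictly convex, $X\mapsto\bregdiv_\regfunc(X\|X_t)$ is strictly convex on the compact convex spectraplex $\mathcal{H}_+^d$ (\Cref{appendix:spectraplex}), so the proximal objective attains a \emph{unique} minimizer, while $-\regfunc$ is strictly concave so the mirror-map objective has a \emph{unique} maximizer; the objective-matching above then forces these two points to be equal. I do not anticipate a genuine obstacle — the only thing to be careful about is the handling of the $X$-independent additive terms and the orientation flip induced by the $-\step$ rescaling (minimization versus maximization). An equally short alternative that sidesteps even this is to write the first-order optimality condition (\Cref{thm:firstopt}) for each of the two problems, using $\nabla_X\bregdiv_\regfunc(X\|X_t)=\nabla\regfunc(X)-\nabla\regfunc(X_t)$; dividing the mirror-map condition by $\step$ yields verbatim the proximal condition $\langle\nabla f(X_t)+\tfrac{1}{\step}(\nabla\regfunc(X)-\nabla\regfunc(X_t)),\,Y-X\rangle\ge 0$ for all $Y\in\mathcal{H}_+^d$, and uniqueness of the point satisfying it finishes the proof.
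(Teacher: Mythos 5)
Your route is genuinely different from the paper's. The paper proves \Cref{prop:mirstep} by writing the Lagrangian of the constrained problem (with a multiplier only for the trace constraint), setting $\nabla_{X_{t+1}^\top}\mathcal{L}=0$, and solving the stationarity condition to get $X_{t+1}=\partial^{-1}\regfunc(\partial\regfunc(X_t)+\step\nabla f(X_t)-\step\lambda\Id)$ with $\lambda$ tuned to enforce $\tr[X_{t+1}]=1$. You instead identify the two objectives up to an affine reparametrization and invoke strict convexity for uniqueness of the common optimizer. Your approach has a real advantage: by comparing $\arg\min$/$\arg\max$ over the full feasible set $\mathcal{H}_+^d$ you never have to write multipliers for the positive-semidefiniteness constraint, which the paper's Lagrangian silently omits (it only dualizes the trace condition). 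Your first-order-condition variant via \Cref{thm:firstopt} is likewise clean, since the variational inequality is stated over the feasible set directly.

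There is, however, a concrete sign slip in your final identification. Starting from the objective exactly as written in the lemma, $\tr[\nabla f(X_t)^\dagger X]+\tfrac{1}{\step}\reg{X}-\tfrac{1}{\step}\tr[\nabla\regfunc(X_t)^\dagger X]+\mathrm{const}$, multiplying by $-\step$ gives $\tr[(\nabla\regfunc(X_t)-\step\nabla f(X_t))^\dagger X]-\reg{X}$, so your argument actually proves equivalence with $\mirproj^{\regfunc}_{\mathcal{H}_+^d}(\nabla\regfunc(X_t)-\step\nabla f(X_t))$, not with the $+\step\nabla f(X_t)$ argument of \Cref{eqn:mirror_proj_update}. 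To be fair, the paper has the same ambient inconsistency: its own Lagrangian is written with $-\tfrac{1}{\step}\bregdiv_\regfunc$ rather than the $+\tfrac{1}{\step}\bregdiv_\regfunc$ appearing in the lemma statement, which is what lets the paper land on the $+\step\nabla f(X_t)$ update (consistent with the ascent convention of \Cref{alg:mirror} and the OMMWU updates). You should either flip the sign of the linear term in the proximal objective (i.e., treat $\nabla f$ as an ascent direction) or accept the $-\step$ conclusion; as written, the last line of your identification does not follow from the preceding algebra. Everything else, including the uniqueness argument on the compact convex spectraplex, is fine.
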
  
\subsubsection{Proximal Step} \label{sec:proxy}
We refer to the mirror prox projection as a \emph{proximal map}.  Its generalization comprises the \emph{proximal step} oracle, described in \Cref{alg:prox} and depicted in \Cref{fig:mirror_prox_map}. 
\begin{restatable}[Proximal Map]{definition}{proximaldefn}\label{prop:proximalproj}
   For a state $P$ in the feasible set of the primal space $\mathcal{P}$ and a state $D$ in the dual space, the \emph{proximal map},
    \begin{align}
        \proxproj^{\regfunc}_{\mathcal{P}}(P, D):&= \underset{P^*\in\mathcal{P}}
        {\arg\min}\left\{\tr[D^\dagger (P^*-P)] + \bregdiv_\regfunc (P^*||P)\right\}\nonumber\\
        &= \underset{P^*\in\mathcal{P}}
        {\arg\min}\left\{\tr[D^\dagger P^*] + \bregdiv_\regfunc (P^*||P)\right\},
    \end{align}
    implements the proximal mapping to an updated feasible set primal state $P^*\in\mathcal{P}$.
\end{restatable} 
\noindent Applying this definition to our problem of interest yields the following map.
\begin{restatable}[Proximal Step]{lemma}{proximalproj}\label{prop:proximalprojapplied}
   For state $X_t\in\mathcal{H}_+^d$ in the feasible set of the primal space,
    \begin{align}\label{eqn-prop:proximalprojapplied}
        \proxproj^{\regfunc,\step}_{\mathcal{H}_+^d}(X_t, \nabla f(Y_{t+1})):&= \underset{X_{t+1}\in\mathcal{H}_+^d}
        {\arg\min}\left\{\tr[\nabla f(Y_{t+1})^\dagger (X_{t+1}-X_t)] +\frac{1}{\step}\bregdiv_\regfunc (X_{t+1}||X_t)\right\} \nonumber \\
        &= \underset{X_{t+1}\in\mathcal{H}_+^d}
        {\arg\min}\left\{\tr[\nabla f(Y_{t+1})^\dagger X_{t+1}] +\frac{1}{\step}\bregdiv_\regfunc (X_{t+1}||X_t)\right\},
    \end{align}
    implements the proximal step mapping to an updated feasbile set primal state $X_{t+1}\in\mathcal{H}_+^d$.
\end{restatable}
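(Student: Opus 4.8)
The plan is to prove \Cref{prop:proximalprojapplied} in two short movements: first establish the claimed equality between the two $\arg\min$ expressions in \Cref{eqn-prop:proximalprojapplied}, and then verify that the common minimizer is exactly the state $X_{t+1}$ produced by the proximal-step oracle of \Cref{alg:prox}. The lemma should be read as the instantiation of the abstract Proximal Map of \Cref{prop:proximalproj} at the data $(P,D)=(X_t,\nabla f(Y_{t+1}))$, with the Bregman term rescaled by the step size $1/\step$, so most of the work is bookkeeping rather than new mathematics.

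For the first movement, I would expand the linear term by linearity of the trace, writing $\tr[\nabla f(Y_{t+1})^\dagger (X_{t+1}-X_t)] = \tr[\nabla f(Y_{t+1})^\dagger X_{t+1}] - \tr[\nabla f(Y_{t+1})^\dagger X_t]$. The second summand depends only on the fixed previous iterate $X_t$ and not on the optimization variable $X_{t+1}$, so it is an additive constant that does not affect the $\arg\min$. Dropping it yields the second line of \Cref{eqn-prop:proximalprojapplied}, which is precisely the form of \Cref{prop:proximalproj} with the $1/\step$ weighting on $\bregdiv_\regfunc$.

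For the second movement, I would characterize the minimizer via the first-order optimality condition of \Cref{thm:firstopt}. Setting $\Phi(X) := \tr[\nabla f(Y_{t+1})^\dagger X] + \tfrac{1}{\step}\bregdiv_\regfunc(X\|X_t)$ and differentiating, the definition of the Bregman divergence in \Cref{eqn:breg_div} gives $\nabla_X \bregdiv_\regfunc(X\|X_t) = \partial\regfunc(X) - \partial\regfunc(X_t)$, so the stationarity condition $\nabla\Phi(X_{t+1})=0$ reads $\partial\regfunc(X_{t+1}) = \partial\regfunc(X_t) - \step\,\nabla f(Y_{t+1})$. Inverting the mirror map — using that $\partial\regfunc^{-1}$ is implemented by $\mirproj^\regfunc$ (\Cref{prop:mirproj}), which by Danskin's Theorem (\Cref{lemma:danskin}) is $\nabla\regfunc^*$ — recovers $X_{t+1} = \partial\regfunc^{-1}(\partial\regfunc(X_t) - \step\,\nabla f(Y_{t+1}))$, matching the dual-space update in Step (5) of \Cref{alg:prox} (the sign reflecting that \Cref{prop:proximalproj} is posed as a minimization). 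Since the gradient is evaluated at the look-ahead point $Y_{t+1}$ generated by the preceding mirror step, this is exactly the proximal step.

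The main obstacle is the constraint handling: the bare stationarity equation $\nabla\Phi(X_{t+1})=0$ is only legitimate when the minimizer lies in the relative interior of $\mathcal{H}_+^d$, so that the variational form $\langle\nabla\Phi(X_{t+1}),Y-X_{t+1}\rangle \geq 0$ collapses to an equality. I would resolve this by invoking the steepness (Legendre-type) property of the admissible regularizers — in particular, the von Neumann entropy, whose subgradient blows up on the boundary of the spectraplex — which forces the minimizer into the relative interior and makes $\partial\regfunc$ a bijection onto the dual space; the three-point identity \Cref{eq:3point} then confirms that the first-order condition is both necessary and sufficient by strong convexity of $\regfunc$. For a general $\regfunc$ one keeps the variational-inequality form throughout, but the resulting minimizer is unchanged.
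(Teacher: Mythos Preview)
Your proposal is correct and lands on the same identification of the minimizer with the dual-space update in Step~(5) of \Cref{alg:prox}, but the route differs from the paper's in two respects worth noting. First, you explicitly justify the equality of the two $\arg\min$ expressions by dropping the $X_{t+1}$-independent term; the paper simply works with the first form and never comments on the second. Second, and more substantively, the paper handles the feasibility constraint by forming a Lagrangian with an explicit multiplier $\lambda$ for the trace-one condition $\tr[X_{t+1}]=1$, then setting both partial gradients to zero to obtain $\partial\regfunc(X_{t+1}) = \partial\regfunc(X_t) + \step\,\nabla f(Y_{t+1}) - \step\lambda\,\Id$ with $\lambda$ fixed a posteriori by the trace normalization. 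Your direct stationarity argument $\nabla\Phi(X_{t+1})=0$ yields the unconstrained equation without the $-\step\lambda\,\Id$ shift, and your steepness discussion only addresses the PSD boundary, not the affine trace constraint. The Lagrangian route is what actually produces the normalizing denominator in the concrete logit-map instantiation (\Cref{thm:breg_vn}), so if you want to match the paper's derivation cleanly you should either carry the multiplier or absorb the trace constraint into the definition of $\mirproj^\regfunc$ itself (as \Cref{prop:mirproj} does by maximizing over the spectraplex rather than the full PSD cone). Either fix is mechanical; your overall structure is sound.
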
 
\noindent Note that $\proxproj^{\regfunc,\step}_{\mathcal{H}_+^d}(X_t, \nabla f(X_t))=\mirproxproj_{\mathcal{H}_+^d}^{\regfunc,\step,f} (X_t)$, which, by \Cref{prop:mirstep}, implies that the proximal step is a generalization of the mirror step and, thus, can be used to implement a mirror step. Thus, the proximal step procedure of \Cref{alg:prox} can be achieved algorithmically via two proximal maps as: 
\begin{align}
    Y_{t+1} &=\proxproj^{\regfunc,\step}_{\mathcal{H}_+^d}(X_t, \nabla f(X_t)) \label{eqn:prox1}\\
    X_{t+1} &= \proxproj^{\regfunc,\step}_{\mathcal{H}_+^d} (X_t, \nabla f(Y_{t+1})). \label{eqn:prox2}
\end{align}
Note that the proximal step can be alternatively written as a single update:
\begin{align}
    X_{t+1} &= \proxproj^{\regfunc,\step}_{\mathcal{H}_+^d} \bigg(X_t, \nabla f \Big(\proxproj^{\regfunc,\step}_{\mathcal{H}_+^d}(X_t, \nabla f(X_t))\Big)\bigg).
\end{align}

Standard gradient-based optimization methods are generally based on a single step in the direction  of the gradient. In contrast, proximal step methods are based on a two-step approach: 
\begin{enumerate}
    \item From initial point $X_{t}$, apply a ``look-ahead'' step, resulting in a ``prediction'' $Y_{t+1}$. 
    \item Use prediction $Y_{t+1}$ in a ``correction'' step that achieves the update $X_{t+1}$.
\end{enumerate}
As we will discuss in \Cref{sec:algo_mmp}, replacing mirror with proximal steps provides the desired acceleration from $\bigo_d(1/\epsilon^2)$ to $\bigo_d(1/\epsilon)$. Furthermore, the usage of the Bregman divergence enables a geometry-aware projection and, specifically,  can be used to achieve a logarithmic improvement on dependence of the dimensionality of the spectraplex, as will be discussed in \Cref{sec:dim_dep}.

\subsubsection{Regularizers}
In order to guarantee the well-posedness of  the mirror and proximal projections, we will restrict $h$ to the class of functions which are strongly convex.  In the classical games setting, as discussed in the Prior Work (\Cref{sec:prior_work}), the most widely used regularizers are the Euclidean ($\ell_2$) norm ($h(x)=\frac{1}{2}||x||_2^2$) and the negative entropy ($h(x)=\sum_i x_i \log(x_i)$). In this work, we will leverage the quantum analogs of these classical regularizers: the Frobenius norm ($h(X)=\frac{1}{2}||X||_F^2$) and the von Neumann entropy ($h(X)=\tr[X\log X]$). As highlighted in the following lemmas (proved in \Cref{sec:mirror_prox_append}), the mirror and proximal maps take well-known forms when these regularizers are used.

\begin{restatable}[Bregman Divergence of the Frobenius Norm]{lemma}{bregfrob}\label{thm:breg_frob}%
   If $\regfunc$ is the squared Frobenius norm, 
    \begin{equation}
        \reg{X}=\frac{1}{2}||X||_F^2=\frac{1}{2}\tr [X^\dagger X],
    \end{equation}
    then the Bregman divergence is the squared Frobenius distance, 
    \begin{equation} \label{eqn:frob_dist}
        \bregdiv_\regfunc(X||Y)=\tfrac{1}{2}||X-Y||_F^2,
    \end{equation}
     the corresponding mirror map (regularized best-response) is the orthogonal projection,
\begin{align} \label{eqn:orth-project}
    \mirproj^h_\sjoint (Y) = {\arg\min}_{C\in\sjoint} ||Y-C||_F^2=\orthproj_\sjoint (Y),
\end{align}
    and the corresponding proximal map is
\begin{align} \label{eqn:orth-proxy}
    \proxproj^\regfunc_{\sjoint}(X,Y)={\arg\min}_{C\in\sjoint} ||X+Y-C||_F^2= \orthproj_\sjoint (X+Y).
\end{align}
\end{restatable}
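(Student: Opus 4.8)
The plan is to prove the three asserted identities in sequence, treating each as a direct substitution of the squared Frobenius regularizer into the relevant earlier definition, followed by a completion of the square. Throughout, I would exploit that the feasible states lie in $\sjoint$ and are therefore Hermitian, so that cross terms of the form $\tr[X^\dagger Y]$ are real and symmetric under swapping $X,Y$; this is what lets the complex Frobenius inner product behave like a genuine Euclidean inner product when closing squares.

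First I would establish the Bregman-divergence formula. Since $\regfunc(X)=\tfrac12\tr[X^\dagger X]$ is smooth, its subdifferential is the singleton $\partial\regfunc(X)=X$, so substituting into the definition in \Cref{eqn:breg_div} gives $\bregdiv_\regfunc(X\|Y)=\tfrac12\tr[X^\dagger X]-\tfrac12\tr[Y^\dagger Y]-\tr[(X-Y)^\dagger Y]$. Expanding the last trace and regrouping yields $\tfrac12\tr[X^\dagger X]+\tfrac12\tr[Y^\dagger Y]-\tr[X^\dagger Y]$. To identify this with $\tfrac12\|X-Y\|_F^2=\tfrac12\big(\tr[X^\dagger X]-\tr[X^\dagger Y]-\tr[Y^\dagger X]+\tr[Y^\dagger Y]\big)$, I would invoke Hermiticity of $X,Y$, which gives $\tr[X^\dagger Y]=\tr[XY]=\tr[YX]=\tr[Y^\dagger X]$, so the two cross terms coincide and the expressions match exactly.

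Second, for the mirror map I would substitute $\regfunc$ into the regularized-best-response of \Cref{prop:mirproj}, i.e.\ maximize $\tr[Y^\dagger C]-\tfrac12\|C\|_F^2$ over $C\in\sjoint$. Completing the square (again using that $\tr[Y^\dagger C]$ is real by Hermiticity) gives $\tr[Y^\dagger C]-\tfrac12\|C\|_F^2=-\tfrac12\|C-Y\|_F^2+\tfrac12\|Y\|_F^2$, so dropping the $C$-independent constant the maximizer equals $\arg\min_{C\in\sjoint}\|C-Y\|_F^2=\orthproj_\sjoint(Y)$; this is literally the orthogonal projection because the Frobenius norm is self-dual. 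Third, for the proximal map I would insert the Bregman formula from the first stage into \Cref{prop:proximalproj}, reducing the objective (in the ascent convention matching the additive gradient steps of OMMWU) to $\tr[Y^\dagger C]-\tfrac12\|C-X\|_F^2$. The key manipulation is to absorb the linear term into the quadratic: completing the square in $C$ gives $-\tfrac12\|C-(X+Y)\|_F^2$ plus terms independent of $C$, so the optimizer is $\arg\min_{C\in\sjoint}\|X+Y-C\|_F^2=\orthproj_\sjoint(X+Y)$.

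I expect the only genuine subtlety—rather than a real obstacle—to be bookkeeping of two kinds: (i) keeping the ascent/descent sign convention consistent so that the additive gradient term lands as $X+Y$ and not $X-Y$, and (ii) justifying at each completion of the square that the Hermitian cross terms are real, so that the square closes cleanly over the complex matrix space. Once these conventions are fixed, every step is a straightforward substitution followed by a quadratic-completion identity, and no deeper structural argument is required.
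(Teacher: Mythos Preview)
Your proposal is correct and handles all three parts soundly, including the sign bookkeeping; the Hermiticity observation that makes the cross terms real is exactly what is needed to close the squares over complex matrices.

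The route differs from the paper's, however. For the Bregman-divergence part the paper does essentially what you do (compute $\nabla\regfunc(Y)=Y$ and expand). But for the mirror and proximal maps the paper proceeds via Lagrangians: it writes the Lagrangian for the constrained problem $\arg\max_{C\in\sjoint}\{\tr[Y^\dagger C]-\tfrac12\|C\|_F^2\}$ (respectively the proximal objective) with multiplier $\lambda$ for the trace constraint, computes $\nabla_{C^\top}\mathcal{L}$, and then observes that these gradients coincide with the gradients of the Lagrangian for $\arg\min_{C\in\sjoint}\|Y-C\|_F^2$ (respectively $\|X+Y-C\|_F^2$), concluding the optimizers agree. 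Your completion-of-squares argument is more direct and arguably cleaner: it exhibits the equivalence of the objectives themselves (up to $C$-independent constants) rather than matching first-order conditions, so no Lagrange multiplier or KKT comparison is needed. The paper's approach has the minor advantage of making the trace constraint explicit throughout, but for a purely quadratic objective your method gets to the same conclusion with less machinery.
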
 

\vspace{-0.1in}
\begin{restatable}[Bregman Divergence of the von Neumann Entropy]{lemma}{bregvn}\label{thm:breg_vn}%
    If $\regfunc$ is the von Neumann entropy,
    \begin{equation}
        \reg{X}=\tr [X \log X],
    \end{equation}
    then the Bregman divergence is the quantum relative entropy
    \begin{equation}
        \bregdiv_\regfunc (X||Y) = \tr [X (\log X - \log Y)],
    \end{equation}   
        the corresponding mirror map (regularized best-response)  is the logit map,
\begin{align} \label{eqn:logit-project}
    \mirproj^h_\sjoint (Y)=\frac{\exp(Y)}{\tr[\exp(Y)]}=\Lambda (Y),
\end{align}
    and the corresponding proximal map is
\begin{align} \label{eqn:logit-proxy}
    \proxproj^\regfunc_{\mathcal{C}}(X,Y)=\frac{\exp(\log X +Y)}{\tr[\exp(\log X +Y)]}=\Lambda (\log X +Y).
\end{align}

\end{restatable}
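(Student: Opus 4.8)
The plan is to establish the three claims in turn, each by a short direct computation. For the Bregman-divergence identity I would work straight from the definition \Cref{eqn:breg_div}; for the mirror and proximal maps I would solve the defining variational problems (\Cref{prop:mirproj} and \Cref{prop:proximalproj}) via their first-order optimality conditions, exactly paralleling the Frobenius case of \Cref{thm:breg_frob}. The single fact that drives all three parts is the gradient of the von Neumann entropy, so I would record it first: writing $\reg{X}=\tr[X\log X]$, one has $\partial\reg{X}=\log X+\Id$. I expect this to be the main technical point, as discussed below.

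For the divergence, I would substitute $\partial\reg{V}=\log V+\Id$ into \Cref{eqn:breg_div} to get
\begin{align*}
    \bregdiv_\regfunc(U\|V) = \tr[U\log U]-\tr[V\log V]-\tr[(U-V)(\log V+\Id)].
\end{align*}
Expanding the last trace and cancelling the $\tr[V\log V]$ terms leaves $\tr[U\log U]-\tr[U\log V]-\tr[U]+\tr[V]$. Since $U,V$ lie in the spectraplex $\sjoint$, we have $\tr[U]=\tr[V]=1$, so the last two terms cancel and we obtain $\bregdiv_\regfunc(U\|V)=\tr[U(\log U-\log V)]$, the quantum relative entropy.

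For the mirror map I would plug $\regfunc$ into \Cref{prop:mirproj} and maximize $\tr[D P]-\tr[P\log P]$ over the spectraplex (using $D^\dagger=D$ for the Hermitian dual variable). Introducing a Lagrange multiplier $\lambda$ for $\tr P=1$ and using $\partial\reg{P}=\log P+\Id$, stationarity gives $D-\log P-(1+\lambda)\Id=0$, i.e. $P=e^{-(1+\lambda)}\exp(D)$; the constraint fixes $e^{-(1+\lambda)}=1/\tr[\exp(D)]$, yielding $P=\exp(D)/\tr[\exp(D)]=\Lambda(D)$. Strong convexity of $\regfunc$ makes the objective strictly concave, so this stationary point is the unique maximizer, and the positivity constraint is inactive because $\exp(D)\succ 0$ for every Hermitian $D$ (hence it carries no multiplier). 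The proximal-map claim is identical with $D$ replaced by the shifted dual variable $\log X+Y$: minimizing $-\tr[Y C]+\bregdiv_\regfunc(C\|X)$ over $\sjoint$ (the sign matching the ascent convention under which the payoff gradient enters the updates of \Cref{alg:OMMWU} with a plus, and under which \Cref{thm:breg_frob} produces $\orthproj_\sjoint(X+Y)$) and using the divergence formula just proved, stationarity reads $-Y+\log C+\Id-\log X-\lambda\Id=0$, so $\log C=\log X+Y+(\lambda-1)\Id$ and normalization gives $C=\Lambda(\log X+Y)$.

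The main obstacle — and really the only nontrivial step — is justifying $\partial\reg{X}=\log X+\Id$ in the non-commutative setting: although the operator-valued map $X\mapsto X\log X$ has a complicated Fréchet derivative because a perturbation of $X$ need not commute with $X$, the gradient of the \emph{scalar} trace functional $X\mapsto\tr[X\log X]$ collapses, by cyclicity of the trace together with the spectral (Daleckii–Krein) calculus, to exactly $\log X+\Id$. Once this identity is in hand the three computations are routine; the remaining care is purely bookkeeping of the ascent/descent sign convention and the observation that the positivity constraint of the spectraplex is inactive at the exponential-family solution, so that no complementary-slackness analysis is required.
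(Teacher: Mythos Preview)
Your proposal is correct and follows essentially the same route as the paper: state the gradient identity $\partial\reg{X}=\log X+\Id$, plug it into \Cref{eqn:breg_div} and cancel via $\tr X=\tr Y$, then solve the mirror and proximal variational problems by a Lagrangian in the trace constraint exactly as in the Frobenius case. Your added remarks—the Daleckii–Krein/cyclicity justification for the trace-gradient formula and the observation that the PSD constraint is inactive because $\exp(\cdot)\succ 0$—are points the paper simply asserts or leaves implicit, so if anything your write-up is slightly more careful.
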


\noindent In \Cref{sec:algos}, we will leverage these mirror and proximal steps (viewed as oracles) to craft efficient iterative algorithms for finding Nash equilibria of zero-sum quantum games.

\section{Algorithms for Quantum Zero-Sum Games} \label{sec:algos}
As previously discussed, our goal is to efficiently find Nash equilibria, $\joint^*=(\pone^*,\ptwo^*)$, satisfying the variational inequality of \Cref{eqn:vi}. This section describes existing algorithms, as well as our proposed algorithm for finding these Nash equilibria. \Cref{sec:convergence} will demonstrate that our proposed algorithm achieves a quadratic speedup relative to existing algorithms.

\subsection{Matrix Dual Averaging (MDA) Method}
In 2008, Jain and Watrous introduced a parallel algorithm to find approximate Nash-equilibria of quantum zero-sum games \cite{jain2009parallel}. Their algorithm was based on the Matrix Multiplicative Weights Update (MMWU) method \cite{arora2012multiplicative,tsuda2005matrix,kale2007efficient,jain2022matrix}. Here, we will instead present and assess a more general method, which we will refer to as Matrix Dual Averaging (MDA),\footnote{The name is a reference to the analogous algorithm for classical games, i.e., Nesterov's Dual Averaging  \cite{nesterov_primal-dual_2009}. Dual Averaging is sometimes also known as ``Lazy'' Mirror Descent, since the key difference from standard Mirror Descent is that averaging happens in the dual space (rather than mapping from primal to dual space and back in each iteration).} which, with different DGFs, parameterizes a family of algorithms that include Jain and Watrous' MMWU algorithm. 

MDA algorithms are played over several rounds, $t\in[0,N]$, of the quantum zero-sum game. Suppose that, in round $t$, Alice and Bob play states $\joint_t=(\pone_t,\ptwo_t)$.\footnote{Note that, in the first  round ($t=0$), Alice and Bob play the maximally mixed state $\joint_0=(\frac{\Id_n}{2^n},\frac{\Id_m}{2^m})$.} After performing his joint measurement, Roger returns feedback to Alice and Bob in the form of their respective payoff gradients, $\grad_\pone(\joint_t)$ and $\grad_\ptwo(\joint_t)$, as defined in \Cref{eqn:pgrad1} and \Cref{eqn:pgrad2}. Alice and Bob use this feedback to individually update their states so as to iteratively maximize their individual payoff. The general MDA family of algorithms (\Cref{alg:mda}) can be expressed concisely as the updates:
\begin{align}
    \mom_{t+1} &= \mom_t + \grad(\joint_t) \\
    \joint_{t+1} &= \mirproj^h_\sjoint(\step\: \mom_{t+1}),
\end{align}
where $\step$ is a chosen step size and $h$ is a regularizing, convex DGF. As we will  see, it is the choice of DGF $h$ which instantiates different MDA algorithms, including MMWU. Furthermore, relating back to the depiction of the mirror update step in \Cref{fig:mirror_prox_map}, the intermediate states $\mom$ are averaged in the dual space, but are mapped onto feasible states $\joint$ of the primal space by the mirror projection $\mirproj_\sjoint^\regfunc$. Overall, the algorithm only incorporates first-order derivatives in the update steps.

\begin{algorithm}[t] 
\caption{Matrix Dual Averaging}\label{alg:mda}
\begin{algorithmic}
\State \textbf{Accuracy Parameter:} $\epsilon$
\State \textbf{Regularization Function:} $h:\{\mathcal{A},\mathcal{B}\}\to\mathbb{R}$
\State \textbf{Diameter (of $\bregdiv_\regfunc$):} 
$\mathcal{D}_h=\sup_{X,Y\in \sjoint} \bregdiv_\regfunc(X\|Y)$
\State \textbf{Regularized Best Response:} $\mirproj_{\sone,\stwo}^h(\cdot)  = \arg\max_{x\in \{\sone,\stwo\}}\{\langle\cdot,x\rangle-h(x)\}$
\vspace{0.1in}
\State $\step \gets  \ {\mu_h}/(2\gamma_\grad)$ 
\Comment{Step Size}
\State $N \gets \lceil \mathcal{D}_h/(\step\cdot\epsilon^2) \rceil$ 
\Comment{Number of Rounds}
\State $(\pone_0,\ptwo_0) \gets \left(\frac{1}{2^n}\Id_\sone, \frac{1}{2^m}\Id_\stwo\right)$ \Comment{State Initialization}

\vspace{0.1in}
\For{$t\in[1,N-1]$}

\vspace{0.1in}
\State $\cfone^{(t)} \gets \sum_{i=0}^{t-1} \grad_\pone(\joint_i)$ \Comment{Calculate Cumulative Feedback}
\State $\cftwo^{(t)} \gets \sum_{i=0}^{t-1} \grad_\ptwo(\joint_i)$ 

\vspace{0.1in}
\State $\pone_t \gets \mirproj_\sone^h\left(\step\:\cfone^{(t)}\right)$ \Comment{State Updates}
\State $\ptwo_t \gets \mirproj_\stwo^h\left(\step\:\cftwo^{(t)}\right)$

\vspace{0.1in}
\EndFor
\end{algorithmic}
\vspace{0.1in}
\Return $\bigg( \bar{\pone}=\frac{1}{N}\sum_{t=0}^{N-1}\pone_t, \bar{\ptwo}=\frac{1}{N}\sum_{t=0}^{N-1}\ptwo_t \bigg)$
\end{algorithm}

We will now specify how each player chooses their next state via the previously described MDA method. Note that both Alice and Bob perform the same procedure, so we will only specify the procedure from Alice's perspective. 
Upon receiving feedback $\grad_\pone(\joint_t)$ from Roger, Alice calculates her cumulative feedback,
\begin{equation}
    \cfone^{(t)} = \sum_{i=0}^{t-1} \grad_\pone (\joint_i),
\end{equation}
which is simply the sum of the payoff gradients across all prior rounds. To calculate the next state she will play, $\pone_{t+1}$, Alice re-weights the cumulative feedback by the step size $\step$ and performs a mirror projection onto her spectraplex,
\begin{equation}
    \pone_{t+1}=\mirproj^h_{\sone}(\step \: \cfone^{(t)})=\arg\max_{\pone \in \sone}\{\langle \step \: \cfone^{(t)},\pone\rangle-h(\pone)\}.
\end{equation}
We will now specify how different choices of $h$ lead to different instantiations of the MDA algorithm. For example, following from \Cref{thm:breg_frob}, if the regularizer is the squared Frobenius norm, 
\begin{equation}
    h(\pone) = \frac{1}{2}\|\pone\|_F^2,
\end{equation}
then the mirror projection maps Alice's next state onto the orthogonal projection
\begin{equation}
    \pone_{t+1}= \orthproj_\sone (\step\:\cfone^{(t)}).
\end{equation}
Meanwhile, following from \Cref{thm:breg_vn}, if the regularizer is the negative von Neumann entropy,
\begin{equation}
    h(\pone) = \tr[\pone \log \pone],
\end{equation}
then the mirror projection maps Alice's next state onto the logit map
\begin{equation}
    \pone_{t+1}=\Lambda_\sone \left(\step\:\cfone^{(t)}\right) =\frac{\text{exp}\left(\step\:\cfone^{(t)}\right)}{\tr\left(\text{exp}\left(\step\:\cfone^{(t)}\right)\right)}
\end{equation}
which yields the MMWU algorithm of \cite{jain2009parallel}. The following convergence rate for this procedure was established by \cite{jain2009parallel}. 
\begin{restatable}[MDA Rate]{theorem}{mdarate}\label{thm:mdarate}%
    MDA methods compute an $\epsilon$-Nash equilibrium in $\bigo_d({1}/{\epsilon^2})$ steps.
\end{restatable}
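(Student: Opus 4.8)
The plan is to treat MDA as a regret-minimizing (follow-the-regularized-leader) dynamic and to convert its aggregate regret into a bound on the merit function of the average iterate $\bar\joint=(\bar\pone,\bar\ptwo)$. Concretely, by the discussion around \Cref{eqn:dual_gaps,eqn:error_func}, it suffices to drive $\textnormal{Error}(\bar\joint)=\sup_{\joint\in\sjoint}\inner{(\joint-\bar\joint)}{\grad(\joint)}$ below $2\epsilon\|\payob\|_\infty$, since this error function coincides with the duality gap and certifies an $\epsilon$-Nash equilibrium. The obstacle in this reduction is that $\textnormal{Error}$ probes $\grad$ at the adversarially chosen comparison point $\joint$ rather than along the trajectory, and this is exactly where monotonicity (\Cref{lemma:monotone}) enters: writing $\bar\joint=\tfrac1N\sum_t\joint_t$ and applying $\inner{(\joint-\joint_t)}{\grad(\joint)}\le\inner{(\joint-\joint_t)}{\grad(\joint_t)}$ termwise, I would obtain
\begin{equation*}
    \textnormal{Error}(\bar\joint)\;\le\;\frac1N\,\sup_{\joint\in\sjoint}\sum_{t=0}^{N-1}\inner{(\joint-\joint_t)}{\grad(\joint_t)}.
\end{equation*}
The right-hand side is precisely the time-averaged external regret of the dual-averaging iterates against the best fixed comparator, and (since $\sjoint=\sone\oplus\stwo$ and the trace splits over the direct sum) it decomposes additively into Alice's and Bob's individual regrets.

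\textbf{The follow-the-regularized-leader bound.} It then remains to bound this regret. By \Cref{lemma:danskin}, the MDA update $\joint_t=\mirproj^h_\sjoint(\step\,\mom_t)$ equals $\nabla\regfunc^*(\step\,\mom_t)$, the gradient of the Fenchel conjugate evaluated at the accumulated dual matrix $\mom_t=\sum_{i<t}\grad(\joint_i)$. The key structural observation is that this accumulation is ordinary \emph{addition} of Hermitian matrices---an abelian operation---so the non-commutativity of the matrix exponential/logarithm underlying $\mirproj^h$ never obstructs the telescoping. Since $h$ is $\mu_h$-strongly convex, \Cref{lemma:convex_smooth} yields that $\regfunc^*$ is $(1/\mu_h)$-smooth in the dual norm; bounding the successive increments of the potential $\regfunc^*(\step\,\mom_t)$ by this smoothness and summing telescopically then gives the standard estimate
\begin{equation*}
    \sup_{\joint\in\sjoint}\sum_{t=0}^{N-1}\inner{(\joint-\joint_t)}{\grad(\joint_t)}\;\le\;\frac{\mathcal{D}_h}{\step}+\frac{\step}{2\mu_h}\sum_{t=0}^{N-1}\|\grad(\joint_t)\|_*^2,
\end{equation*}
where $\mathcal{D}_h$ controls the range of the Bregman divergence $\bregdiv_\regfunc$ over $\sjoint$.

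\textbf{Boundedness and step-size tuning.} Lipschitz continuity of $\grad$ (\Cref{lemma:lipschitz}) together with compactness of $\sjoint$ furnishes a uniform bound $\|\grad(\joint_t)\|_*\le G$ with $G=\bigo_d(1)$, so the squared-gradient sum is at most $G^2 N$. Combining with the two previous steps gives
\begin{equation*}
    \textnormal{Error}(\bar\joint)\;\le\;\frac{\mathcal{D}_h}{\step N}+\frac{\step\,G^2}{2\mu_h},
\end{equation*}
and optimizing the step size (equivalently, taking $\step$ of order $\epsilon$) balances the two terms to $\bigo\!\big(\sqrt{\mathcal{D}_h G^2/(\mu_h N)}\big)$. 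Forcing this below $\epsilon\|\payob\|_\infty$ yields $N=\bigo_d(1/\epsilon^2)$, the claimed complexity; the entire dependence on the spectraplex dimension is carried by $\mathcal{D}_h$, $\mu_h$, and $G$, and is made explicit in the dimension-dependence discussion.

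\textbf{Main obstacle.} The reductions above (the monotonicity pass and the boundedness estimate) are routine; the genuine work is the regret bound in the non-commutative, matrix-valued setting. That argument goes through only because dual averaging accumulates gradients \emph{additively} in the abelian space of Hermitian matrices, and because strong convexity of the regularizer transfers---via \Cref{lemma:convex_smooth}---into smoothness of its conjugate, so that the telescoping potential argument survives verbatim even though $\mirproj^h$ is realized by a non-commuting matrix exponentiation. Pinning down the precise norm in which the von Neumann entropy is strongly convex (hence the exact values of $\mu_h$ and $G$) is what ultimately fixes the dimensional constant, but is not needed for the $\bigo_d(1/\epsilon^2)$ rate itself.
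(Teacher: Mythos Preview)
The paper does not actually supply its own proof of \Cref{thm:mdarate}; it simply attributes the result to Jain and Watrous \cite{jain2009parallel} (``The following convergence rate for this procedure was established by \cite{jain2009parallel}''). So there is no in-paper argument to compare against.

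Your proposal is a correct and standard way to recover the statement within the paper's framework. It mirrors the structure of the paper's own OMMP analysis in \Cref{sec:convergence}: first use monotonicity (\Cref{lemma:monotone}) to pass from $\textnormal{Error}(\bar\joint)$ to the time-averaged regret $\tfrac1N\sup_Z\sum_t\inner{(\joint_t-Z)}{\grad(\joint_t)}$, exactly as in \Cref{eqn:error}; then bound the regret. The difference from the OMMP proof is only in the second stage: instead of the proximal three-point decomposition into $(A_t),(B_t),(C_t)$ that exploits Lipschitzness to gain a factor of $\sqrt{N}$, you invoke the generic FTRL/dual-averaging bound $\mathcal{D}_h/\step+(\step/2\mu_h)\sum_t\|\grad(\joint_t)\|_*^2$, which is all that is available without an optimistic step. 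Your observation that the accumulation $\mom_t=\sum_{i<t}\grad(\joint_i)$ lives in the abelian dual space, so that the potential argument via \Cref{lemma:convex_smooth} goes through despite the non-commutative primal, is exactly the point the paper emphasizes in its methodology discussion.

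One small caution: the inequality you write, $\inner{(\joint-\joint_t)}{\grad(\joint)}\le\inner{(\joint-\joint_t)}{\grad(\joint_t)}$, is the reverse of what raw monotonicity gives. It is nonetheless valid here because, as the proof of \Cref{lemma:monotone} in \Cref{sec:f_props} shows, $\inner{(\grad(X)-\grad(Y))}{(X-Y)}=0$ identically for bilinear zero-sum games, so both directions hold; but to match the paper's conventions you may prefer to phrase the monotonicity step as in \Cref{eqn:error}, i.e., with $\joint_t-Z$ rather than $Z-\joint_t$.
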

\noindent Thus, for quantum zero-sum games MMWU obtains a worse convergence rate than the best known algorithms for classical zero-sum games, which obtain an $\bigo_d(1/\epsilon)$ rate, as described in \Cref{sec:prior_work} and depicted in \Cref{fig:classical_methods}. In the remainder of this work, we will show how the classically efficient Mirror Prox algorithm and its variants can be lifted to the quantum zero-sum game setting in order to obtain the desired $\bigo_d(1/\epsilon)$ rate.

\subsection{Matrix Mirror Prox (MMP)}
{\color{black}The classical Mirror Prox (MP) algorithm is described in \Cref{sec:prior_work}, with updates given in \Cref{fig:classical_methods}. In each step of the iterative MP optimization procedure, the gradient of the players' utilities is calculated followed by the execution of a mirror step (\Cref{alg:mirror}), as illustrated in \Cref{fig:mirror_prox_map}. The mirror step can be implemented via a mirror map of the dual gradient, as in \Cref{eqn:mirror_proj_update}, or the mirror map of the primal state, as in \Cref{eqn:mirror_prox_proj_update}. 
The addition of proximal steps—or extra-gradient steps—is crucial as they introduce a refinement to the basic gradient step. After the initial gradient is computed and the first mirror step is taken, the algorithm does not immediately proceed with this information. Instead, it calculates a second gradient at the point reached after the first mirror step. This second gradient offers a more accurate direction for the update because it accounts for the immediate effects of the initial step, thus incorporating more information about the curvature and constraints of the utility space. This two-pronged gradient approach is what distinguishes the MP algorithm from simpler gradient descent methods, offering a sophisticated tool for navigating the intricate landscapes of variational inequalities.
}

Drawing inspiration from classical MP, we propose an analog for the quantum zero-sum games of interest. This quantum analog, which we will refer to as the Matrix Mirror Prox (MMP) method, has updates:
\begin{align}
    \mom_{t+1} &= \proxproj^{h,\eta}_\sjoint ( \joint_{t},\grad( \joint_{t}))  \\
    \joint_{t+1} &= \proxproj^{h,\eta}_\sjoint ( \joint_{t}, \grad(\mom_{t+1})).
\end{align}
Notice that these update steps are of the same form as the proximal step updates given in \Cref{eqn:prox1} and \Cref{eqn:prox2}. Alternatively, the updates can be expressed purely in terms of the joint states $\joint_t$:
\begin{align}
    \joint_{t+1} &= \proxproj^{h,\eta}_\sjoint \bigg( \joint_{t}, \grad \Big(\proxproj^{h,\eta}_\sjoint ( \joint_{t},\grad( \joint_{t}))\Big)\bigg).
\end{align}
As discussed in \Cref{sec:mirror_prox_step}, the incorporation of proximal steps in the Matrix Mirror Prox algorithm offers two significant advantages over the conventional Projected Subgradient Descent Ascent method:
\begin{enumerate}
    \item \textbf{\textsf{Geometry-Awareness}}: MMP replaces the Frobenius norm of PGDA with a generalized Bregman divergence. Similar to MDA, utilizing a mirror map with this Bregman divergence enables MMP to adapt to the inherent geometry of the problem space. This makes the algorithm highly versatile, capable of effectively handling a wide array of problems, including those with intricate constraints.
    \item \textbf{\textsf{Intermediate Proximal Step}}: {Proximal steps in MMP introduce an intermediate ``prediction'' state (i.e., the state $Y_t$ in \Cref{sec:proxy}), which plays a crucial role. At a high level, this prediction state extrapolates the future trajectory of the descent/ascent, thereby guiding the subsequent state update. This leverages the intuition that the gradient at the next state offers a more informative update gradient than that at the current state.}
\end{enumerate}
Importantly, this combination of geometry-aware mirror maps and forward-looking intermediate proximal steps accelerates the convergence of MMP with von Neumann regularization to min-max points, thereby making MMP a potent tool for solving min-max problems efficiently.

As is the case for classical MP (discussed in \Cref{sec:prior_work}), a key downside of MMP is that it requires two gradient calls per iteration. Specifically, in each round $t$, to calculate the updated joint state $\joint_{t+1}$, Alice and Bob must query Roger with two distinct sets of states, $\joint_t$ and $\mom_{t+1}$, so as to obtain the respective feedback gradients $\grad(\joint_t)$ and $\grad(\mom_{t+1})$. Classically, there exist variants of MP with the same convergence rate that require only a single gradient call per iteration. The remainder of this work will focus on a proposed classically-inspired variant of MMP which requires only a single gradient call per iteration, and a proof that it obtains the desired $\bigo_d(1/\epsilon)$ convergence rate.

\subsection{Optimistic Matrix Mirror-Prox (OMMP) Methods} \label{sec:algo_mmp}

\begin{algorithm}[t] 
\caption{Optimistic Matrix Mirror Prox (OMMP)}\label{alg:SCMMP}
\begin{algorithmic}
\State \textbf{Accuracy Parameter:} $\epsilon$
\State \textbf{Regularization Function:} $h:\{\mathcal{A},\mathcal{B}\}\to\mathbb{R}$
\State \textbf{Strong Convexity Parameter (of $h$):} $\mu_h$
\State \textbf{Diameter (of $\bregdiv_\regfunc$):} 
$\mathcal{D}_h=\sup_{X,Y\in \sjoint} \bregdiv_\regfunc(X\|Y)$
\State \textbf{Lipschitz Parameter (of $\grad$):} $\gamma_\grad$
\State \textbf{Proximal Map:} $
    \proxproj^{\regfunc,\step}_{\sone,\stwo}(X,Y):=
    {\arg\min}_{C\in\sjoint}\{ \langle Y,C-X\rangle -\frac{1}{\step}\bregdiv_\regfunc (C||X)\}.$
\vspace{0.1in}
\State $\step \gets  \ {\mu_h}/(2\gamma_\grad)$ 
\Comment{Step Size}
\State $N \gets \lceil \mathcal{D}_h/(\step\cdot\epsilon) \rceil$ 
\Comment{Number of Rounds}
\State $(\pone_0,\ptwo_0) \gets \left(\frac{1}{2^n}\Id_\sone, \frac{1}{2^m}\Id_\stwo\right)$ \Comment{State Initialization}
\State $(\hat{\pone}_0,\hat{\ptwo}_0) \gets (\pone_0,\ptwo_0)$ \Comment{Optimistic-Momentum Initialization}

\vspace{0.1in}
\For{$t\in[1,N-1]$}

\vspace{0.1in}
\State $\pone_{t+1}\gets\proxproj_\sone^{h,\eta}(\hat{\pone}_t, \grad_\pone(\joint_t))$ \Comment{State Updates}
\vspace{0.05in}
\State $\ptwo_{t+1}\gets\proxproj_\stwo^{h,\eta}( \hat{\ptwo}_t, \grad_\ptwo(\joint_t))$

\vspace{0.1in}
\State $\hat{\pone}_{t+1}\gets\proxproj_\sone^{h,\eta}(\hat{\pone}_t, \grad_\pone(\joint_{t+1}))$ \Comment{Momentum Updates}
\vspace{0.05in}
\State $\hat{\ptwo}_{t+1}\gets\proxproj_\stwo^{h,\eta}(\hat{\ptwo}_t, \grad_\ptwo(\joint_{t+1}))$

\vspace{0.1in}
\EndFor
\end{algorithmic}
\vspace{0.1in}
\Return $\bigg( \bar{\pone}=\frac{1}{N}\sum_{t=0}^{N-1}\pone_t, \bar{\ptwo}=\frac{1}{N}\sum_{t=0}^{N-1}\ptwo_t \bigg)$
\end{algorithm} 

We will now discuss the main contribution of this work, a novel approach to finding Nash equilibria of quantum zero-sum games, which we refer to as the Optimistic Matrix Mirror Prox (OMMP) method. This algorithm draws inspiration from seminal classical game theory work on Optimistic Mirror Prox (OMP)\footnote{There are a variety of classical algorithms of this general form, known as Popov's modified Arrow-Hurwicz algorithm, Past Extra-Gradient, Optimistic Extra-Gradient, or the Optimistic Gradient Variant.} methods, as described in the Prior Work (\Cref{sec:prior_work}). 
The OMMP algorithm, detailed in \Cref{alg:SCMMP}, can be represented succinctly through the following update equations:
\begin{align}
    \joint_{t+1} &= \proxproj^{h,\eta}_\sjoint ( \mom_{t}, \grad(\joint_{t})) \label{eqn:state_peg} \\
    \mom_{t+1} &= \proxproj^{h,\eta}_\sjoint ( \mom_{t},\grad( \joint_{t+1})) \label{eqn:mom_peg}
\end{align}
Here, $\step$ is a predetermined step size, and $\proxproj$ refers to the proximal map, as elaborated in \Cref{alg:prox} and \Cref{prop:proximalproj}.

Although the update equations for OMMP closely resemble those of the standard MMP algorithm, a critical difference exists. As alluded to in the previous section, the OMMP variant necessitates only a single gradient call per iteration. While it may appear that both Alice and Bob require two feedback gradients, $\grad(\joint_t)$ and  $\grad(\joint_{t+1})$, for each round $t$, they will have already acquired $\grad(\joint_t)$ during round $t-1$. This allows them to reuse $\grad(\joint_t)$ and only make a query for a new gradient with one state, $\joint_{t+1}$.

To gain some intuition for this algorithm, we will revisit the original proposal of ``optimism'' in this context, due to Popov \cite{popov1980modification}. In conventional Gradient Descent Ascent, both players update their strategies concurrently, moving in directions opposite to their individual gradients—akin to a tug-of-war. Optimistic Gradient Descent Ascent (OGDA) enhances this by incorporating a ``look-ahead'' feature. Players consider not only the current gradient but also anticipate the opponent's forthcoming strategic adjustments. This anticipatory or ``optimistic'' approach is formalized by adding an extra term to the update rule, essentially an estimate of the future gradient.

This anticipatory element is also how OMMP diverges from standard MMP. Rather than employing a new prediction, optimistic OMMP utilizes a past ``inertia'' point, $\mom_t$. Unlike MMP, which focuses solely on future states, OMMP integrates past gradient information to stabilize its learning trajectory.

\section{Convergence Analysis} \label{sec:convergence}

We will now prove one of the main results of the paper:
\mainres*
\noindent This establishes that our proposed OMMP method yields a quadratic speedup compared to previous MDA-based approaches. Concretely, we will leverage the proof structure of Ene and Nguyên \cite{ene2022adaptive} to demonstrate that, with an appropriately chosen step size $\step$ and for a given accuracy parameter $\epsilon$, the \emph{average iterate}
\begin{align}
    \overline{\joint}_T = \frac{1}{T} \sum_{t=1}^T \joint_t
\end{align}
converges to an $\epsilon$-Nash equilibrium in  $O_d(1/\epsilon)$ iterations.\footnote{
It's important to note the difference in the notation used to describe algorithmic convergence rates between the optimization and computer science communities. In the realm of optimization, \emph{convergence rates} such as \(\bigo(e^{-\rho T})\), \(\bigo(1/T)\), and \(\bigo(1/\sqrt{T})\) are commonly employed to indicate how swiftly an algorithm approximates the optimal solution as the iteration count \(T\) increases. In contrast, the computer science literature tends to recast these convergence rates as \textit{number of iterations} required to achieve an \(\epsilon\)-accurate solution, thereby giving a more direct account of computational efficiency. Expressed in terms of accuracy parameter \(\epsilon\), this leads to expressions like \(\bigo(\log (1/\epsilon))\), \(\bigo(1/\epsilon)\), and \(\bigo(1/\epsilon^2)\). 
}

In \Cref{sec:dim_dep}, we will treat the dimension-dependence hidden in the $\bigo_d(\cdot)$ notation, showing in particular that the von Neumann entropy regularizer reduces OMMP's dimension-dependence logarithmically relative to the Frobenius norm regularizer. Since OMMP with the von Neumann entropy regularizer instatiates the OMMWU algorithm, we thus prove that OMMWU obtains the desired $\bigo(d/\epsilon)$ iteration complexity and, thus, achieves a quadratic speedup to MMWU's $\bigo(d/\epsilon^2)$ iteration complexity.

\subsection{Monotonicity of \texorpdfstring{$\grad$}{Lg}}
Critical to our analysis is the monotonicity of the feedback operator $\grad$, which is a key property of the finite-valued quantum zero sum games of interest.
\begin{restatable}[Mononicity of $\grad$]{lemma}{monotone}\label{lemma:monotone}%
   $\grad$ is monotone or, equivalently,
    \begin{align} \label{eqn:monotone}
        \inner{(\grad(X)-\grad(Y))}{(X-Y)} \geq 0, \:\: \forall X,Y\in\sjoint.
    \end{align}
\end{restatable}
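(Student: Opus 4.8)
The plan is to reduce everything to the single structural fact that the expected payoff $\exutil(\pone,\ptwo)=\tr[\payob(\pone\otimes\ptwo)]$ is \emph{bilinear}, and that the two players' gradients are coupled with opposite signs (the zero-sum property). Because of this, the monotonicity pairing should not merely be nonnegative but should vanish identically, reflecting the conservative/skew-symmetric structure of bilinear zero-sum games (consistent with the Poincar\'e-recurrence behaviour of quantum replicator dynamics referenced earlier). So my target is the stronger equality $\inner{(\grad(X)-\grad(Y))}{(X-Y)}=0$, which trivially yields the claimed $\geq 0$.

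First I would fix $X=(\pone,\ptwo)$, $Y=(\pone',\ptwo')$, write $\delta_\pone=\pone-\pone'$ and $\delta_\ptwo=\ptwo-\ptwo'$, and expand the pairing over the direct sum using the footnote identity $\tr[(A_1,B_1)(A_2,B_2)]=\tr[A_1A_2]+\tr[B_1B_2]$:
\begin{align*}
    \inner{(\grad(X)-\grad(Y))}{(X-Y)}
    = \tr\!\big[(\grad_\pone(\ptwo)-\grad_\pone(\ptwo'))\,\delta_\pone\big]
    + \tr\!\big[(\grad_\ptwo(\pone)-\grad_\ptwo(\pone'))\,\delta_\ptwo\big].
\end{align*}
Now I invoke the explicit forms from \Cref{eqn:pgrad1,eqn:pgrad2}: since $\grad_\pone(\cdot)=\tr_\stwo[\payob^\dagger(\Id_n\otimes\,\cdot\,)]$ is linear in Bob's argument and $\grad_\ptwo(\cdot)=-\tr_\sone[\payob^\dagger(\,\cdot\,\otimes\Id_m)]$ is linear in Alice's argument, each difference collapses onto the increments, giving $\grad_\pone(\ptwo)-\grad_\pone(\ptwo')=\tr_\stwo[\payob^\dagger(\Id_n\otimes\delta_\ptwo)]$ and $\grad_\ptwo(\pone)-\grad_\ptwo(\pone')=-\tr_\sone[\payob^\dagger(\delta_\pone\otimes\Id_m)]$.

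Second, I would rewrite each trace term as a single pairing against $\payob^\dagger$ on the joint space via the partial-trace adjoint identity $\tr[\tr_\stwo[M]\,A]=\tr[M\,(A\otimes\Id_m)]$ (and its $\sone$ analogue), together with cyclicity and $(\Id_n\otimes\delta_\ptwo)(\delta_\pone\otimes\Id_m)=\delta_\pone\otimes\delta_\ptwo$. This turns the first term into $\tr[\payob^\dagger(\delta_\pone\otimes\delta_\ptwo)]$ and, because of its leading minus sign, the second term into $-\tr[\payob^\dagger(\delta_\pone\otimes\delta_\ptwo)]$. The two contributions are exact negatives, so the pairing is identically zero, hence $\geq 0$ for all $X,Y\in\sjoint$, proving monotonicity.

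The only delicate point — and the one I would treat most carefully — is the tensor/partial-trace bookkeeping: verifying $\tr[(\delta_\pone\otimes\Id_m)\,\payob^\dagger(\Id_n\otimes\delta_\ptwo)]=\tr[\payob^\dagger(\delta_\pone\otimes\delta_\ptwo)]$ and its counterpart, and making sure the transposes appearing in the definitions of the payoff gradients together with the Hermiticity $\payob^\dagger=\payob$ (so that $\payob$ is a genuine observable) are tracked consistently so that the two terms cancel rather than merely bound each other. Everything else is a routine consequence of bilinearity and linearity of the partial trace.
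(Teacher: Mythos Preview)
Your proof is correct, and it is genuinely different from the paper's. Both arguments arrive at the same strengthened conclusion, namely that the monotonicity pairing is identically zero, but they get there by different means. The paper expands the payoff observable $\payob$ in the tensor Pauli basis, $\payob=\sum_{P,Q}\widehat{\payob}(P,Q)\,P\otimes Q$, writes out $\grad_\pone$ and $\grad_\ptwo$ explicitly in terms of the Pauli coefficients $\widehat{\pone}(P)$, $\widehat{\ptwo}(Q)$, and then shows by direct coefficient-level computation that the two halves of the trace cancel. Your route is more coordinate-free: you use only the linearity of $\grad_\pone(\cdot)$ and $\grad_\ptwo(\cdot)$, the partial-trace adjoint identity $\tr[\tr_\stwo[M]\,A]=\tr[M\,(A\otimes\Id_m)]$, and the factorisation $(\Id_n\otimes\delta_\ptwo)(\delta_\pone\otimes\Id_m)=\delta_\pone\otimes\delta_\ptwo$ to collapse both contributions to $\pm\tr[\payob^\dagger(\delta_\pone\otimes\delta_\ptwo)]$. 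This is the direct quantum analogue of the classical bilinear zero-sum argument; it is shorter, makes the skew-symmetric structure transparent, and shows that the Pauli machinery the paper invokes is not actually needed for this lemma. The paper's basis expansion, on the other hand, gives a concrete component-wise picture of $\grad$ that may be useful elsewhere.
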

\noindent The proof of this result employs techniques involving the Pauli decomposition of the feedback operator, thereby deviating from the conventional proof for monotonicity of classical normal-form game operators. We defer the proof of $\grad$'s monotonicity (in addition to many other properties of $\grad$) to \Cref{sec:f_props}.

\subsection{Error Decomposition} 

Following from our discussion of Nash equilibria in \Cref{sec:nash_eq} and the natural error function, given in \Cref{eqn:error_func}, the error of the average iterate is defined as
\begin{align}
    \textnormal{Error} (\overline{\joint}_T)&=\sup_{Z\in\sjoint} \inner{\grad(Z)}{(\overline{\joint}_T-Z)}= \frac{1}{T} \sup_{Z\in\sjoint} \: \sum_{t=1}^T \inner{\grad (Z)}{(\joint_t - Z)}.
\end{align}
From the monotonicity of $\grad$, given in \Cref{eqn:monotone}, for $\joint_t$ and some state $Z\in\sjoint$,
\begin{align}
    \inner{ \grad(Z)}{(\joint_t - Z)} \leq \inner{\grad(\joint_t)}{(\joint_t - Z)},
\end{align}
which implies that the average-iterate error is upper bounded as
\begin{align}
    \textnormal{Error} (\overline{\joint}_T)\leq \frac{1}{T} \sup_{Z\in\sjoint}  \: \sum_{t=1}^T \inner{\grad (\joint_t)}{(\joint_t - Z)}. \label{eqn:error}
\end{align}
In order to produce an explicit bound, we begin by decomposing the terms of the summation, as follows, resulting in a sum of three distinct terms: $(A_t)$, $(B_t)$, and $(C_t)$.
\begin{align}
    &\inner{\grad (\joint_t)}{(\joint_t - Z)}= \inner{\grad (\joint_t)}{(\joint_t - \mom_t + \mom_t - Z)} \nonumber\\
    &=\inner{\grad (\joint_t)}{(\joint_t - \mom_t) + \grad (\joint_t) (\mom_t - Z)}=\inner{\grad (\joint_t)}{(\joint_t - \mom_t)} + \inner{(\joint_t)}{(\mom_t - Z)}  \nonumber\\
    &=\inner{(\grad (\joint_t)-\grad (\joint_{t-1})+\grad (\joint_{t-1}))}{(\joint_t - \mom_t)} + \inner{\grad (\joint_t)}{(\mom_t - Z)}\nonumber\\
    &=\inner{(\grad (\joint_t)-\grad (\joint_{t-1}))} {(\joint_t - \mom_t)+\grad (\joint_{t-1})(\joint_t - \mom_t)} + \inner{\grad (\joint_t)}{(\mom_t - Z)}\nonumber\\
    &= \underbrace{\inner{\grad(\joint_t)}{(\mom_t-Z)}}_{(A_t)}  + \underbrace{\inner{(\grad(\joint_t)-\grad(\joint_{t-1}))}{(\joint_t - \mom_t)}}_{(B_t)} + \underbrace{\inner{\grad(\joint_{t-1})}{(\joint_t - \mom_t)}}_{(C_t)}. \label{eqn:decomp}
\end{align}
In the following analysis, we will show that terms $(A_t)$ and $(C_t)$ can be upper bounded via the optimality of the proximal updates in the OMMP algorithm.  Furthermore, the term $(B_t)$ can be upper bounded via the strong convexity, smoothness, and duality of the OMMP proximal updates.

\subsection{Optimality Analysis of \texorpdfstring{$(A_t)$}{Lg}}
We begin by upper bounding the term $(A_t)$. Recall from the definition of the OMMP algorithm in \Cref{sec:algo_mmp} that $\mom_t$ is the outcome of the proximal projection of \Cref{eqn:mom_peg}, for a regularization function $\regfunc$. Therefore, $\mom_t$ can be expressed as
\begin{align} \label{eqn:mom_defn}
    \mom_t = \arg\min_{Z\in\sjoint} \mopt_t (Z),
\end{align} 
where $\mopt_t(Z)$ is the proximal optimization term
\begin{align} \label{eqn:mom_min}
    \mopt_t(Z)=
     \inner{\grad(\joint_{t})}{(Z-\mom_{t-1})} +\frac{1}{\step}\bregdiv_\regfunc (Z\|\mom_{t-1}).
\end{align} 
Note that $\mopt_t$ is differentiable, with gradient 
\begin{align} \label{eqn:mom_der}
    \nabla_{Z^\top}\mopt_t(Z)&=\grad(\joint_{t}) + \frac{1}{\step}
    \nabla_{Z^\top}\bregdiv_\regfunc (Z\|\mom_{t-1}).
\end{align}
Using the definitions of \Cref{sec:conv_dual}, in the supplement (\Cref{app:proofs}) we further show that:
\vspace{-0.1in}
\begin{restatable}{proposition}{momconv}\label{prop:mom_conv}%
   For a $\mu$-strongly convex regularization function $\regfunc$ with respect to the norm $\|\cdot\|$, $\mopt_t$ is $\frac{\mu}{\step}$-strongly convex with respect to the norm $\|\cdot\|$.
\end{restatable}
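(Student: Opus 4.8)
The plan is to exploit two elementary but crucial facts about the strong-convexity modulus: it is unchanged by the addition of an affine function, and it scales linearly under multiplication by a positive constant. Granting these, the modulus of $\mopt_t$ is inherited directly from that of $\regfunc$, up to the factor $1/\step$.

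First I would expand $\mopt_t(Z)$ using the definition of the Bregman divergence in \Cref{eqn:breg_div}. Substituting $\bregdiv_\regfunc(Z\|\mom_{t-1}) = \regfunc(Z) - \regfunc(\mom_{t-1}) - \tr[(Z-\mom_{t-1})^\dagger\partial\regfunc(\mom_{t-1})]$ into \Cref{eqn:mom_min} and regarding $\joint_t$ and $\mom_{t-1}$ as fixed, one sees that, as a function of the variable $Z$,
\[
    \mopt_t(Z) = \tfrac{1}{\step}\regfunc(Z) + \ell(Z) + c,
\]
where $c$ is a $Z$-independent constant and $\ell$ gathers the terms that are affine in $Z$: the payoff contribution $\tr[\grad(\joint_t)(Z-\mom_{t-1})]$ and the linear part $-\tfrac{1}{\step}\tr[(Z-\mom_{t-1})^\dagger\partial\regfunc(\mom_{t-1})]$ of the Bregman divergence. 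The only point needing explicit verification here is that these trace pairings are genuinely affine (real-linear) in $Z$ under the complex/Hermitian differentiation convention of \Cref{appendix:complex_diff}; this is immediate since the trace form is real-bilinear.

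Next I would form the Bregman divergence of $\mopt_t$ itself and use the equivalent characterization of strong convexity from \Cref{def:strong_convex} (recall $\mopt_t$ is differentiable, with gradient given in \Cref{eqn:mom_der}). Since affine functions have identically zero Bregman divergence, the contributions of $\ell$ and $c$ cancel, leaving
\[
    \bregdiv_{\mopt_t}(X\|Y) = \tfrac{1}{\step}\,\bregdiv_\regfunc(X\|Y).
\]
Invoking the $\mu$-strong convexity of $\regfunc$ in Bregman form, $\bregdiv_\regfunc(X\|Y) \geq \tfrac{\mu}{2}\|X-Y\|^2$, and using $\step>0$, this yields $\bregdiv_{\mopt_t}(X\|Y) \geq \tfrac{\mu}{2\step}\|X-Y\|^2$, which is precisely the assertion that $\mopt_t$ is $\tfrac{\mu}{\step}$-strongly convex with respect to $\|\cdot\|$.

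This argument is essentially routine, and I do not expect any genuine obstacle; the proof is short. The only matters requiring care are bookkeeping ones: confirming that every summand other than $\tfrac{1}{\step}\regfunc(Z)$ is affine in $Z$ so that it vanishes in the Bregman difference, and tracking the factor $\tfrac{1}{\step}$ together with the positivity $\step = \mu_h/(2\gamma_\grad) > 0$. An equally short alternative would start from the gradient expression in \Cref{eqn:mom_der}, noting that $\nabla_{Z^\top}\bregdiv_\regfunc(Z\|\mom_{t-1})$ differs from $\partial\regfunc(Z)$ only by a $Z$-independent shift, but the Bregman-cancellation route above is the cleanest.
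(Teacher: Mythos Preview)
Your proposal is correct and follows essentially the same approach as the paper: both arguments observe that $\mopt_t(Z)$ differs from $\tfrac{1}{\step}\regfunc(Z)$ only by terms that are affine in $Z$, so the strong-convexity modulus is inherited as $\mu/\step$. The paper packages this via a ``sum of strongly convex functions'' lemma applied to the decomposition $\mopt_t = (\text{linear term}) + \tfrac{1}{\step}\bregdiv_\regfunc(\cdot\|\mom_{t-1})$, whereas you expand one step further to isolate $\tfrac{1}{\step}\regfunc$ and compute $\bregdiv_{\mopt_t}$ directly---a cosmetic difference only.
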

\noindent Since $\mopt_t$ is strongly convex and differentiable it satisfies the First-Order Optimality Condition (\Cref{thm:firstopt}) for a feasible momentum solution $\mom_t$,
\begin{align} 
    \inner{\nabla_{\mom_t^\top} \mopt_t (\mom_t)}{(Z-\mom_t)}\geq 0, \:\: \forall Z \in \sjoint.
\end{align}
Plugging the gradient of \Cref{eqn:mom_der} into this expression gives
\begin{align} \label{eqn:opt_two} 
    &\inner{\Big(\grad(\joint_{t})  + \frac{1}{\step}\nabla_{\mom_t^\top}\bregdiv_\regfunc (\mom_t||\mom_{t-1})\Big)}{(\mom_{t}-Z)}\leq 0 \nonumber\\ 
    &\inner{\grad(\joint_{t})}{(\mom_{t}-Z)}  + \frac{1}{\step}\inner{\nabla_{\mom_t^\top}\bregdiv_\regfunc (\mom_t||\mom_{t-1})}{(\mom_{t}-Z)}\leq 0 \nonumber\\ 
    &\inner{\grad(\joint_{t})}{(\mom_{t}-Z)}  \leq  \frac{1}{\step}\inner{-\nabla_{\mom_t^\top}\bregdiv_\regfunc (\mom_t||\mom_{t-1})}{(\mom_{t}-Z)}. 
\end{align}
From our definition of $(A_t)$ in \Cref{eqn:decomp} and the three-point identity for the Bregman divergence given in \Cref{eq:3point}, it follows that, $\forall Z \in \sjoint$,
\begin{align} \label{eqn:a_t}
     (A_t) &\leq \frac{1}{\step} \Big( \bregdiv_\regfunc (Z||\mom_{t-1})-\bregdiv_\regfunc (\mom_t||\mom_{t-1})-\bregdiv_\regfunc (Z||\mom_t) \Big).
\end{align}

\subsection{Optimality Analysis of \texorpdfstring{$(C_t)$}{Lg}}
We will now upper bound $(C_t)$, similarly to how we upper bounded $(A_t)$. Recall from the definition of the OMMP algorithm in \Cref{sec:algo_mmp} that $\joint_t$ is the outcome of the proximal projection of \Cref{eqn:state_peg}, for a regularization function $\regfunc$. Therefore, $\joint_t$ can be expressed as
\begin{align} \label{eqn:joint_min}
    \joint_t = \arg\min_{Z\in\sjoint} \jopt_t (Z),
\end{align} 
where $\jopt_t(Z)$ is the proximal optimization term
\begin{align} 
    \jopt_t(Z)=
         \inner{\grad(\joint_{t-1})}{(Z-\mom_{t-1})} +\frac{1}{\step}\bregdiv_\regfunc (Z\|\mom_{t-1}).
\end{align} 
Note that $\mopt_t$ is differentiable, with gradient 
\begin{align} \label{eqn:joint_der}
    \nabla_{Z^\top}\jopt_t(Z)&=\grad(\joint_{t-1}) + \frac{1}{\step}
        \nabla_{Z^\top}\bregdiv_\regfunc (Z\|\mom_{t-1}).
\end{align}
Similarly as in the previous section, in the supplement (\Cref{app:proofs}) we further show that:
\vspace{-0.15in}
\begin{restatable}{proposition}{jointconv}\label{prop:joint_conv}%
   For a $\mu$-strongly convex regularization function $\regfunc$ with respect to the norm $\|\cdot\|$, $\jopt_t$ is $\frac{\mu}{\step}$-strongly convex with respect to the norm $\|\cdot\|$.
\end{restatable}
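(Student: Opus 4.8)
The plan is to exploit the fact that $\jopt_t$ has precisely the same structure as $\mopt_t$ from \Cref{prop:mom_conv}, differing only in the constant gradient matrix appearing in its linear term ($\grad(\joint_{t-1})$ in place of $\grad(\joint_t)$); hence the argument transfers essentially verbatim. Concretely, I would split $\jopt_t(Z)$ into its affine part $\ell(Z) := \inner{\grad(\joint_{t-1})}{(Z-\mom_{t-1})}$, which is linear in $Z$ since $\grad(\joint_{t-1})$ and $\mom_{t-1}$ are fixed, and its curved part $\tfrac{1}{\step}\bregdiv_\regfunc(Z\|\mom_{t-1})$. Strong convexity can only come from the latter, so the entire task reduces to controlling the curvature of the Bregman term.

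First I would use the definition of the Bregman divergence in \Cref{eqn:breg_div} to expand, for fixed $\mom_{t-1}$,
\begin{equation}
    \bregdiv_\regfunc(Z\|\mom_{t-1}) = \regfunc(Z) - \regfunc(\mom_{t-1}) - \tr[(Z-\mom_{t-1})^\dagger\partial\regfunc(\mom_{t-1})].
\end{equation}
Viewed as a function of $Z$, the second term is constant and the third is linear in $Z$, so this map is $\regfunc(Z)$ perturbed by an affine function. Since $\regfunc$ is $\mu$-strongly convex with respect to $\|\cdot\|$ by hypothesis, and adding an affine function leaves the quadratic lower bound of \Cref{def:strong_convex} unchanged, the map $Z \mapsto \bregdiv_\regfunc(Z\|\mom_{t-1})$ is itself $\mu$-strongly convex with respect to $\|\cdot\|$.

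Finally, I would invoke two elementary closure properties of strong convexity: multiplying a $\mu$-strongly convex function by the positive scalar $\tfrac{1}{\step}$ rescales its parameter to $\tfrac{\mu}{\step}$, and adding the affine term $\ell(Z)$ preserves that parameter. Combining these yields that $\jopt_t$ is $\tfrac{\mu}{\step}$-strongly convex, as claimed. I do not expect any genuine obstacle here; the only points requiring care are these two closure properties, both of which follow in one line each directly from the defining inequality in \Cref{def:strong_convex}. I would state them as short checks and conclude.
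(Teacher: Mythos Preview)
Your proposal is correct and mirrors the paper's own proof: both split $\jopt_t$ into its affine part and the scaled Bregman term, argue that the Bregman term inherits $\mu$-strong convexity from $\regfunc$ (becoming $\tfrac{\mu}{\step}$ after scaling), and note that the affine part leaves the modulus unchanged. The only cosmetic difference is that the paper packages the ``sum of a $0$-strongly convex and a $\tfrac{\mu}{\step}$-strongly convex function'' step as a separate proposition, whereas you invoke the closure properties directly.
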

\noindent Since $\jopt_t$ is strongly convex and differentiable, it satifies the First-Order Optimality Condition (\Cref{thm:firstopt}) for a feasible state solution $\joint_t$,
\begin{align} 
    \inner{\nabla_{\joint_t^\top} \jopt_t (\joint_t)}{(Z-\joint_t)}\geq 0, \:\: \forall Z \in \sjoint.
\end{align}
Plugging the gradient of \Cref{eqn:joint_der} into this expression gives, $\forall Z \in \sjoint$,
\begin{align}
    &\inner{ \Big(\grad(\joint_{t-1}) + 
    \frac{1}{\step}
    \nabla_{\joint_{t}^\top}\bregdiv_\regfunc (\joint_{t}||\mom_{t-1}) 
    \Big)}{(\joint_t-Z)}\leq 0 \\
    &\inner{ \grad(\joint_{t-1})}{ (\joint_t-Z)} \leq \frac{1}{\step}\inner{
    \nabla_{\joint_{t}^\top}\bregdiv_\regfunc (\joint_{t}||\mom_{t-1}) 
    }{(Z-\joint_t)}.
\end{align}
Since the expression holds for all $Z\in\sjoint$, we can choose to set $Z=\mom_t$, which implies
\begin{align}
    \inner{ \grad(\joint_{t-1})}{ (\joint_t-\mom_t)} \leq \frac{1}{\step}\inner{
    \nabla_{\joint_{t}^\top}\bregdiv_\regfunc (\joint_{t}||\mom_{t-1}) 
    }{(\mom_t-\joint_t)}.
\end{align}
From our definition of $(C_t)$ in \Cref{eqn:decomp} and the Three-Point Tdentity for the Bregman divergence, given in \Cref{eq:3point}, it follows that            \begin{align} \label{eqn:c_t}
      (C_t) \leq \frac{1}{\step} \Big( \bregdiv_\regfunc (\mom_t||\mom_{t-1})-\bregdiv_\regfunc (\joint_t||\mom_{t-1})-\bregdiv_\regfunc (\mom_t||\joint_t) \Big).
\end{align}

\subsection{An Upper Bound for \texorpdfstring{$(B_t)$}{Lg}}

By the Cauchy-Schwarz inequality for Frobenius norm, we can upper bound $(B_t)$ as
\begin{align} \label{eqn:b_prelim}
    (B_t) &= \inner{(F(\joint_t)-F(\joint_{t-1}))}{(\joint_t - \mom_t)}\leq \|F(\joint_t)-F(\joint_{t-1})\|_F  \|\joint_t - \mom_t\|_F.
\end{align}
or more generally, with respect to the norms $(\|\cdot\|, \|\cdot\|_*)$
\begin{align} \label{eqn:b_prelim_general}
    (B_t) &= \inner{(F(\joint_t)-F(\joint_{t-1}))}{(\joint_t - \mom_t)}\leq \|F(\joint_t)-F(\joint_{t-1})\|_*  \|\joint_t - \mom_t\|.
\end{align}
To further manipulate this 
upper bound for $(B_t)$, we leverage the ideas from Fenchel duality and convexity introduced in \Cref{sec:conv_dual}.

Following from \Cref{eqn:joint_min} and \Cref{eqn:mom_min}, respectively, $\joint_t$ is  the minimizer of $\jopt_t$ and $\mom_t$ is  the minimizer of $\mopt_t$. Re-expressing $\mopt_t$ in terms of $\jopt_t$, 
\begin{align} 
    \mopt_t  (U) = \jopt_t  (U)+\inner{\left(\grad(\joint_t)-\grad(\joint_{t-1})\right)}{(U-\mom_{t-1})}
\end{align}
implies that $\mom_t$ can be expressed in terms of a minimization of $\jopt_t$,
\begin{align} \label{eqn:min_momt}
    \mom_t = \arg\min_{U\in\sjoint}\left\{\jopt_t  (U)+\inner{\left(\grad(\joint_t)-\grad(\joint_{t-1})\right)}{U}\right\}.
\end{align}
Since we assumed that $\regfunc$ is $\mu$-strongly convex, by \Cref{prop:joint_conv}  $\jopt_t$ must be $\tfrac{\mu}{\step}$-strongly convex. Denoting the Fenchel conjugate of $\jopt_t$ as $\jopt_t^*$,
since $\jopt_t$  is strongly convex, Danskin's Theorem (\Cref{lemma:danskin}) states that $\nabla \jopt_t^*$ is well-defined, for all $V \in \sjoint$, as
\begin{align}
    \nabla \jopt^*_t (V) = \arg\min_{U\in\sjoint} \left\{ \jopt_t(U)-\inner{V}{U}\right\}.\label{eqn:min_psi_t}
\end{align}
In the case that $V=0$, \Cref{eqn:mom_defn} implies that
\begin{align}
    \nabla \jopt^*_t (0)=\arg\min_{U\in\sjoint} \jopt_t(U)=\joint_t.
\end{align}
Meanwhile, if $V=\grad (\joint_{t-1})-\grad (\joint_t)$, \Cref{eqn:min_momt} implies that
\begin{align}
    \nabla \jopt^*_t \left(\grad (\joint_{t-1})-\grad (\joint_t)\right)=\arg\min_{U\in\sjoint} \left\{ \jopt_t(U)+\inner{\left(\grad (\joint_t)-\grad (\joint_{t-1})\right)}{U}\right\}=\mom_t.
\end{align}
Therefore, 
\begin{align} \label{eqn:prelim_eq}
    \|\joint_t-\mom_t\|_F &= \| \nabla \jopt^*_t (0)-\nabla \jopt^*_t \left(\grad (\joint_{t-1})-\grad (\joint_t)\right)\|_F.
\end{align}
Furthermore, since $\jopt_t$ is $\frac{\mu}{\step}$-strongly convex with respect to the norm $\|\cdot\|$, Strong-Smooth Duality (\Cref{lemma:convex_smooth}) implies that $\jopt^*_t$ is $\frac{\step}{\mu}$-smooth with respect to the norm $\|\cdot\|_{*}$. By \Cref{def:smoothness-grad-definition}, this implies that $\nabla \jopt^*_t$ is $\frac{\step}{\mu}$-Lipschitz continuous with respect to $\|\cdot\|_{*}$,
\begin{align}
    \| \nabla \jopt^*_t (X)-\nabla \jopt^*_t \left(Y\right)\| =\| \nabla \jopt^*_t (X)-\nabla \jopt^*_t \left(Y\right)\|_{**}
    &\leq \frac{\step}{\mu} \|X-Y\|_{*}.
\end{align}
Plugging this and \Cref{eqn:prelim_eq} into \Cref{eqn:b_prelim} produces the final upper bound for $(B_t)$,
\begin{equation}
    (B_t) \leq \frac{\step}{\mu} \|\grad(\joint_t) - \grad(\joint_{t-1})\|_{*}^2.\label{eqn:b_t}
\end{equation}

\subsection{Average-Iterate Error Upper Bound}

Combining the average-iterate error upper bound of \Cref{eqn:error} and its decomposition, as expressed in \Cref{eqn:decomp}, we have that
\begin{align*}
    \textnormal{Error} (\overline{\joint}_T) & \leq \frac{1}{T} \sup_{Z\in\sjoint}  \sum_{t=1}^T \inner{ \grad (\joint_t)}{(\joint_t - Z)} =\frac{1}{T} \sup_{Z\in\sjoint}  \sum_{t=1}^T (A_t)+(B_t)+(C_t).
\end{align*}
Plugging in the upper bounds for $(A_t)$ from \Cref{eqn:a_t}, $(B_t)$ from \Cref{eqn:b_t}, and $(C_t)$ from \Cref{eqn:c_t} gives 
\begin{align*}
    \textnormal{Error} (\overline{\joint}_T) & \leq \frac{1}{T} \sup_{Z\in\sjoint}  \sum_{t=1}^T \bigg(\frac{1}{\step}\Big( \bregdiv_\regfunc (Z\|\mom_{t-1})-\left(\bregdiv_\regfunc (Z\|\mom_t) +\bregdiv_\regfunc (\joint_t\|\mom_{t-1})+\bregdiv_\regfunc (\mom_t\|\joint_t)\right) \Big) \\
    &\hspace{1.4in}+\frac{\step}{\mu} \|\grad (\joint_t)^\top-\grad(\joint_{t-1})^\top\|_*^2\bigg) \\
    &\leq \frac{1}{T} \sup_{Z\in\sjoint}  \underbrace{\frac{1}{\eta}\sum_{t=1}^T \big( \bregdiv_\regfunc (Z\|\mom_{t-1})- \bregdiv_\regfunc (Z\|\mom_{t})\big)}_{(I)} -  \frac{1}{\eta}\sum_{t=1}^T  \big(\bregdiv_\regfunc (\joint_t\|\mom_{t-1})+\bregdiv_\regfunc (\mom_t\|\joint_t) \big) \\
    &\hspace{1.4in}+\frac{\step}{\mu} \sum_{t=1}^T\|\grad (\joint_t)^\top-\grad(\joint_{t-1})^\top\|_*^2.
\end{align*}
Via telescoping sums, the first term $(I)$ can be reduced to
\begin{align} 
    (I)=\frac{1}{\eta}\sum_{t=1}^T \big( \bregdiv_\regfunc (Z\|\mom_{t-1})- \bregdiv_\regfunc (Z\|\mom_{t})\big) = \frac{1}{\eta}\Big( \bregdiv_\regfunc (Z\|\mom_{0})- \bregdiv_\regfunc (Z\|\mom_{T})\Big).
\end{align}
Additionally, since $\regfunc$ is $\mu$-strongly convex  with respect to $\|\cdot\|$,
\begin{align}
    \bregdiv_\regfunc (X\|Y)\geq \frac{\mu}{2}\|X-Y\|^2.
\end{align}
Therefore, we can re-express the error as
\begin{align*}
    \textnormal{Error} (\overline{\joint}_T) & \leq \frac{1}{T} \sup_{Z\in\sjoint} 
\frac{1}{\eta} \bregdiv_\regfunc (Z\|\mom_{0})
    \underbrace{- \frac{\mu}{2\step}\|Z-\mom_{T}\|^2+ -  \frac{\mu}{2\eta}\sum_{t=1}^T  \big(\|\mom_{t-1}-\joint_t\|^2+\|\mom_{t}-\joint_t\|^2 \big)}_{(II)} \\
    &\hspace{1.4in}+\underbrace{\frac{\step}{\mu} \sum_{t=1}^T\|\grad (\joint_t)^\top-\grad(\joint_{t-1})^\top\|_*^2}_{(III)},
\end{align*}
and proceed to upper bound terms $(II)$ and $(III)$.
We begin by expanding $(II)$ as
\begin{align*}
    (II) &=- \frac{\mu}{2\step}\|\mom_{T}-Z\|^2-\frac{\mu}{2\step}\sum_{t=1}^T  \big(\|\mom_{t-1}-\joint_t\|^2+\|\mom_{t}-\joint_t\|^2 \big) \nonumber \\
    &=\frac{\mu}{2\step} \bigg(\|\mom_{0}-\joint_0\|^2-\|\mom_{T}-\joint_T\|^2 - \|\mom_{T}-Z\|^2-\sum_{t=1}^T  \big(\|\mom_{t-1}-\joint_t\|^2+\|\mom_{t-1}-\joint_{t-1}\|^2 \big)\bigg)
\end{align*}
Since we initialized $\Phi_0=\Psi_0$, 
\begin{align}
    \|\mom_{0}-\joint_0\|^2 = 0.
\end{align}
Furthermore, since $\|\mom_{T}-Z\|^2 \geq 0$ for all $Z\in\sjoint$, including $Z=\joint_T$, 
\begin{align}
    (II) & \leq -\frac{\mu}{2\step} \sum_{t=1}^T  \big(\|\mom_{t-1}-\joint_t\|^2+\|\mom_{t-1}-\joint_{t-1}\|^2 \big).
\end{align}
Defining 
\begin{align}
    \Gamma:=\sum_{t=1}^T \left(  \|\joint_t -\mom_{t-1}\|^2 + \|\joint_{t-1} -\mom_{t-1}\|^2\right)
\end{align}
implies the following upper bound on $(II)$
\begin{align} \label{eqn:two}
    (II) & \leq -\frac{\mu}{2\step} \Gamma.
\end{align}

In order to bound term $(III)$, we will leverage an important property of finite-valued quantum zero-sum games. Namely, in \Cref{app:proofs}, we prove several properties of the $\grad$ operator, including the following lemma regarding Lipschitz continuity.
\begin{restatable}[Lipschitz Continuity of $\grad$]{lemma}{lipschitz}\label{lemma:lipschitz}%
{\color{black}
    For finite-valued quantum zero-sum games, $\grad$ is a Lipschitz-continuous operator, meaning there exists a Lipschitz constant $\cont\in\mathbb{R}$ such that
    \begin{align} \label{eqn:lipschitz}
        \|\grad(Z)-\grad(Z')\|_* \leq \cont \| Z-Z' \|, \:\: \forall \: Z, Z' \in \sjoint.
    \end{align}
    For the case of $(\|\cdot\|_F,\|\cdot\|_F)$ and $(\|\cdot\|_\infty,\|\cdot\|_1)$, $\cont_{F,F}=\mathcal{O}(2^d)$ while $\cont_{\infty,1}=\mathcal{O}(1)$.
}
\end{restatable}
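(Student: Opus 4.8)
The plan is to use the linearity of $\grad$ to reduce the statement to a single Schatten-norm estimate for the partial-trace maps that define each player's feedback, and then to reassemble the blocks using the direct-sum structure of $\sjoint$. Since $\grad_\pone$ is linear in Bob's state and $\grad_\ptwo$ is linear in Alice's state, writing $\Delta\pone = \pone - \pone'$ and $\Delta\ptwo = \ptwo - \ptwo'$ gives
\begin{equation}
    \grad(Z) - \grad(Z') = \big(\grad_\pone(\Delta\ptwo),\, \grad_\ptwo(\Delta\pone)\big),
\end{equation}
a block-diagonal element of $\sjoint$, where (using $\payob^\dagger = \payob$) $\grad_\pone(X) = \tr_\stwo[\payob(\Id_\sone\otimes X)]$ and $\grad_\ptwo(Y) = -\tr_\sone[\payob(Y\otimes\Id_\stwo)]$. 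Thus it suffices to bound these two linear maps in the relevant norm pairs and then combine.

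\textbf{Core estimate.} First I would record that $\|\payob\|_\infty \leq 1$: since $\payob = \sum_{\out\in\outcomes}\util(\out)\pout$ with $\util(\out)\in[-1,1]$ and $\{\pout\}_{\out}$ a POVM, for any unit vector $\ket{\psi}$ the numbers $\langle\psi|\pout|\psi\rangle$ form a probability distribution, so $\langle\psi|\payob|\psi\rangle$ is a convex combination of points of $[-1,1]$. Next, for unit vectors $\ket{\psi},\ket{\phi}$, the definition of the partial trace gives the matrix element $\langle\psi|\grad_\pone(X)|\phi\rangle = \tr[\payob(\ket{\phi}\bra{\psi}\otimes X)]$. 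Applying Hölder's inequality $|\tr[AB]| \leq \|A\|_\infty\|B\|_1$, the multiplicativity of the trace norm over tensor products, and $\|\ket{\phi}\bra{\psi}\|_1 = 1$ yields
\begin{equation}
    \|\grad_\pone(X)\|_\infty \;\leq\; \|\payob\|_\infty\,\|X\|_1 \;\leq\; \|X\|_1,
\end{equation}
and the identical argument gives $\|\grad_\ptwo(Y)\|_\infty \leq \|Y\|_1$. (Equivalently, expanding $\payob = \sum_{P,Q}c_{P,Q}(P\otimes Q)$ in the Pauli basis gives $\grad_\pone(X) = \sum_{P,Q}c_{P,Q}\tr[QX]\,P$, and Parseval together with $\sum_{P,Q}c_{P,Q}^2 = 2^{-2d}\|\payob\|_F^2 \leq 1$ reproduces the same block bounds.)

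\textbf{Assembling the two cases.} For the pair $(\|\cdot\|_\infty,\|\cdot\|_1)$, the operator norm of a direct sum equals the maximum of the blocks while the trace norm equals their sum, so
\begin{equation}
    \|\grad(Z)-\grad(Z')\|_\infty = \max\big(\|\grad_\pone(\Delta\ptwo)\|_\infty,\|\grad_\ptwo(\Delta\pone)\|_\infty\big) \leq \|\Delta\pone\|_1 + \|\Delta\ptwo\|_1 = \|Z-Z'\|_1,
\end{equation}
giving $\cont_{\infty,1}\leq 1 = \mathcal{O}(1)$. For the pair $(\|\cdot\|_F,\|\cdot\|_F)$, I would chain the norm equivalences on each $2^d\times 2^d$ block, i.e.\ $\|M\|_F \leq 2^{d/2}\|M\|_\infty$ on the output and $\|X\|_1 \leq 2^{d/2}\|X\|_F$ on the input, to obtain $\|\grad_\pone(X)\|_F \leq 2^{d/2}\|\grad_\pone(X)\|_\infty \leq 2^{d/2}\|X\|_1 \leq 2^d\|X\|_F$; summing the squared block bounds through the direct-sum Frobenius identity then yields $\cont_{F,F}\leq 2^d = \mathcal{O}(2^d)$.

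\textbf{Main obstacle.} The one nontrivial inequality is the Hölder/partial-trace estimate $\|\grad_\pone(X)\|_\infty \leq \|X\|_1$; everything else is bookkeeping. The delicate point is matching each primal/dual norm pair to the block and direct-sum structure and carefully tracking the two $2^{d/2}$ factors in the Frobenius norm-equivalences, since it is exactly these factors that produce the exponential gap between the Frobenius regularizer ($\cont_{F,F} = \mathcal{O}(2^d)$) and the entropic regularizer ($\cont_{\infty,1} = \mathcal{O}(1)$) underlying the dimension discussion of \Cref{sec:dim_dep}.
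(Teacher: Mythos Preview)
Your proof is correct and, in fact, cleaner than the paper's own argument, but it follows a genuinely different route.

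The paper proceeds by brute vectorization: it invokes the linearity of $\grad$ (their Lemma~8) to reduce to bounding $\sup_{Y\neq 0}\|\grad(Y)\|/\|Y\|$, then flattens $Y$ into a vector of length $|\sjoint|$, writes the linear map entrywise as $Y'_j = \sum_i c_{ij}Y_i$, and bounds everything in terms of $c^* = \max_{i,j}|c_{ij}|$ together with the ambient dimension $|\sjoint| = \Theta(4^d)$, obtaining $\cont_{F,F} \leq \sqrt{c^*|\sjoint|} = \mathcal{O}(2^d)$; the $(\|\cdot\|_\infty,\|\cdot\|_1)$ case is dispatched in one line by the same entrywise reasoning. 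Your approach instead exploits the operator structure: you first establish the sharp core estimate $\|\grad_\pone(X)\|_\infty \leq \|\payob\|_\infty\|X\|_1 \leq \|X\|_1$ via H\"older and the partial-trace identity, which immediately gives $\cont_{\infty,1}\leq 1$, and then recover the Frobenius constant by chaining Schatten norm equivalences on each $2^d$-dimensional block. What you gain is explicit constants (not just $\mathcal{O}$-bounds) and a transparent explanation of \emph{where} the $2^d$ factor enters---namely from the two $\sqrt{\mathrm{rank}}$ losses in passing between $\|\cdot\|_\infty$, $\|\cdot\|_F$, and $\|\cdot\|_1$---which is precisely the point the dimensionality discussion in \Cref{sec:dim_dep} relies on. The paper's argument is more elementary but obscures this structure and leaves the assertion ``$c^* = \Theta(1)$ for games in $[-1,1]$'' unjustified, whereas your POVM calculation $\|\payob\|_\infty\leq 1$ makes it explicit.
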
 
\noindent Leveraging the Lipschitz continuity of $\grad$, as given in \Cref{eqn:lipschitz}, and the fact that for an arbitrary norm $\|\cdot\|$
\begin{align}
    \|X-Y\|^2\leq(\|X\|+\|Y\|)^2\leq \|X\|^2+\|Y\|^2+ 2\|X\|\|Y\|\leq 2\|X\|^2+2\|Y\|^2 ,
\end{align}
$(III)$ can be bounded as
\begin{align} \label{eqn:three}
    (III) &= \frac{\step}{\mu} \sum_{t=1}^T\|\grad (\joint_t)-\grad(\joint_{t-1})\|_*^2 \leq \frac{\step}{\mu} \sum_{t=1}^T \cont^2 \|\joint_t-\joint_{t-1}\|^2 \nonumber\\
    &=\frac{\cont^2\step}{\mu} \sum_{t=1}^T  \|\joint_t-\mom_{t-1} + \mom_{t-1} -\joint_{t-1}\|^2 \nonumber\\
    &=\frac{\cont^2\step}{\mu} \sum_{t=1}^T  \|(\joint_t-\mom_{t-1}) - (\joint_{t-1}-\mom_{t-1})\|^2 \nonumber \\
    &\leq \frac{2\cont^2\step}{\mu} \Big( \sum_{t=1}^T \|\joint_t -\mom_{t-1}\|^2 + \|\joint_{t-1} -\mom_{t-1}\|^2\Big) \nonumber \\
    &=\frac{2\cont^2\step}{\mu} \: \Gamma.
\end{align}
Therefore, combining the bound on $(II)$ from \Cref{eqn:two}, with that of $(III)$, from \Cref{eqn:three}, results in
\begin{align}
    (II)+(III) &\leq \left(\frac{2\cont^2\step}{\mu}-\frac{\mu}{2\step}\right)\Gamma.
\end{align}
Note that there exist step sizes $\step$, such that $(II)+(III)\leq 0$. Namely,
\begin{align} \label{eqn:step_dep}
    \step \leq \frac{\mu}{2\cont} \:\:\implies\:\: (II)+(III) \leq 0.
\end{align}

Recall that our objective was to upper bound $\textnormal{Error} (\overline{\joint}_T)$, where
\begin{align}
    \textnormal{Error} (\overline{\joint}_T) & \leq \frac{1}{T} \sup_{Z\in\sjoint}  \frac{1}{\eta}\bregdiv_\regfunc(\mom_{0}\|Z) + (II)+(III),
\end{align}
If we choose the step size such that $\step \leq \frac{\mu}{2\cont}$, the upper bound simplifies to
\begin{equation}
    \textnormal{Error} (\overline{\joint}_T)  \leq \frac{1}{ T}\sup_{Z\in\sjoint}  \frac{1}{\step} \bregdiv_\regfunc(\mom_{0}\|Z).
    \label{eq:final-rate}
\end{equation}
Since $\bregdiv_\regfunc(\mom_{0}\|Z)$ is bounded above by the Bregman diameter of $\sjoint$, i.e.
\begin{align} \label{eqn:diam}
    \bregdiv_\regfunc(\mom_{0}\|Z) \leq \sup_{X,Y\in \sjoint} \bregdiv_\regfunc(X\|Y),
\end{align}
it has no time dependence, meaning
\begin{align*}
    \textnormal{Error} (\overline{\joint}_T) = \bigo_d \bigg(\frac{1}{T} \bigg),
\end{align*}
where $\bigo_d$ is used to refer to the convergence dependence on the spectraplex dimension $d$, governed by the Bregman diameter of $\sjoint$, as given in \Cref{eqn:diam}. Therefore, setting $\textnormal{Error} (\overline{\joint}_T)$ to a desired accuracy parameter, $\textnormal{Error} (\overline{\joint}_T)=\epsilon$, we obtain that the number of steps necessary for average-iterate convergence to an $\epsilon$-Nash equilibrium is 
\begin{align} \label{eqn:time_conv}
    T = \bigo_d \left(\frac{1}{\epsilon}\right).
\end{align}

\subsection{Regularizer-Induced Dimensionality Dependence} \label{sec:dim_dep}

Following from this analysis, for a fixed dimension $d$, the family of algorithms encompassed by OMMP can compute an \(\epsilon\)-Nash Equilibrium in \(\bigo_d(1/\epsilon)\) steps. Note, however, that the $\bigo_d(\cdot)$ notation hides the dimension-dependence of the result, incurred by the $\frac{1}{\eta}\bregdiv_\regfunc(\mom_{0}\|Z)$ term.
In order to achieve an explicit bound, we will demonstrate how different OMMP regularizers affect the Bregman divergence $\bregdiv_\regfunc(\mom_{0}\|Z)$ and step size $\eta$. In particular, we consider two regularizers---the Frobenius norm and the von Neumann entropy---which are quantum analogs of two of the most widely used classical regularizers (entropy and Euclidean norm). Note that OMMP instantiated with the Frobenius norm implements the \textit{Optimistic Matrix Extra-Gradient (OMEG) algorithm} (updates given in \Cref{fig:quantum_methods}), while the von Neumann entropy regularizer implements the \textit{Optimistic Matrix Multiplicative Weights Update (OMMWU) algorithm}.

We will first consider the Bregman divergence term, noting that in OMMP we initialize $\mom_{0}=\left(\frac{1}{2^n}\Id_\sone, \frac{1}{2^m}\Id_\stwo\right)$. We offer an explicit proof for the Frobenius norm regularizer:
\begin{proposition}
    The Frobenius norm regularizer induces a maximum Bregman divergence that is constant: \(\max_{X\in\mathcal{C}}\bregdiv_\regfunc(\mom_0||X)=2 = \bigo(1)\). 
\end{proposition}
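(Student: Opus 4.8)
The plan is to reduce the matrix optimization $\max_{X\in\sjoint}\bregdiv_\regfunc(\mom_0\|X)$ to a pair of scalar eigenvalue problems, one for each player. First I would invoke \Cref{thm:breg_frob}, which identifies the Frobenius Bregman divergence with the squared Frobenius distance, so that $\bregdiv_\regfunc(\mom_0\|X)=\tfrac{1}{2}\|\mom_0-X\|_F^2$. Writing $X=(\pone,\ptwo)\in\sjoint=\sone\oplus\stwo$ and using the direct-sum convention $\tr[(A_1,B_1)(A_2,B_2)]=\tr[A_1 A_2]+\tr[B_1 B_2]$ recorded in \Cref{sec:nash_eq}, the objective separates additively: $\|\mom_0-X\|_F^2=\|\tfrac{1}{2^n}\Id_\sone-\pone\|_F^2+\|\tfrac{1}{2^m}\Id_\stwo-\ptwo\|_F^2$. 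Hence the joint maximum is the sum of two independent per-player maxima, each taken over the respective spectraplex $\sone$ or $\stwo$.

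Next I would evaluate a single per-player term, say Alice's. The crucial observation is that the maximally mixed state $\tfrac{1}{D}\Id$ (with $D=2^n$) is a scalar multiple of the identity, so it commutes with every $\pone$; diagonalizing $\pone$ with eigenvalues $\lambda_1,\dots,\lambda_D\geq 0$ satisfying $\sum_i\lambda_i=1$ yields the closed form $\|\tfrac{1}{D}\Id-\pone\|_F^2=\sum_{i}(\tfrac{1}{D}-\lambda_i)^2=\tr[\pone^2]-\tfrac{1}{D}$, where the cross term collapses because $\tr[\pone]=1$. Thus maximizing the squared distance is equivalent to maximizing the purity $\tr[\pone^2]=\sum_i\lambda_i^2$ over the probability simplex of eigenvalues.

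The main step is this maximization, and it is also where the geometry of the spectraplex enters. Since the squared Frobenius norm is convex in $\pone$, its supremum over the compact convex set $\sone$ is attained at an extreme point, i.e.\ a pure state, at which $\tr[\pone^2]=1$. Consequently each per-player term equals $1-2^{-n}$ (resp.\ $1-2^{-m}$ for Bob), and summing Alice's and Bob's contributions gives $\max_{X\in\sjoint}\|\mom_0-X\|_F^2 = 2-2^{-n}-2^{-m}$, whose worst-case value is the dimension-independent constant $2$ (the $2^{-n},2^{-m}$ corrections vanish as the register sizes grow). The key payoff is precisely this absence of any $d$-dependence, so $\max_{X\in\sjoint}\bregdiv_\regfunc(\mom_0\|X)=\bigo(1)$.

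I expect the only genuine subtlety to be the passage from the matrix problem to the scalar eigenvalue problem: specifically, justifying that because $\tfrac{1}{D}\Id$ is scalar one may read off the spectrum of $\tfrac{1}{D}\Id-\pone$ directly from that of $\pone$, and identifying the extreme points of the spectraplex with the rank-one (pure) states, so that a convex objective is maximized there. Everything downstream of these two facts is the one-line purity identity $\|\tfrac{1}{D}\Id-\pone\|_F^2=\tr[\pone^2]-\tfrac{1}{D}$ together with the additive direct-sum split, and requires no further estimation.
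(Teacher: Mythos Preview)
Your argument is correct and yields the sharper exact value $\max_{X\in\sjoint}\|\mom_0-X\|_F^2=2-2^{-n}-2^{-m}$; the only slip is that you forget to reinsert the $\tfrac12$ from $\bregdiv_\regfunc(\mom_0\|X)=\tfrac12\|\mom_0-X\|_F^2$ at the very end, so the Bregman divergence itself is bounded by $1$, not $2$. This is harmless for the $\bigo(1)$ conclusion (and the ``$=2$'' in the proposition is already only an upper bound).

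The paper takes a shorter, looser route: it simply expands $\|\mom_0-X\|_F^2=\|\mom_0\|_F^2-2\tr(\mom_0^\dagger X)+\|X\|_F^2$, bounds the first two terms together by $2$, and bounds $\|X\|_F^2=\tr(\pone^2)+\tr(\ptwo^2)\leq 2$ using the purity bound for density matrices, arriving at $\bregdiv_\regfunc(\mom_0\|X)\leq\tfrac12(2+2)=2$. No diagonalization, no extreme-point analysis. Your spectral/extreme-point approach buys you the exact maximizers (pure states) and the tight constant, at the cost of a slightly longer argument; the paper's approach buys brevity at the cost of a factor-$2$ slack. Both are perfectly valid for establishing $\bigo(1)$.
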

\begin{proof}
    Using the derivation of the Bregman divergence for the Frobenius norm in \Cref{eqn:frob_dist} and the fact that $\forall X\in \sjoint$, $\tr(X)=\tr(\alpha)+\tr(\beta)\leq 1+1=2$,
    \begin{align*}
        \max_{X\in\mathcal{C}}\bregdiv_\regfunc(\mom_0||X) &=\max_{X\in\mathcal{C}}\tfrac{1}{2}||\mom_0-X||_F^2 = \tfrac{1}{2}\max_{X\in\mathcal{C}}\left(||\mom_0||_F^2-2\tr(\mom_0^\dagger X)+||X||_F^2\right) \\
        &\leq \tfrac{1}{2}\max_{X\in\mathcal{C}}\left(2+||X||_F^2\right) \leq \tfrac{1}{2}\left(2+2\right)=2.
    \end{align*}
\end{proof}
\\
In the case of the von Neumann entropy regularizer, as discussed in \cite{juditsky2011solving}, for a \((4^d-1)\)-dimensional spectraplex, the corresponding bound on the Bregman divergence is equal to the maximum quantum entropy for a $2d$-qubit system, and hence linear in $d$---more precisely, equal to $\log(4^d-1) = \bigo(d)$.

We now consider the step size. Following from \Cref{eqn:step_dep}, the optimal step size is dependent on $h$'s strong convexity parameter ($\mu_\regfunc$) and $\grad$'s Lipschitz constant ($\gamma_\grad$), as 
\begin{align}
    \eta=\Theta\left({\mu_\regfunc/\gamma_{\grad}}\right).
\end{align}
In scenarios utilizing the Frobenius regularizer, the modulus of strong convexity is 1, while \(\gamma_{\text{Lipsch}}\), the Lipschitz constant with respect to $\|\cdot\|_F$ of the gradient for games with bounded utilities in \([-1,1]\),
is of the order of \(\bigo(\sqrt{4^d})=\bigo(2^d)\). In contrast, for the von Neumann entropy, the modulus of strong convexity is 1 with respect to the Shatten 1-norm (the matrix analog of the \(\ell_1\)-norm) \cite{Yu2020Entropy}. Moreover, for games with bounded utilities, the relative Lipschitz constant with respect to the Shatten 1-norm $\|\cdot\|_1$ is of the order of \(\bigo(1)\).%

In light of  the preceding discussions, we conclude by stating the dimension-dependent iteration complexities of OMMWU and OMEG for achieving $\epsilon$-Nash equilibria of quantum zero-sum games:

\colommwu*

\begin{restatable}[OMEG Iteration Complexity]{corollary}{colomeg}\label{thm:main-result-l2}%
     In the $4^d$-dimensional spectraplex, OMEG computes average-iterate $\epsilon$-Nash equilibria  in $\bigo(2^d/{\epsilon})$ iterations.
\end{restatable}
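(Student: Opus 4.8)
The plan is to obtain OMEG's rate as a direct corollary of the general OMMP convergence guarantee (\Cref{thm:main-result}), specializing the dimension-dependent quantities to the Frobenius-norm regularizer. Recall that OMEG is precisely OMMP instantiated with $\regfunc(X)=\frac{1}{2}\|X\|_F^2$, so every step of the convergence analysis in \Cref{sec:convergence} applies verbatim; what remains is to track the constants hidden in the $\bigo_d(\cdot)$ notation by substituting the Frobenius-specific parameters into the algorithm's round count.

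First I would invoke the key inequality \Cref{eq:final-rate}: for any step size satisfying $\step \leq \mu_\regfunc/(2\cont)$,
\[
    \textnormal{Error}(\overline{\joint}_T) \leq \frac{1}{T\step}\sup_{Z\in\sjoint}\bregdiv_\regfunc(\mom_0\|Z) \leq \frac{\mathcal{D}_\regfunc}{T\step}.
\]
Setting the right-hand side equal to $\epsilon$ yields the iteration count $N = \lceil \mathcal{D}_\regfunc/(\step\,\epsilon)\rceil$, exactly as prescribed in \Cref{alg:SCMMP}. It therefore suffices to determine $\mathcal{D}_\regfunc$, $\mu_\regfunc$, and $\cont$ for the Frobenius regularizer and substitute the largest admissible step size $\step = \Theta(\mu_\regfunc/\cont)$.

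Next I would plug in the three Frobenius-specific constants, each already established earlier. By the Proposition on the Frobenius Bregman divergence, $\sup_{Z\in\sjoint}\bregdiv_\regfunc(\mom_0\|Z)=2$ and hence $\mathcal{D}_\regfunc = \bigo(1)$. The squared Frobenius norm is $1$-strongly convex with respect to $\|\cdot\|_F$, so $\mu_\regfunc = 1$. Finally, the $(\|\cdot\|_F,\|\cdot\|_F)$ branch of \Cref{lemma:lipschitz} gives $\cont_{F,F} = \bigo(2^d)$. Consequently the admissible step size is $\step = \Theta(1/2^d)$, and substituting into $N = \mathcal{D}_\regfunc/(\step\,\epsilon)$ gives
\[
    N = \frac{\bigo(1)}{\Theta(1/2^d)\cdot\epsilon} = \bigo\!\left(\frac{2^d}{\epsilon}\right),
\]
which is the claimed complexity.

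The calculation itself is routine; the substantive content lies entirely in the constants being imported, so the main obstacle is confirming that the Frobenius-norm Lipschitz constant genuinely scales as $2^d$ rather than being improvable — this is the crux of \Cref{lemma:lipschitz} and is where the Frobenius geometry pays its exponential price relative to the von Neumann entropy (which achieves $\cont_{\infty,1}=\bigo(1)$ and hence the superior $\bigo(d/\epsilon)$ rate of OMMWU in \Cref{thm:main-result-entroy}). I would therefore emphasize that the $2^d$ factor is intrinsic to the self-dual Frobenius norm: because $\|\cdot\|_F$ is its own dual, the smoothness estimate cannot exploit the trace/low-rank structure of the payoff operator and must instead pass through the full ambient dimension $4^d$ via its square root.
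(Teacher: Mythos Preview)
Your proposal is correct and mirrors the paper's own derivation in \Cref{sec:dim_dep}: instantiate the general bound \Cref{eq:final-rate} with the Frobenius regularizer, use $\mathcal{D}_\regfunc=\bigo(1)$, $\mu_\regfunc=1$, and $\cont_{F,F}=\bigo(2^d)$ from \Cref{lemma:lipschitz}, and read off $N=\bigo(2^d/\epsilon)$. The closing editorial remark about self-duality goes slightly beyond what the paper states explicitly, but the substantive argument is the same.
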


\noindent Not only does OMMWU match the dimension-dependece of MMWU, but it also logarithmically outperform the dimension dependence of OMEG. Thus, between the Frobenius norm (OMEG) and von Neumann entropy (OMMWU), the latter is the clear choice for efficient implementation of the OMMP method for quantum zero-sum games played in the spectraplex.

\section{Experiments}

In the last section, we proved that, for worst-case quantum zero-sum games, OMMWU obtains a quadratic speedup relative to MMWU. In practice, however, often more important than the worst-case performance is the average-case performance of an algorithm. Thus, in this section we explore the relative average-case performance of MMWU and OMMWU. In particular, we ran several numerical experiments involving randomly generated small-scale quantum zero-sum games to assess the relative average-case performance of these algorithms. The results of these experiments are reported in \Cref{fig:experimental_results} and \Cref{fig:experimental_results_two}. The code and experimental data is provided at \url{https://github.com/FranciscaVasconcelos/qzsg}. For the reported runtimes, note that the experiments were run on a standard laptop (Apple M1 Pro).

\begin{figure}[t!]
    \centering
    \includegraphics[width=\linewidth]{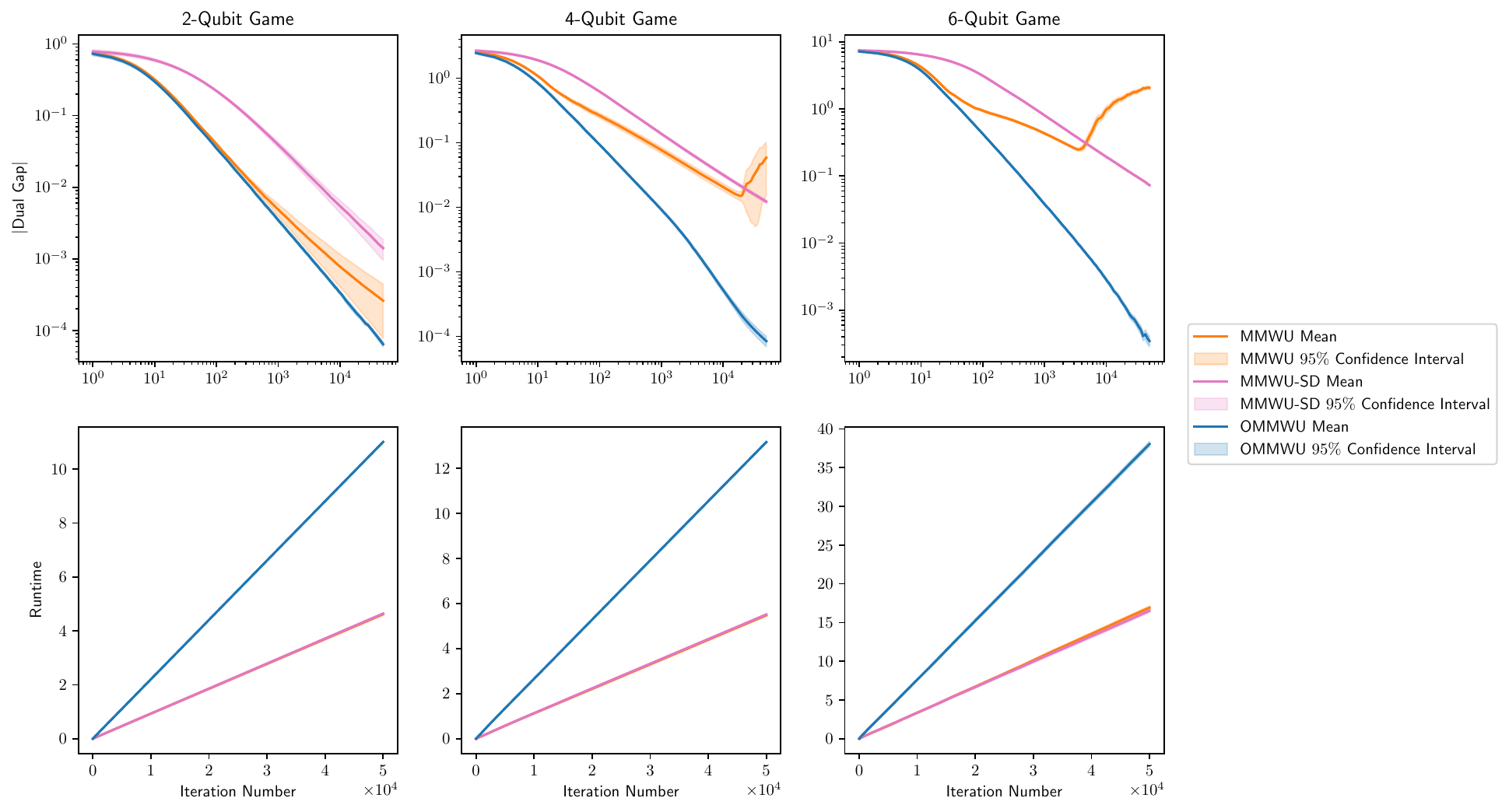}
    \caption{\textbf{\sffamily Experimental Results I.}  An assessment of the relative average-case performance of MMWU, MMWU-SD (MMWU with step-size decay), and OMMWU for quantum zero-sum games with a single step size. Experiments are run in the 2-, 4-, and 6-qubit settings, where Alice and Bob each play with 1-, 2-, and 3-qubits, respectively. In each setting, for 50 randomly generated games, the algorithms are run for 50,000 iterations. The figures report the absolute value of the dual gap and runtime as a function of iteration number. Although the OMMWU runtime is roughly double that of MMWU and MMWU-SD, it achieves a roughly quadratic improvement in convergence for average-case games involving 4 or more qubits. OMMWU is also more numerically stable than MMWU.}
    \label{fig:experimental_results}
\end{figure}

Two distinct experiments were conducted, but, in both, the algorithms were run in 2-, 4-, and 6-qubit quantum zero-sum game settings.  For each of these game sizes, Alice and Bob play with 1-, 2-, and 3-qubits, respectively. Furthermore, for each of these settings the quantum zero-sum game payoff observables are randomly generated, with full-rank POVMs.

In the first experiment (results reported in \Cref{fig:experimental_results}), MMWU and OMMWU were run for 50,000 iterations for each of 50 randomly generated games. However, we found MMWU to be numerically unstable for larger iteration numbers, so we also implemented MMWU with a $1 / \sqrt{t+1}$ step-size decay multiplicative factor, which we refer to as MMWU-SD.\footnote{Step-size decay helps prevent error blow-ups as the number of iterations increases.} For all three algorithms (MMWU, MMWU-SD, and OMMWU), \Cref{fig:experimental_results} reports the mean value and $95\%$ confidence intervals across all 50 games for the absolute value of the dual gap and the runtime as a function of iteration number. As expected, the runtime of OMMWU is roughly twice that of MMWU and MMWU-SD. However, for the larger game sizes (in the 4- and 6-qubit game settings), OMMWU appears to obtain an approximate quadratic speedup, even in this average-case setting. Note that OMMWU is also much more stable numerically than MMWU. Therefore, even though each iteration of OMMWU has a longer runtime, its worst-case guarantees and apparent average-case convergence performance indicate that even for fairly small quantum zero-sum games, OMMWU can converge to high-precision Nash equilibria much faster than MMWU.

\begin{figure}[t!]
    \centering
    \includegraphics[width=\linewidth]{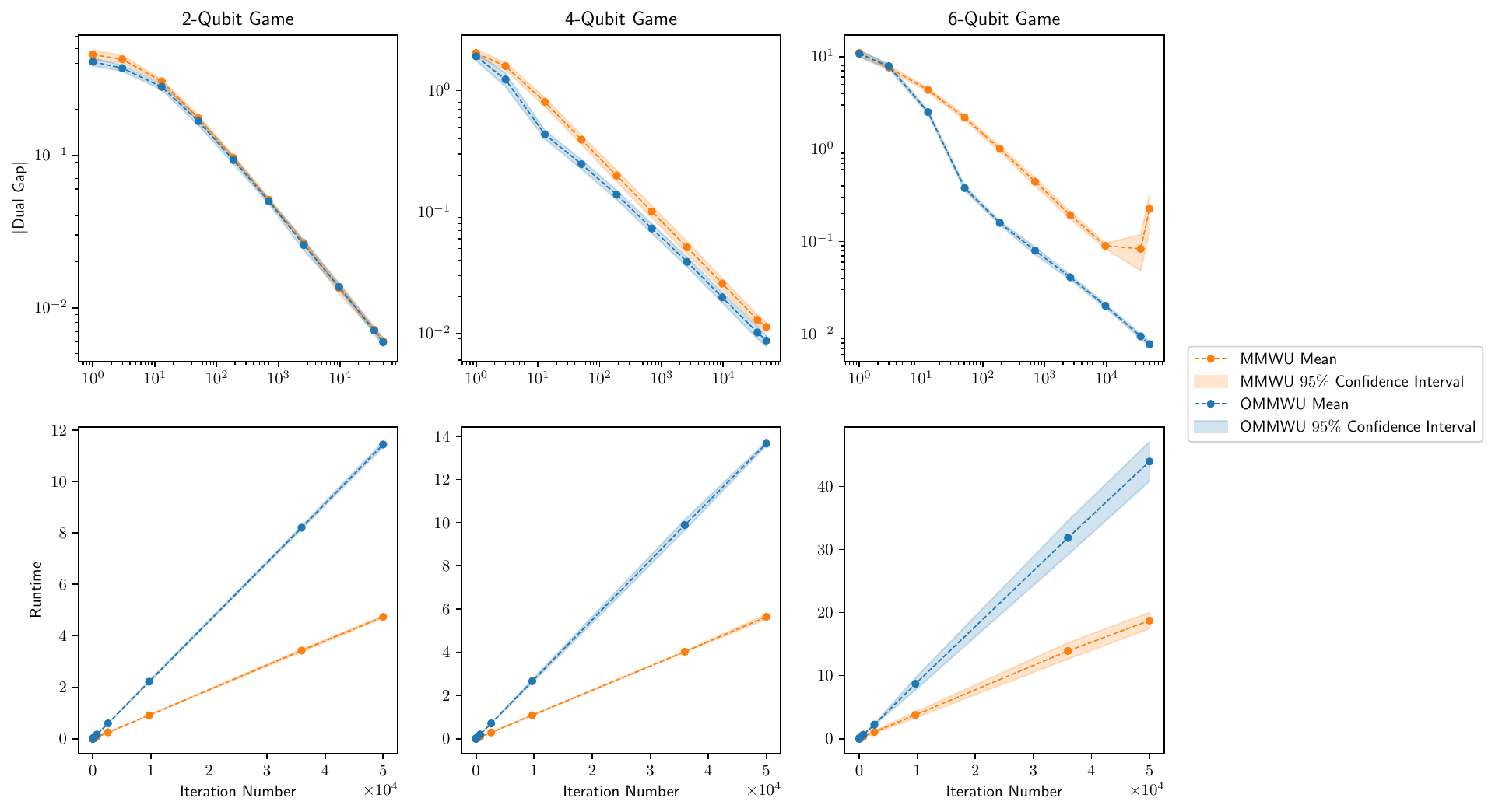}
    \caption{\textbf{\sffamily Experimental Results II.}  An assessment of the relative average-case performance of MMWU and OMMWU for quantum zero-sum games, with step size determined by iteration number. Experiments are run in the 2-, 4-, and 6-qubit settings, where Alice and Bob each play with 1-, 2-, and 3-qubits, respectively. In each setting, the algorithms are run for ten iteration numbers $t\in\{1,3,...,49,999\}$ for 10 randomly generated quantum zero-sum games. For each run, the step size of both MMWU and OMMWU are set to exactly $1/ \sqrt{t}$. The figures report the absolute value of the dual gap and runtime for each iteration $t$ (note that the dotted lines are interpolations). Although the runtime of OMMWU is roughly double that of MMWU and MMWU-SD, it achieves a notable improvement in convergence for average-case games of higher dimensions (the gains are roughly quadratic for 6-qubit games).}
    \label{fig:experimental_results_two}
\end{figure}

A potential concern with the first experiment is that \Cref{fig:experimental_results} reports the MMWU time average $\bar{t}$ for runs of $t$ iterations with a \emph{constant} step size, but the theoretical optimization guarantee of MMWU is a time average of $\bar{t}$ for runs of $t$ iterations with an $O(1/t)$ step size. In particular, MMWU concerns \emph{one} specific output with a predetermined step size, whereas OMMWU is an anytime guarantee, valid for any constant step size below a certain value. The standard workaround in the literature is to run MMWU with a variable step size, which we do with our implementation of MMWU-SD. To see why this is a valid workaround, note that the difference between a variable $1/\sqrt{t}$ policy and a constant $1/\sqrt{t}$ policy at the output of the algorithm is $O(\log t)$ and, thus, negligible. However, the $1/\sqrt{t}$ step size for MMWU provides an anytime guarantee which enables comparison with other anytime methods, such as OMMWU. 

We also addressed the potential concern regarding step sizes by running a second experiment in which, for each of the game sizes (2-, 4-, and 6- qubits), MMWU and OMMWU are run for 10 randomly generated games for each of the ten iteration numbers 
\begin{align*}
    t\in\{1,~3,~13,~51,~189,~703,~2,610,~9,687,~35,949,~49,999\},
\end{align*}
with a step size of exactly $1/\sqrt{t}$ (note that no hyperparameter tuning was performed). 
(results reported in \Cref{fig:experimental_results_two}).
\Cref{fig:experimental_results_two} reports the mean values of the duality gap of the final state reached after all $t$ iterations and the corresponding runtime as dots (note that the dashed lines are solely interpolating between the ten iteration values). We also report 95\% confidence intervals. The runtime of OMMWU is still double that of MMWU and there is not much difference from MMWU for the 2-qubit game. However, for 4-qubit games OMMWU already begins to outperform MMWU and for 6-qubit games there is nearly an order of magnitude improvement in accuracy relative to MMWU despite the constant increase in runtime. Also note that OMMWU is numerically stable, whereas MMWU becomes numerically unstable for larger iteration numbers. Overall, these experiments offer convincing evidence that OMMWU outperforms MMWU even for average-case games.

\section{Discussion}
{In this work, we have aimed to broaden our understanding of solution methods for quantum zero-sum game algorithms. We demonstrated that the best existing method for finding $\epsilon$-Nash equilibria for quantum zero-sum games---Jain and Watrous' Matrix Multiplicative Weight Updates (MMWU) algorithm \cite{jain2009parallel}---fits within the framework as a particular instantiation of a general Matrix Dual Averaging (MDA) method. To improve upon MMWU, we leveraged a gradient-based view, which enabled us to draw from the large classical game theory and optimization literature for solving variational inequalities of monotone, Lipschitz operators. Making us of this gradient-based perspective, proximal steps, and optimistic updates we established convergence guarantees for a general Optimistic Matrix Mirror Prox (OMMP) method. When instantiated with the von Neumann entropy regularizer, OMMP implements the Optimistic Matrix Multiplicative Weights Update (OMMWU) algorithm, which we proved to have dimensionality dependence equivalent to that of MMWU. However, OMMWU establishes a new benchmark in convergence rate, achieving a quadratic speed-up relative MMWU, while still requiring only one gradient call per iteration. }

Given these contributions several challenges remain and several interesting open problems arise:

\begin{enumerate}
\item \textbf{\textsf{Alternative Methods}}: In this work we primarily compared MMWU and methods related to OMMWU. It would be interesting to explore and compare other semi-definite programming methods for solving for approximate Nash-equilibria.

\item \textbf{\textsf{Dimensionality of Quantum Games}}: Our work focuses  on finite quantum zero-sum games. Extending the applicability of our algorithms to games in infinite-dimensional Hilbert spaces remains an open challenge that could have theoretical applications to operator analysis and practical implications to derive new optimal transport techniques.
\item \textbf{\textsf{Robustness to Noise}}: Quantum systems are inherently prone to noise. Important for practical implementations is how our method performs with different kind of feedback under noisy measurement conditions.
\item \textbf{\textsf{Computational Scaling Challenges}}: Our proposal requires knowledge of the entire system density matrix in each iteration, which is computationally prohibitive for large system sizes. As pointed out by \cite{jain2009parallel}, $\textsf{QMA} \subseteq \textsf{QRG}(1)$, which implies that the problem we are studying, as currently formulated, is in fact extremely hard in the worst case. Nevertheless, it would be valuable to explore more efficient classical representations of the system state or propose entirely quantum protocols/algorithms which can eliminate the need to classically learn/store the system state. 
\item \textbf{\textsf{Going Beyond 2-Player Zero-Sum}}: Early forays into quantum potential games~\cite{lin2023quantum} and quantum zero-sum polymatrix games~\cite{lin2023noregret} are suggestive of possibilities for further research with novel classes of quantum games and solution concepts. 
\item \textbf{\textsf{Entangled Games}}: In this work, we only consider quantum zero-sum games in which the players' states are unentangled. However, it would be interesting to explore the effects of entanglement between the players' states.
\item \textbf{\textsf{No-Regret Dynamics}}: Since OMMWU and MMP are no-regret frameworks, it would be interesting to explore generalizations of classical game-theoretic solution concepts, such as correlated equilibria and coarse correlated equilibria, for these algorithms. For example, such ideas have recently been explored in the quantum setting by \cite{lin2023noregret, zhang2012quantum,wei2013full}. In particular, \cite{lin2023noregret} established the existence of entangled (i.e., non-separable) QCCEs which cannot be
approached by any no-regret algorithm.

\item \textbf{\textsf{Tightness of Convergence Rates}}: Our first conjecture posits that the \(O(1/\epsilon)\) convergence rate of 1-MMP is tight, but empirical or theoretical confirmation remains elusive. Establishing the tightness would clarify the limitations of our method.
\end{enumerate}

\mainconj*

With respect to \Cref{conj:mainconj}, in classical game theory, there are two well-established algorithms which achieve a convergence rate of $\log(1/\epsilon)$, thereby surpassing the $O(1/\epsilon)$ of our OMMP analysis. However, it is unclear, for reasons elaborated below, whether the techniques of these approaches are applicable to quantum spectraplexes. Therefore, it remains an \textit{open question} whether a $\log(1/\epsilon)$ convergence rate can be achieved or if there is some inherent limit in the quantum setting which prohibits such performance.

The first such classical approach (an older approach in the literature) leverages the \textit{duality of linear programming (LP)} by transforming the computation of a Nash equilibrium into the dual of an LP. Consequently, the problem can be efficiently solved using the ellipsoid method or an interior-point method. However, both techniques exhibit strong dependence on the \textit{dimensionality of the game}. Attempting to apply the same technique to a corresponding semidefinite program (SDP) would require verifying multiple properties to enable such a reduction, and the solution would likely be feasible only for very small quantum games.

A second (more recent) classical approach uses \textit{gradient-based algorithms}, which are known to be \textit{dimension-independent} (as long as the game has a succinct representation and access to a feedback oracle). This category includes our work. Specifically, it has been shown that \textit{optimistic gradient descent-ascent} can achieve a $\log(1/\epsilon)$ convergence rate in constrained saddle-point optimization problems when the problem is restricted to polytopes such as the simplex. That is a surprising result since, in general, optimistic methods have a minimax lower bound of order $O(1/\epsilon^2)$ \cite{golowich2020tight}. However, upon closely examining the proof, it becomes evident that the result hinges on the fact that a Nash equilibrium in a \textit{classical zero-sum game} can be expressed as a finite sum of the normal vectors of the faces of the simplex polytope, and that pure Nash equilibria are finite in number. Unfortunately, these properties do not hold trivially in quantum games. Specifically, \textit{pure Nash equilibria} in the quantum setting correspond to \textit{density matrices of rank one}, which are not only infinite but also \textit{uncountable}. Recent work \cite{lou2020} also establishes  explicit examples of constrained sets where \textit{optimistic algorithms} fail to achieve better than $1/\sqrt{\epsilon}$ convergence. However, the connection between these results and the \textit{spectraplex} is not immediately clear.

\section{Acknowledgements}

Part of this work was done while the authors were visiting the Simons Institute for the Theory of Computing and the Institute for Mathematical Sciences at the National University of Singapore. We thank Rahul Jain for useful discussions.
 F.V. is supported by NSF grant DGE-2146752 and the Paul \& Daisy Soros Fellowship for New Americans. E.V. is grateful for financial support by the Post-Doctoral FODSI-Simons Fellowship.  P.M. is a member of the Archimedes Unit---Athena RC---Department of Mathematics, National \& Kapodistrian University of Athens, and gratefully acknowledges financial support by the French National Research Agency (ANR) in the framework of the ``Investissements d’avenir'' program (ANR-15- IDEX-02), the LabEx PERSYVAL (ANR-11-LABX-0025-01), MIAI@Grenoble Alpes (ANR-19- P3IA-0003), and project MIS 5154714 of the National Recovery and Resilience Plan Greece 2.0 funded by the European Union under the NextGenerationEU Program.
 G.P. would like to acknowledge  National Research Foundation, Singapore and DSO National Laboratories under its AI Singapore Program (AISG Award No: AISG2-RP-2020-016), NRF 2018 Fellowship NRF-NRFF2018- 07, NRF2019-NRF-ANR095 ALIAS grant, grant PIESGP-AI-2020-01, AME Programmatic Fund (Grant No.A20H6b0151) from the Agency for Science, Technology and Research (A*STAR) and Provost’s Chair Professorship grant RGEPPV2101.
M.J. is  partially supported by the Vannevar Bush Faculty Fellowship program
under grant number N00014-21-1-2941 and by the European Union (ERC-2022-SYG-OCEAN-101071601).

\newpage
\bibliographystyle{quantum}
\bibliography{quantum}

\newpage
\appendix
\section{Complex Matrix Differentiation and Gradients} \label{appendix:complex_diff}

In this appendix section, we review complex matrix differentiation and gradients - an invaluable tool for framing our quantum game feedback and analyzing our algorithm. 
Specifically, we will show why it is natural to define the gradient of a real-valued function $u:\mathbb{C}^d \rightarrow \mathbb{R}$ with respect to complex matrix $Z\in\mathbb{C}^d$  as
\begin{equation}
    \nabla_{Z^\top} u(Z)
\end{equation}
and how this relates to a directional derivative. Concretely, our goal is to compute the directional derivative of $u(Z(t))$ at $Z$ in the direction of matrix $dZ/dt$.

In order to derive this derivative, we will start by introducing a few important properties of $Z$ and $u$.
To begin, note that $Z$ is a Hermitian matrix of the form 
\begin{align} \label{eqn:defz}
    Z=X+iY,
\end{align}
where $X$ is a real symmetric matrix and $Y$ is a real skew-symmetric matrix. Thus,  can use the chain rule to derive $du/dt$:
\begin{align} \label{eqn:dudt}
    \frac{du}{dt}= \sum_{i,j} \frac{\partial u}{\partial X_{ij}} \frac{dX_{ij}}{dt} + \frac{\partial u}{\partial Y_{ij}} \frac{dY_{ij}}{dt}.
\end{align}
From the definition of the Wirtinger derivative, we also have that
\begin{align} \label{eqn:wirt}
    \frac{\partial u}{\partial Z_{ij}} = \frac{\partial u}{\partial X_{ij}} - i\frac{\partial u}{\partial Y_{ij}}
\end{align}
Finally, we note the following property of symmetric and skew-symmetric matrices:
\begin{proposition} \label{prop:trsym}
    For symmetric matrix $A$ and skew-symmetric matrix $B$, 
    \begin{align}
    \tr(AB)=\tr(A^\top B) = \sum_{i,j} A_{ij} B_{ij} = 0.
\end{align}
\end{proposition}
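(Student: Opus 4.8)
The plan is to verify the three claimed equalities one at a time, since each is elementary once the defining symmetries are invoked. The first equality, $\tr(AB) = \tr(A^\top B)$, is immediate: by hypothesis $A$ is symmetric, so $A = A^\top$, and the two expressions are literally identical.

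For the second equality, I would expand the trace in components. Using the standard identity $\tr(MN) = \sum_{i,j} M_{ij} N_{ji}$, I write $\tr(A^\top B) = \sum_{i,j} (A^\top)_{ij} B_{ji} = \sum_{i,j} A_{ji} B_{ji}$, and then relabel the dummy summation indices $i \leftrightarrow j$ to obtain $\tr(A^\top B) = \sum_{i,j} A_{ij} B_{ij}$. The only thing requiring care here is the index bookkeeping in the trace-of-a-product formula.

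The substantive (though still short) step is the final equality $\sum_{i,j} A_{ij} B_{ij} = 0$, for which I would use an antisymmetry/pairing argument. Let $S = \sum_{i,j} A_{ij} B_{ij}$; swapping the names of the summation indices gives $S = \sum_{i,j} A_{ji} B_{ji}$. Invoking the symmetry of $A$ (so $A_{ji} = A_{ij}$) and the skew-symmetry of $B$ (so $B_{ji} = -B_{ij}$) yields $S = -\sum_{i,j} A_{ij} B_{ij} = -S$, whence $2S = 0$ and $S = 0$. Equivalently, one could argue purely at the level of the trace via cyclicity: $\tr(AB) = \tr\big((AB)^\top\big) = \tr(B^\top A^\top) = -\tr(BA) = -\tr(AB)$, so $\tr(AB) = 0$.

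There is no genuine obstacle here, as this is the standard fact that the Hilbert--Schmidt pairing of a symmetric and a skew-symmetric matrix vanishes. The only point demanding a modicum of care is to ensure that $A$ and $B$ are treated as \emph{real} matrices---as guaranteed by the decomposition $Z = X + iY$ with $X$ real symmetric and $Y$ real skew-symmetric in \Cref{eqn:defz}---so that $\tr(M) = \tr(M^\top)$ holds without conjugation and both the component and the cyclicity arguments go through cleanly.
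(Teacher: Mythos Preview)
Your proposal is correct, and the alternative ``cyclicity'' argument you give at the end ($\tr(AB) = \tr(B^\top A^\top) = -\tr(BA) = -\tr(AB)$) is exactly the paper's one-line proof. Your primary component-wise argument is a slightly more explicit unpacking of the same idea and is equally valid.
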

\begin{proof} $\tr(AB) = \tr(A^\top B) = \tr(A B^\top)=-\tr(AB) \implies \tr(AB) =0$
\end{proof}
\vspace{0.1in}
Since $X$ is symmetric, $\partial u/\partial X$ and $dX/dt$ are symmetric. Similarly, since $Y$ is skew-symmetric, $\partial u/\partial Y$ and $dY/dt$ are symmetric. Therefore, by \Cref{prop:trsym},
\begin{align}
    \sum_{i,j} \frac{\partial u}{\partial X_{ij}}\frac{dY_{ij}}{dt}&=\tr\left(\frac{\partial u}{\partial X}\frac{dY}{dt}\right)=0 \label{eqn:canc1}\\
    \sum_{i,j} \frac{\partial u}{\partial Y_{ij}}\frac{dX_{ij}}{dt}&=\tr\left(\frac{\partial u}{\partial Y}\frac{dX}{dt}\right)=0 \label{eqn:canc2}
\end{align}

Now, we can combine the previous properties to obtain $du/dt$ expressed in terms of the complex matrix $Z$. Starting from \Cref{eqn:dudt}, then leveraging \Cref{eqn:canc1} and \Cref{eqn:canc2}, we get that
\begin{align} 
    \frac{du}{dt}&= \sum_{i,j} \frac{\partial u}{\partial X_{ij}} \frac{dX_{ij}}{dt} + \frac{\partial u}{\partial Y_{ij}} \frac{dY_{ij}}{dt} \\
    &= \sum_{i,j} \left(\frac{\partial u}{\partial X_{ij}} \frac{dX_{ij}}{dt} + \frac{\partial u}{\partial Y_{ij}} \frac{dY_{ij}}{dt}\right)+i\:\tr\left(\frac{\partial u}{\partial X}\frac{dY}{dt}\right)-i\:\tr\left(\frac{\partial u}{\partial Y}\frac{dX}{dt}\right) \\
    &= \sum_{i,j} \frac{\partial u}{\partial X_{ij}} \frac{dX_{ij}}{dt} + i \frac{\partial u}{\partial X_{ij}}\frac{dY_{ij}}{dt}-i \frac{\partial u}{\partial Y_{ij}}\frac{dX_{ij}}{dt} +\frac{\partial u}{\partial Y_{ij}} \frac{dY_{ij}}{dt}\\
    &= \sum_{i,j} \left(\frac{\partial u}{\partial X_{ij}}-i \frac{\partial u}{\partial Y_{ij}}\right) \left(\frac{dX_{ij}}{dt}+i\frac{dY_{ij}}{dt}\right). 
\end{align}
By \Cref{eqn:defz} and \Cref{eqn:wirt}, this achieves the desired directional derivative, which is expressed in terms of the gradient $\nabla_{Z^\top}u$:
\begin{align}
    \frac{du}{dt}&=\sum_{i,j} \left(\frac{\partial u}{\partial X_{ij}}-i \frac{\partial u}{\partial Y_{ij}}\right) \left(\frac{dX_{ij}}{dt}+i\frac{dY_{ij}}{dt}\right)=\sum_{i,j}\frac{\partial u}{\partial Z_{ij}} \frac{dZ_{ij}}{dt}\\
    &=\tr\left(\left(\frac{\partial u}{\partial Z}\right)^\top \frac{dZ}{dt}\right)=\tr\left(\frac{\partial u}{\partial Z^\top} \frac{dZ}{dt}\right)=\tr\left(\nabla_{Z^\top}u \cdot  \frac{dZ}{dt}\right).
\end{align}

\section{Mirror \& Proximal Maps} \label{sec:mirror_prox_append} 

In this appendix we offer further intuition for definitions and provide intuitive proofs of key lemmas regarding the mirror and proximal maps/steps introduced in \Cref{sec:mirror_prox_step}.

\mirrorproject*
\begin{proof}
Since $\mathcal{H}^d$ is the set of all complex matrices, this is an unconstrained minimization problem. We can thus solve for the maximum by setting the gradient of the minimization term to zero, as 
\begin{equation}
    0 =\nabla_{P^\top} \left(\tr[D^\dagger P]-\regfunc (P)\right) = D - \partial \regfunc (P),
\end{equation}
which implies the solution
\begin{equation}
    \partial \regfunc (P) = D \:\: \implies \:\: P = \partial^{-1} \regfunc (D).
\end{equation}
\end{proof}

\mirrorprox*
\begin{proof}
    The constrained minimization problem has Lagrangian 
    \begin{equation}
        \mathcal{L} = \tr[\nabla f(X_t)^\dagger X_{t+1}]- \frac{1}{\step}\left(\regfunc (X_{t+1}) - \regfunc (X_t) -\tr[\nabla \regfunc (X_t)^\dagger (X_{t+1} - X_t)]\right) - \lambda (\tr[X_{t+1}-1]),
    \end{equation}
    with gradients
    \begin{equation}
        \begin{cases}
            \nabla_{X_{t+1}^\top} \mathcal{L} &=  \nabla f(X_t) - \frac{1}{\step}\left( \nabla \regfunc (X_{t+1}) - \nabla \regfunc (X_t) \right) -\lambda\Id \\
            \nabla_{\lambda} \mathcal{L} &= \tr[X_{t+1}]-1.
        \end{cases}.
    \end{equation}
    Setting both of these gradients to zero gives us the equations
    \begin{equation}
        \begin{cases}
            \partial \regfunc (X_{t+1}) &= \partial \regfunc (X_t) + \step \nabla f(X_t) -  \step\lambda \Id \\
            \tr[X_{t+1}] &=1,
        \end{cases},
    \end{equation}
    which implies the solution
    \begin{equation}
        X_{t+1} = \partial^{-1} \regfunc (\partial \regfunc (X_t) + \step \nabla f(X_t) -  \step\lambda \Id),
    \end{equation}
    with $\lambda$ chosen to ensure that $X_t$ lies in the feasible set $\mathcal{X}^d$; i.e., it satisfies the constraint
    \begin{equation}
        \tr[X_{t+1}]=\tr[\partial^{-1} \regfunc (\partial \regfunc (X_t) + \step \nabla f(X_t) -  \step\lambda \Id)]=1.
    \end{equation}
\end{proof}

\proximalproj*
\begin{proof}
    The constrained minimization problem has Lagrangian 
    \begin{align}
        \mathcal{L} &= \tr[\nabla f(Y_{t+1})^\dagger (X_{t+1}-X_t)] -\frac{1}{\step}\bregdiv_\regfunc (X_{t+1}||X) - \lambda (\tr[X_{t+1}]-1),
    \end{align}
    where
    \begin{align}
        \bregdiv_\regfunc (X_{t+1}||X) = \regfunc (X_{t+1})- \regfunc (X_t) - \tr [\partial\regfunc (X_t)^\dagger (X_{t+1}-X_t)].
    \end{align}
    The Lagrangian has gradients
    \begin{equation}
        \begin{cases}
            \nabla_{X_{t+1}^\top} \mathcal{L} &= \nabla f(Y_{t+1}) - \frac{1}{\step} \left(\partial \regfunc (X_{t+1})-  \partial \regfunc (X_t)\right) - \lambda \Id \\
            \nabla_{\lambda} \mathcal{L} &= \tr[X_{t+1}]-1.
        \end{cases}
    \end{equation}
    Setting both of these gradients to zero gives us the equations
    \begin{equation}
        \begin{cases}
            \partial \regfunc (X_{t+1}) &= \partial \regfunc (X_t)+\step\nabla f(Y_{t+1}) - \step\lambda \Id \\
            \tr[X_{t+1}] &=1,
        \end{cases}
    \end{equation}
    which implies the solution
    \begin{equation}
        X_{t+1} = \partial^{-1} \regfunc (\partial \regfunc (X_t)+\step\nabla f(Y_{t+1}) - \step\lambda \Id),
    \end{equation}
    with $\lambda$ chosen to ensure that $X_t$ lies in the feasible set $\mathcal{X}^d$, i.e., it satisfies the constraint
    \begin{equation}
        \tr[X_t]=\tr[\partial^{-1} \regfunc (\partial \regfunc (X_t)+\step\nabla f(Y_{t+1}) - \step\lambda \Id)]=1.
    \end{equation}
\end{proof}

\bregfrob*
\begin{proof}
    We begin by noting that 
    \begin{align}
        \nabla \regfunc(Y)= \frac{1}{2} \nabla ||Y||_F^2 = Y,
    \end{align}
    which implies that the mirror map $\nabla h$ is the identity in this case. Therefore, 
    \begin{align}
        \nabla^{-1} \regfunc(Y)=\nabla \regfunc(Y) = Y.
    \end{align}
    Furthermore, since 
    \begin{align}
        ||X-Y||_F^2 &= ||X||_F^2+||Y||_F^2 - 2\: \tr[X^\dagger Y] \\
        &= ||X||_F^2+(2||Y||_F^2-||Y||_F^2) - 2\: \tr[X^\dagger Y] \\
        &= ||X||_F^2-||Y||_F^2+2\:\tr[Y^\dagger Y] - 2\: \tr[X^\dagger Y] \\
        &= ||X||_F^2-||Y||_F^2-2\:\tr[(X-Y)^\dagger Y],
    \end{align}
    the Bregman divergence, as desired, is
    \begin{align}
        \bregdiv_\regfunc(X||Y)&=\regfunc(X)-\regfunc(Y)-\tr[(X-Y)^\dagger\nabla \regfunc(Y)] \\
        &=\frac{1}{2}||X||_F^2-\frac{1}{2}||Y||_F^2 -\tr[(X-Y)^\dagger\nabla \regfunc(Y)] \\
        &=\frac{1}{2}||X||_F^2-\frac{1}{2}||Y||_F^2 -\tr[(X-Y)^\dagger Y] \\
        &=\frac{1}{2} \left(||X||_F^2-||Y||_F^2 -2\tr[(X-Y)^\dagger Y] \right) \\
        &=\frac{1}{2}||X-Y||_F^2.
    \end{align}
    
    The corresponding regularized best-response is 
    \begin{align*}
        \mirproj_{\sjoint}^\regfunc (Y) &=  \underset{C\in\sjoint}{\arg\min} \left\{\tr[Y^\dagger C]-\frac{1}{2} ||C||_F^2\right\},
    \end{align*}
    which has Lagrangian
    \begin{align}
        \mathcal{L} = \tr[Y^\dagger C]-\frac{1}{2} ||C||_F^2 - \lambda (\tr[C]-1),
    \end{align}
    with gradients
    \begin{align}
        \nabla\mathcal{L} = \begin{cases}
            \nabla_{C^\top}\mathcal{L} &= Y-C -\lambda \Id \\
            \nabla_{\lambda}\mathcal{L} &= -\tr[C]+1.
        \end{cases}
    \end{align}
    Note, however, that these are also gradients of the Lagrangian
    \begin{align}
        \mathcal{L}' = \frac{1}{2} ||Y-C||_F^2 - \lambda (\tr[C]-1),
    \end{align}
    corresponding to the minimization of the orthogonal projection  
    \begin{align*} 
        \orthproj_\sjoint (Y)={\arg\min}_{C\in\sjoint} ||Y-C||_F^2.
    \end{align*}
    Therefore, as desired,
    \begin{align*} 
    \mirproj_{\sjoint}^\regfunc (Y)={\arg\min}_{C\in\sjoint} ||Y-C||_F^2 = \orthproj_\sjoint (Y).
    \end{align*}
    
    Finally, the corresponding proximal projection is given by
    \begin{align}
        \proxproj^\regfunc_{\sjoint}(X,Y)=
        {\arg\min}_{C\in\sjoint}\left\{ \tr [Y^\dagger (X-C)] +\frac{1}{2}||C-X||^2_F\right\},
    \end{align}
    which has Lagrangian
    \begin{align}
        \mathcal{L} = \tr [Y^\dagger (X-C)] +\frac{1}{2}||C-X||^2_F - \lambda (\tr[C]-1),
    \end{align}
    with gradients
    \begin{align}
        \nabla\mathcal{L} = \begin{cases}
            \nabla_{C^\top}\mathcal{L} &= X+Y-C -\lambda \Id \\
            \nabla_{\lambda}\mathcal{L} &= -\tr[C]+1.
        \end{cases}
    \end{align}
    Note, however, that these are also gradients of the Lagrangian
    \begin{align}
        \mathcal{L}' = \frac{1}{2} ||X+Y-C||_F^2 - \lambda (\tr[C]-1),
    \end{align}
    corresponding to the minimization of the orthogonal projection  
    \begin{align*} 
        \orthproj_\sjoint (X+Y)={\arg\min}_{C\in\sjoint} ||X+Y-C||_F^2.
    \end{align*}
    Therefore, as desired,
    \begin{align*} 
    \proxproj^\regfunc_{\sjoint}(X,Y)={\arg\min}_{C\in\sjoint} ||X+Y-C||_F^2 = \orthproj_\sjoint (X+Y).
    \end{align*}
\end{proof}

\bregvn*
\begin{proof}
    We will begin with an informal proof of the Bregman divergence of the von Neumann entropy of density matrices (for the original, rigorous proof refer to \cite{petz_bregman_2007}). In doing so, we will leverage the following facts about density matrices $X$ and $Y$:
    \begin{enumerate}
        \item $\nabla_{X^\top} \tr[X \log X] = \Id + \log X$
        \item $\tr[X] = \tr [Y] = 1$
        \item Density matrices are Hermitian, which implies $X^\dagger=X$ and $Y^\dagger=Y$.
    \end{enumerate}
    Thus, the Bregman divergence is the relative quantum entropy as follows,
    \begin{align*}
        \bregdiv_h(X||Y) &=\regfunc(X)-\regfunc(Y)-\tr[(X-Y)^\dagger\nabla \regfunc(Y)] \\
        &=\tr[X \log X]-\tr[Y \log Y]-\tr[(X-Y)^\dagger\nabla \tr[Y \log Y]] \\
        &=\tr[X \log X]-\tr[Y \log Y]-\tr[(X-Y)^\dagger (\Id + \log Y)] \\
        &=\tr[X \log X]-\tr[Y \log Y]-\tr[X]+\tr[Y]-\tr[ X\log Y]+\tr[ Y\log Y] \\
        &=\tr[X (\log X -\log Y) ].
    \end{align*}
    
    The regularized best-response,
    \begin{align*}
        \mirproj_\sjoint^h(Y)={\arg\min}_{C\in\sjoint} \{ \tr[Y^\dagger C]-\tr[C \log C]\}
    \end{align*}
    has Lagrangian
    \begin{align*}
        \mathcal{L}(Y) = \tr[Y^\dagger C]-\tr[C \log C] - \lambda \cdot (\tr[C]-1),
    \end{align*}
    with gradients
    \begin{align*}
        \nabla\mathcal{L}=
        \begin{cases}
            \nabla_{C^\top}\mathcal{L}(Y) &= Y -(\Id +\log C) - \lambda \Id = Y - \log C - \alpha \Id \\
            \nabla_{\lambda}\mathcal{L}(Y) &= -\tr[C] + 1,
        \end{cases}
    \end{align*}
    where $\alpha=\lambda+1$. Setting the gradients to zero results in the system of equations 
    \begin{align*}
        \begin{cases}
             \log C &= Y -\alpha \Id \\
            \tr[C] &= 1
        \end{cases} \implies
        \begin{cases}
             C &= \exp(Y -\alpha \Id) = \exp(Y) \cdot e^{-\alpha}\Id \\
            \tr[C] &= 1.
        \end{cases}
    \end{align*}
    We can solve for $\alpha$ by plugging the first equation into the second and using the properties of matrix exponentials as follows.
    \begin{align*}
        \tr[\exp(Y) \cdot e^{-\alpha} \Id] = e^{-\alpha} \tr[\exp(Y)] = 1 \:\: \implies \:\:
        e^{-\alpha}  = \frac{1}{\tr[\exp(Y)]}.
    \end{align*}
    Substituting this back into our equation for $C$ gives 
    \begin{align*}
        C = \frac{\exp(Y)}{\tr[\exp(Y)]}.
    \end{align*}
    Therefore, as desired, the regularized best response is the logit map
    \begin{align*}
        \mirproj_\sjoint^h(Y) = \frac{\exp(Y)}{\tr[\exp(Y)]}=\Lambda(Y).
    \end{align*}

    Finally, the corresponding proximal projection is given by
    \begin{align}
        \proxproj^\regfunc_{\sjoint}(X,Y)=
        {\arg\min}_{C\in\sjoint}\left\{ \tr [Y^\dagger (X-C)] +\tr[C (\log C - \log X)]\right\},
    \end{align}
    which has Lagrangian
    \begin{align*}
        \mathcal{L}(Y) = \tr [Y^\dagger (X-C)] +\tr[C (\log C - \log X)] - \lambda \cdot (\tr[C]-1),
    \end{align*}
    with gradients
    \begin{align*}
        \nabla\mathcal{L}=
        \begin{cases}
            \nabla_{C^\top}\mathcal{L}(Y) &= -Y +\Id +\log C - \log X - \lambda \Id = \log C - Y - \log X + \alpha \Id \\
            \nabla_{\lambda}\mathcal{L}(Y) &= -\tr[C] + 1,
        \end{cases}
    \end{align*}
    where $\alpha=\lambda+1$. Setting the gradients to zero results in the system of equations 
    \begin{align*}
        \begin{cases}
             \log C &= Y + \log X -\alpha \Id \\
            \tr[C] &= 1
        \end{cases} \implies
        \begin{cases}
             C &= \exp(Y+\log X) \cdot e^{-\alpha}\Id \\
            \tr[C] &= 1
        \end{cases}
    \end{align*}
    We can solve for $\alpha$ by plugging the first equation into the second and using the properties of matrix exponentials as follows.
    \begin{align*}
        \tr[\exp(Y+\log X) \cdot e^{-\alpha} \Id] = e^{-\alpha} \tr[\exp(Y+\log X)] = 1 \:\: \implies \:\:
        e^{-\alpha}  = \frac{1}{\tr[\exp(Y+\log X)]}.
    \end{align*}
    Substituting this back into our equation for $C$ gives 
    \begin{align*}
        C = \frac{\exp(Y+\log X)}{\tr[\exp(Y+\log X)]}.
    \end{align*}
    Therefore, as desired, the regularized best response is the logit map
    \begin{align*}
        \proxproj^\regfunc_{\sjoint}(X,Y) = \frac{\exp(Y+\log X)}{\tr[\exp(Y+\log X)]}=\Lambda(Y+\log X).
    \end{align*}
    
\end{proof}

\section{Properties of Finite-Valued Zero-Sum Quantum Games} \label{appendix:properties}
In this appendix we will prove several important properties of finite-valued zero-sum quantum games.

\subsection{Properties of Spectraplexes} \label{appendix:spectraplex}
\begin{restatable}{lemma}{convexspec}\label{thm:convex}%
   The spectraplex $\mathcal{X} = \{X \in\mathcal{H}^{k}_+ : \tr(X)=1\}$ is a convex set.
\end{restatable}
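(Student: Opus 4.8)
The plan is to verify the two defining conditions of membership in $\mathcal{X}$ are preserved under convex combinations, working directly from the definition of the spectraplex. Convexity requires that for any $X, Y \in \mathcal{X}$ and any $\lambda \in [0,1]$, the point $Z = \lambda X + (1-\lambda) Y$ also lies in $\mathcal{X}$. Since membership in $\mathcal{X}$ amounts to two conditions---unit trace and membership in $\mathcal{H}^{k}_+$ (i.e., being positive Hermitian)---I would check each in turn.

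First I would handle the trace constraint, which follows immediately from linearity of the trace: $\tr(Z) = \lambda \tr(X) + (1-\lambda)\tr(Y) = \lambda + (1-\lambda) = 1$, using that both $X$ and $Y$ have unit trace by hypothesis.

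Next I would verify $Z \in \mathcal{H}^{k}_+$, splitting this into the Hermitian and positivity requirements. Hermiticity holds because $X$ and $Y$ are Hermitian and the coefficients $\lambda, 1-\lambda$ are real, so $Z^\dagger = \lambda X^\dagger + (1-\lambda) Y^\dagger = \lambda X + (1-\lambda) Y = Z$. For positivity I would invoke the quadratic-form characterization: for an arbitrary vector $v$, we have $v^\dagger Z v = \lambda\, v^\dagger X v + (1-\lambda)\, v^\dagger Y v \geq 0$, where each summand is non-negative because $X$ and $Y$ are positive semi-definite and---crucially---the weights $\lambda$ and $1-\lambda$ are both non-negative for $\lambda \in [0,1]$.

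There is no substantive obstacle here; the result is a direct consequence of linearity of the trace together with the non-negativity of convex-combination weights. The only point that warrants care is the positivity step, where it is essential that $\lambda, 1-\lambda \geq 0$: the same linear combination with a negative coefficient could destroy positive semi-definiteness, so it is precisely convexity (as opposed to mere affine closure) that the argument relies upon.
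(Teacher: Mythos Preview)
Your proposal is correct and follows essentially the same approach as the paper: both verify positive semi-definiteness via the quadratic-form characterization $v^\dagger Z v \geq 0$ and unit trace via linearity of the trace. The only cosmetic differences are that you additionally spell out the Hermiticity check (which the paper leaves implicit) and you treat the trace condition first rather than second.
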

\begin{proof}
    By the definition of convex set, $\mathcal{X}$ is a convex set iff,  for any $X_1, X_2 \in \mathcal{X}$ and $t \in (0,1]$,
    \begin{align}
        \tilde{X} = (1-t) \cdot X_1 + t \cdot X_2
    \end{align}
    such that $\tilde{X}\in\mathcal{H}^{k}_+$ and $\tr(\tilde{X})=1$.
    \\
    \\
    We begin by proving that $\tilde{X}$ is positive semi-definite, $\tilde{X}\in\mathcal{H}^{k}_+$, which is true iff $\forall \ket{v}\in\mathbb{C}^k$,
    \begin{align}
        \bra{v} \tilde{X} \ket{v} \geq 0.
    \end{align}
    Given that $X_1,X_2\in\mathcal{H}^{k}_+$, 
    \begin{align}
        \forall \ket{v}\in\mathbb{C}^k: \:\:\: \bra{v} X_1 \ket{v} \geq 0 \:\:\text{ and }\:\: \bra{v} X_2 \ket{v} \geq 0.
    \end{align}
    Therefore, since $t \geq 0$ and $(1-t) \geq 0$,
    \begin{align}
        \bra{v} \tilde{X} \ket{v} = \bra{v} (1-t) \cdot X_1 + t \cdot X_2 \ket{v} = (1-t) \bra{v} X_1 \ket{v} + t \bra{v} X_2 \ket{v} \geq 0.
    \end{align}
    \\
    We now prove that $\tr(\tilde{X})=1$. Since $\tr[X_1]=\tr[X_2]=1$,
    \begin{align}
        \tr(\tilde{X}) = \tr\Big[(1-t) \cdot X_1 + t \cdot X_2\Big] = (1-t)\cdot\tr[X_1]+t\cdot\tr[X_2] = 1-t+t=1
    \end{align}
\end{proof}

\begin{restatable}{lemma}{convexspec2}\label{thm:compact}%
   The spectraplex $\mathcal{X} = \{X \in\mathcal{H}^{k}_+ : \tr(X)=1\}$ is a compact set.
\end{restatable}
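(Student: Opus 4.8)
The plan is to invoke the Heine--Borel theorem. The ambient space of $2^k\times 2^k$ Hermitian matrices is a finite-dimensional real vector space (of real dimension $(2^k)^2$), so a subset of it is compact if and only if it is closed and bounded. I would therefore establish these two properties separately and conclude.

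First I would show boundedness. Equip the Hermitian matrices with the Frobenius norm. For any $X\in\mathcal{X}$, positive semidefiniteness gives real eigenvalues $\lambda_1,\dots,\lambda_{2^k}\ge 0$, while the trace constraint gives $\sum_i\lambda_i=\tr(X)=1$; hence each $\lambda_i\in[0,1]$. Since $\|X\|_F^2=\tr[X^\dagger X]=\sum_i\lambda_i^2$ and $0\le\lambda_i\le 1$ implies $\lambda_i^2\le\lambda_i$, we obtain $\|X\|_F^2\le\sum_i\lambda_i=1$. Thus $\mathcal{X}$ is contained in the unit Frobenius ball and is bounded.

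Next I would show closedness by writing $\mathcal{X}$ as an intersection of closed sets. The constraint $\tr(X)=1$ is the preimage of the closed singleton $\{1\}$ under the continuous linear functional $X\mapsto\tr(X)$, hence closed. For positive semidefiniteness, recall that $X\in\mathcal{H}^k_+$ if and only if $\bra{v}X\ket{v}\ge 0$ for every $\ket{v}\in\mathbb{C}^{2^k}$. For each fixed $\ket{v}$, the map $X\mapsto\bra{v}X\ket{v}$ is continuous, so $\{X:\bra{v}X\ket{v}\ge 0\}$ is a closed half-space, and $\mathcal{H}^k_+$ is the intersection of these closed half-spaces over all $\ket{v}$, hence closed. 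Consequently $\mathcal{X}=\mathcal{H}^k_+\cap\{X:\tr(X)=1\}$ is an intersection of closed sets and is therefore closed.

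Combining boundedness and closedness with Heine--Borel yields the compactness of $\mathcal{X}$. The only step that requires a little care is the closedness of the positive-semidefinite cone, which I would handle cleanly via the half-space characterization above; the boundedness argument is immediate once the eigenvalues are confined to $[0,1]$ by the trace normalization.
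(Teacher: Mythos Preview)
Your argument is correct and complete: the Heine--Borel reduction to closedness and boundedness in the finite-dimensional real vector space of Hermitian matrices is the standard route, and both pieces are handled cleanly (the eigenvalue bound for the Frobenius norm and the half-space characterization of $\mathcal{H}^k_+$).

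The paper, by contrast, does not actually prove the lemma at all; it simply cites page~162 of \cite{HiriartUrruty2001}. Your proof is therefore strictly more informative than what the paper provides, supplying the self-contained elementary argument that the citation stands in for.
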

\begin{proof}
   See page 162 of \cite{HiriartUrruty2001}.
\end{proof}

\subsection{Properties of \texorpdfstring{$\grad$}{Lg}} \label{sec:f_props}

\monotone*
\begin{proof} 
     For this proof, we will leverage the Pauli decompositions of the involved operators. Note that every matrix $M$ can be decomposed in terms of the Pauli matrices $\pauli = \{I,X,Y,Z\}$, as
    \begin{align*}
        M = \sum_{P \in \pauli^{\otimes n}} \widehat{M}(P) P,
    \end{align*}
    where the tensored Pauli matrices $P$ constitute an orthonormal basis, with respect to trace inner product, and possess corresponding Pauli coefficients $\widehat{M}(P)$. Note that for all Paulis, we have $P=P^\dagger$. The Pauli coefficients of an $n$-qubit system can also be explicitly defined as
    \begin{equation*}
        \widehat{M}(P) = \frac{1}{2^n} \tr (P^\dagger M).
    \end{equation*}
    The payoff observable $U$ can be decomposed as
    \begin{align*}
        U = \sum_{R \in \pauli^{\otimes(nm)}} \widehat{U}(R) R = \sum_{P\in \pauli^{\otimes n}}\sum_{Q \in \pauli^{\otimes m}} \widehat{U}(P,Q) P\otimes Q,
    \end{align*} 
    where $\widehat{U}(P,Q)=\widehat{U}(P\otimes Q)$. Plugging this into the payoff gradients defined in \Cref{eqn:pgrad1} and \Cref{eqn:pgrad2}, we get that
    \begin{align*}
        \grad_\pone (\pone, \ptwo) &= \nabla_{\pone^\top} \tr \left( U^\dagger (\pone \otimes \ptwo)\right) \\
        &= \nabla_{\pone^\top} \tr \left( \left( \sum_{P\in \pauli^{\otimes n}}\sum_{Q \in \pauli^{\otimes m}} \widehat{U}(P,Q) P \otimes Q \right)^\dagger (\pone \otimes \ptwo)\right) \\
        &= \nabla_{\pone^\top} \tr \left( \sum_{P\in \pauli^{\otimes n}}\sum_{Q \in \pauli^{\otimes m}} \widehat{U}(P,Q)^* P^\dagger \pone\otimes Q^\dagger \ptwo  \right) \\
        &=   \sum_{P\in \pauli^{\otimes n}}\sum_{Q \in \pauli^{\otimes m}} \widehat{U}(P,Q)^* \cdot \nabla_{\pone^\top} \tr \left( P^\dagger \pone\otimes Q^\dagger \ptwo  \right) \\
        &= \sum_{P\in \pauli^{\otimes n}}\sum_{Q \in \pauli^{\otimes m}} \widehat{U}(P,Q)^* \cdot \nabla_{\pone^\top} \tr \left( P^\dagger \pone\right) \tr\left(Q^\dagger \ptwo\right) \\
        &= \sum_{P\in \pauli^{\otimes n}}\sum_{Q \in \pauli^{\otimes m}} \widehat{U}(P,Q)^* \cdot 2^m\widehat{\ptwo}(Q) \: P^\dagger
    \end{align*}
    and, similarly,
    \begin{align*}
        \grad_\ptwo (\pone, \ptwo) &= -\nabla_{\ptwo^\top} \tr \left( U^\dagger (\pone \otimes \ptwo)\right) \\
        &=  - \sum_{P\in \pauli^{\otimes n}}\sum_{Q \in \pauli^{\otimes m}} \widehat{U}(P,Q)^* \cdot \nabla_{\ptwo^\top} \tr \left( P^\dagger \pone\right)\tr\left( Q^\dagger \ptwo  \right) \\
        &=  -\sum_{P\in \pauli^{\otimes n}}\sum_{Q \in \pauli^{\otimes m}} \widehat{U}(P,Q)^* \cdot 2^n\widehat{\pone} (P) \: Q^\dagger.
    \end{align*}
    Thus, plugging these results into  \Cref{eqn:grad_joint}, the operator $\grad$ can be decomposed as
    \begin{align} \label{eqn:grad_decomp}
        \grad(\pone, \ptwo) &= \begin{pmatrix}
            2^m &\sum_{P\in \pauli^{\otimes n}}\sum_{Q \in \pauli^{\otimes m}} \widehat{U}(P,Q)^* \cdot \widehat{\ptwo}(Q) \: P^\dagger \\
            -2^n &\sum_{P\in \pauli^{\otimes n}}\sum_{Q \in \pauli^{\otimes m}} \widehat{U}(P,Q)^* \cdot \widehat{\pone} (P) \: Q^\dagger
        \end{pmatrix}.
    \end{align}
    Now, let us consider the two distinct games states, $X = (x_1, x_2)$ and $Y=(y_1, y_2)$, where $x_1, y_1 \in \sone$ and $x_2, y_2 \in \stwo$, with difference
    \begin{align*}
        X-Y &= \begin{pmatrix} x_{1} \\ x_{2}\end{pmatrix}
        -\begin{pmatrix} y_{1} \\ y_{2}\end{pmatrix} = \begin{pmatrix} x_1-y_1 \\ x_2-y_2\end{pmatrix}.
    \end{align*}
    Following from \Cref{eqn:grad_decomp}, the difference of the payoff gradients is
    \begin{align*}
        F(X) &- F(Y) = F(x_1, x_2) - F(y_1, y_2) = \begin{pmatrix}
            2^m\sum_{P,Q} \widehat{U}(P,Q)^* \cdot \left(\widehat{x_2}(Q)-\widehat{y_2}(Q)\right) \: P^\dagger \\
            2^n \sum_{P,Q} \widehat{U}(P,Q)^* \cdot \left(\widehat{y_1}(P)-\widehat{x_1}(P)\right) \: Q^\dagger
        \end{pmatrix}
    \end{align*}
    Plugging these into the left-hand side of \Cref{eqn:monotone}, we get that
    \begin{align*}
        \tr &\left[ (\grad(X)-\grad(Y)) (X-Y) \right]=\\
        &= \tr\left[ 
                \begin{pmatrix}
                    2^m\sum_{P,Q} \widehat{U}(P,Q)^* \cdot \left(\widehat{x_2}(Q)-\widehat{y_2}(Q)\right) \: P^\dagger \\
                    2^n \sum_{P,Q} \widehat{U}(P,Q)^* \cdot \left(\widehat{y_1}(P)-\widehat{x_1}(P)\right) \: Q^\dagger
                \end{pmatrix} 
        \begin{pmatrix} x_1-y_1 \\ x_2-y_2\end{pmatrix}\right] \\
        &=\tr\left[ \sum_{P,Q} \widehat{U}(P,Q)^* \left(2^m \left(\widehat{x_2}(Q)-\widehat{y_2}(Q)\right) \: P^\dagger (x_1-y_1)+ 2^n\left(\widehat{y_1}(P)-\widehat{x_1}(P)\right) \: Q^\dagger (x_2-y_2)\right)\right] \\
        &= \sum_{P,Q} \widehat{U}(P,Q)^* \left(2^m\left(\widehat{x_2}(Q)-\widehat{y_2}(Q)\right) \: \tr\left[P^\dagger (x_1-y_1)\right]+ 2^n\left(\widehat{y_1}(P)-\widehat{x_1}(P)\right) \: \tr\left[Q^\dagger (x_2-y_2)\right]\right) \\
        &= \sum_{P,Q} \widehat{U}(P,Q)^* \bigg(2^m\left(\widehat{x_2}(Q)-\widehat{y_2}(Q)\right) 2^n\left(\widehat{x_1}(P)-\widehat{y_1}(P)\right)+ 2^n\left(\widehat{y_1}(P)-\widehat{x_1}(P)\right) 2^m\left(\widehat{x_2}(Q)-\widehat{y_2}(Q)\right)\bigg) \\
        &= 2^{n+m}\sum_{P,Q} \widehat{U}(P,Q)^* \bigg(\left(\widehat{x_2}(Q)-\widehat{y_2}(Q)\right) \left(\widehat{x_1}(P)-\widehat{y_1}(P)\right)- \left(\widehat{x_1}(P)-\widehat{y_1}(P)\right) \left(\widehat{x_2}(Q)-\widehat{y_2}(Q)\right)\bigg)\\
        &= 0.
    \end{align*}
\end{proof}

\lipschitz*
\begin{proof}
    In \Cref{lemma:f_lin}, we proved that $\grad$ is linear, which implies that for any $Z, Z' \in \sjoint$,
    \begin{align*}
        \grad(Z)-\grad(Z') = \grad(Z-Z').
    \end{align*}
    In order to prove that $\grad$ is Lipschitz continuous, we simply need to prove that there exists a $\cont \in \mathbb{R}$ such that, for all $Z \neq Z'$,\footnote{In the case that $Z=Z'$, trivially $Z-Z'=0$ and $\grad(Z-Z')=0$, meaning any value for $\cont$ suffices.} 
    \begin{align*}
        \frac{||\grad(Z-Z')||_F}{|| Z-Z' ||_F} \leq \cont.
    \end{align*}
    Letting $Y = Z - Z'$, this is equivalent to proving
    \begin{align} \label{eqn:sup_bound}
        \sup_{Y\neq 0}\frac{||\grad(Y)||_F}{|| Y ||_F} \leq \cont.
    \end{align}
    \\
    Vectorizing $Y$, as a vector of size $|\sjoint|$, the Frobenius norm can be expressed as 
    \begin{align*}
        || Y||_F = \sqrt{\sum_i |Y_i|^2}.
    \end{align*}
    Furthermore, since $\grad$ is a linear operator, it maps $Y$ to some new joint state $Y'=\grad(Y)\in \sjoint$, with entries
    \begin{align*}
        Y'_j = \sum_i c_{ij} Y_i.
    \end{align*}
    for some linear coefficients $c_{ij}\in\mathbb{C}$. Note that since the setting is restricted to finite-valued quantum zero-sum games, with a finite-valued utility function $\util$, the coefficients $c_{ij}$ are guaranteed to be finite-valued. Therefore, the right-hand side of \Cref{eqn:sup_bound} can be re-expressed as 
    \begin{align*}
        \sup_{Y\neq 0}\frac{||\grad(Y)||_F}{|| Y ||_F} &= \sup_{Y\neq 0}\frac{||Y'||_F}{|| Y ||_F}=\sup_{Y\neq 0}\sqrt{\frac{\sum_j |\sum_i c_{ij} Y_i|^2}{\sum_i |Y_i|^2}}. 
    \end{align*}
    Letting $c^*=\max(\{c_{i,j}\})$,
    \begin{align*}
        \sup_{Y\neq 0}\sqrt{\frac{\sum_j |\sum_i c_{ij} Y_i|^2}{\sum_i |Y_i|^2}} \leq \sup_{Y\neq 0}\sqrt{\frac{\sum_j |\sum_i c^* Y_i|^2}{\sum_i |Y_i|^2}} = \sup_{Y\neq 0}\sqrt{\frac{c^* \sum_j |\sum_i  Y_i|^2}{\sum_i |Y_i|^2}}
    \end{align*}
    and leveraging the triangle inequality, 
    \begin{align*}
        \sup_{Y\neq 0}\sqrt{\frac{c^* \sum_j |\sum_i  Y_i|^2}{\sum_i |Y_i|^2}} \leq \sup_{Y\neq 0}\sqrt{\frac{c^* \sum_j \sum_i |Y_i|^2}{\sum_i |Y_i|^2}} = \sup_{Y\neq 0}\sqrt{\frac{c^* |\sjoint | \cdot \sum_i |Y_i|^2}{\sum_i |Y_i|^2}} = \sqrt{c^*|\sjoint |}.
    \end{align*}
     Thus, for any $\cont \geq \sqrt{c^*|\sjoint |}$, \Cref{eqn:sup_bound} is satisfied. Recall that $c^*=\Theta(1)$ for games in $[-1,1]$. Therefore, the linear operator $\grad$ is Lipschitz continuous where $\cont_{F,F}=\bigo(\sqrt{4^d})$.
    Applying similar argumentation for the case of real vector and norm pairs $(\ell_\infty,\ell_1)$, we get $\cont_{\infty,1}=\bigo(1)$.
    Indeed, it is easy to see
    \begin{align*}
        \sup_{Y\neq 0}\frac{||\grad(Y)||_\infty}{|| Y ||_1} &= \sup_{Y\neq 0}\frac{||Y'||_\infty}{|| Y ||_1}=\sup_{Y\neq 0}\sqrt{\frac{\max_j |\sum_i c_{ij} Y_i|}{\sum_i |Y_i|}}\leq  c^*=\bigo(1). 
    \end{align*}
\end{proof}

\strongtoweak*
\begin{proof}
    Denote a strong solution, satisfying \Cref{eqn:vi}, as $\joint^S$. Since $\grad$ is monotone, from the definition of monotonicty, $\forall\: \joint\in\sjoint$,
    \begin{align} 
        \inner{(\grad(\joint)-\grad(\joint^S))}{(\joint-\joint^S)} \geq 0,
    \end{align}
    which implies that, 
    \begin{align} 
        \inner{\grad(\joint)}{(\joint-\joint^S)}-\inner{\grad(\joint^S)}{(\joint-\joint^S)} \geq 0.
    \end{align}
    Since $\joint^S$ is a strong solution, by \Cref{eqn:vi},
    \begin{align}
        \inner{\grad(\joint)}{(\joint-\joint^S)} \geq \inner{\grad(\joint^S)}{(\joint-\joint^S)} \geq 0.
    \end{align}
    Therefore, $\forall\: \joint\in\sjoint$,
    \begin{align}
        \inner{\grad(\joint)}{(\joint^S-\joint)} \leq 0,
    \end{align}
    meaning $\joint^S$ satisfies \Cref{eqn:vi_weak} and is, thus, also a weak solution.
\end{proof}

\weaktostrong*
\begin{proof}
    Denote a weak solution, satisfying \Cref{eqn:vi_weak}, as $\joint^W$.  In \Cref{thm:convex} we proved that spectraplexes are convex, which implies that $\sone$, $\stwo$, and, thus, $\sjoint$ are convex sets. Therefore, by the definition of convex set, for any $\joint\in\sjoint$ and $t\in (0,1]$, 
    \begin{align}
        \joint^W + t\cdot (\joint-\joint^W) \in \sjoint.
    \end{align}
    Plugging this state into \Cref{eqn:vi_weak} gives
    \begin{align}
        \inner{\grad(\joint^W + t\cdot (\joint-\joint^W))}{\Big(\joint^W + t\cdot (\joint-\joint^W)-\joint^W\Big)} \geq 0 \\
        t \cdot \inner{\grad(\joint^W + t\cdot (\joint-\joint^W))}{\Big(\joint-\joint^W \Big)} \geq 0 \\
        \inner{\grad(\joint^W + t\cdot (\joint-\joint^W))}{\Big(\joint-\joint^W \Big)} \geq 0.
    \end{align}
    Since $\grad$ is Lipschitz continuous, we can take the following limit,
    \begin{align}
        \lim_{t\rightarrow 0} \inner{\grad(\joint^W + t\cdot (\joint-\joint^W))}{\Big(\joint-\joint^W \Big)} \geq 0.
    \end{align}
    The resulting expression,
    \begin{align}
        \inner{\grad(\joint^W)}{\Big(\joint-\joint^W \Big)} \geq 0, \:\:\forall \joint\in\sjoint,
    \end{align}
    implies that $\joint^W$ satisfies \Cref{eqn:vi} and is also a strong solution.
\end{proof}

\begin{restatable}[Linearity of $\grad$]{lemma}{linf}\label{lemma:f_lin}%
    $\grad:\sone \times \stwo \mapsto \sjoint$ is a linear operator.
    Therefore, $\grad$ must satisfy the following properties:
    \begin{enumerate}
        \item For any $\lambda \in \mathbb{C}$, $\grad(\lambda \pone,\lambda \ptwo)=\lambda \grad(\pone,\ptwo)$.
        \item For $\pone, \pone' \in \sone$ and $\ptwo, \ptwo' \in \stwo$, $\grad(\pone+\pone',\ptwo+\ptwo')=\grad(\pone,\ptwo)+\grad(\pone',\ptwo')$.
    \end{enumerate}
\end{restatable}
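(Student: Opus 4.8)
The plan is to prove linearity by reducing it to the linearity of the two elementary operations out of which $\grad$ is built: the partial trace and the tensor product. Recalling the explicit forms from \Cref{eqn:pgrad1} and \Cref{eqn:pgrad2}, Alice's payoff gradient $\grad_\pone(\ptwo) = \tr_\stwo[\payob^\dagger(\Id_n \otimes \ptwo)]$ depends only on Bob's state $\ptwo$, while Bob's payoff gradient $\grad_\ptwo(\pone) = -\tr_\sone[\payob^\dagger(\pone \otimes \Id_m)]$ depends only on Alice's state $\pone$. The first observation I would record is that the map $\ptwo \mapsto \Id_n \otimes \ptwo$, left-multiplication by the fixed operator $\payob^\dagger$, and the partial trace $\tr_\stwo[\cdot]$ are each linear; hence every coordinate of $\grad$ is a composition of linear maps and is therefore linear in its argument.

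With this in hand, I would verify the two stated properties directly. For homogeneity (Property 1), I apply $\grad$ to $(\lambda\pone, \lambda\ptwo)$ and pull the scalar $\lambda$ out through the tensor product and partial trace in each coordinate, obtaining $\lambda(\grad_\pone(\ptwo), \grad_\ptwo(\pone)) = \lambda\grad(\pone,\ptwo)$. For additivity (Property 2), I apply $\grad$ to $(\pone+\pone', \ptwo+\ptwo')$ and distribute the sum across $\Id_n\otimes(\ptwo+\ptwo') = \Id_n\otimes\ptwo + \Id_n\otimes\ptwo'$ in the first coordinate (and symmetrically in the second), then use linearity of the partial trace to split the resulting expression into $\grad_\pone(\ptwo)+\grad_\pone(\ptwo')$, which matches the first coordinate of $\grad(\pone,\ptwo)+\grad(\pone',\ptwo')$. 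An even quicker route, given that the Pauli decomposition \Cref{eqn:grad_decomp} has already been established in the monotonicity proof, is to note that the Pauli coefficients $\widehat{\ptwo}(Q)$ and $\widehat{\pone}(P)$ are themselves linear functionals of $\ptwo$ and $\pone$, so that each entry of $\grad(\pone,\ptwo)$ is manifestly linear; equivalently, one may observe that $\exutil(\pone,\ptwo)=\tr[\payob^\dagger(\pone\otimes\ptwo)]$ is a bilinear form, whence differentiating with respect to one argument leaves a function linear in the other.

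The only genuine subtlety — and the point I would take care to flag — is notational rather than mathematical. The subscripts in $\grad_\pone$ and $\grad_\ptwo$ are labels identifying \emph{which} player's gradient is meant, not an indication of functional dependence: Alice's gradient is a function of Bob's state and vice versa, so the dependence is ``crossed.'' When checking Property 2 one must therefore confirm that the first coordinate of the right-hand side $\grad(\pone,\ptwo)+\grad(\pone',\ptwo')$ reads $\grad_\pone(\ptwo)+\grad_\pone(\ptwo')$ — both being Alice's gradient evaluated at $\ptwo$ and $\ptwo'$ — rather than any spurious cross term in $\pone,\pone'$. Once this bookkeeping is arranged correctly, the crossed structure poses no obstruction, since each coordinate remains linear in the single variable it actually depends on, and the argument closes immediately.
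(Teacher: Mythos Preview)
Your proposal is correct and follows essentially the same approach as the paper: both verify homogeneity and additivity directly by pulling scalars and sums through the tensor product and partial trace in each coordinate of $\grad$. Your additional framing via composition of linear maps and the remark on the crossed subscript notation are nice touches, but the core argument is identical to the paper's.
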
 
\begin{proof}
    From \Cref{eqn:pgrad1} and \Cref{eqn:pgrad2}, it follows that
    \begin{align*}
        \grad_\pone (\pone, \ptwo) &= \tr_\stwo \left[\payob^\dagger (\Id_n \otimes \ptwo)\right]=\grad_\pone (\ptwo) \in \sone \\
        \grad_\ptwo (\pone, \ptwo) &= -\tr_\sone \left[\payob^\dagger (\pone \otimes \Id_m)\right]=\grad_\ptwo (\pone) \in \stwo\\
        \grad(\pone,\ptwo) &=(\grad_\pone (\pone, \ptwo), \grad_\ptwo (\pone, \ptwo)) \in \sjoint
    \end{align*}
    such that $\grad:\sone \times \stwo \mapsto \sjoint$. 
    \\
    \\
    We will now prove that $\grad$ satisfies the \emph{multiplicative property}. If $\lambda \in \mathbb{C}$, then
    \begin{align*}
        \grad_{\lambda\pone}(\lambda \pone, \lambda \ptwo)&=\tr_\stwo \left[\payob^\dagger (\Id_n \otimes \lambda\ptwo)\right]=\lambda \tr_\stwo \left[\payob^\dagger (\Id_n \otimes \ptwo)\right] = \lambda \grad_\pone(\pone, \ptwo)\\
        \grad_{\lambda\ptwo}(\lambda \pone, \lambda \ptwo)&=-\tr_\sone \left[\payob^\dagger (\lambda\pone \otimes \Id_m)\right]=-\lambda\tr_\sone \left[\payob^\dagger (\pone \otimes \Id_m)\right]=\lambda \grad_{\ptwo}( \pone, \ptwo),
    \end{align*}
    which implies, as desired,
    \begin{align*}
        \grad(\lambda \pone, \lambda \ptwo) = \Big(\grad_{\lambda\pone}(\lambda \pone, \lambda \ptwo), \grad_{\lambda\ptwo}(\lambda \pone, \lambda \ptwo)\Big) = \Big(\lambda \grad_\pone(\pone, \ptwo), \lambda \grad_{\ptwo}( \pone, \ptwo)\Big) = \lambda \grad(\pone, \ptwo).
    \end{align*}
    \\
    We will now prove that $\grad$ satisfies the \emph{additive property}. If $\pone, \pone' \in \sone$ and $\ptwo, \ptwo' \in \stwo$, then
    \begin{align*}
        \grad_{\pone+\pone'} (\pone+\pone', \ptwo+\ptwo') & = \tr_\stwo \left[\payob^\dagger (\Id_n \otimes (\ptwo+\ptwo'))\right] \\
        &= \tr_\stwo \left[\payob^\dagger (\Id_n \otimes \ptwo)+\payob^\dagger (\Id_n \otimes \ptwo')\right] \\
        &= \tr_\stwo \left[\payob^\dagger (\Id_n \otimes \ptwo)\right]+\tr_\stwo \left[\payob^\dagger (\Id_n \otimes \ptwo')\right]\\
        &= \grad_{\pone} (\pone, \ptwo) + \grad_{\pone'} (\pone', \ptwo')\\
        \grad_{\ptwo+\ptwo'} (\pone+\pone', \ptwo+\ptwo') & = -\tr_\sone \left[\payob^\dagger ((\pone+\pone') \otimes \Id_m)\right]\\
        & = -\tr_\sone \left[\payob^\dagger (\pone\otimes \Id_m)+\payob^\dagger (\pone' \otimes \Id_m)\right] \\
        & = -\tr_\sone \left[\payob^\dagger (\pone\otimes \Id_m)\right]-\tr_\sone \left[\payob^\dagger (\pone' \otimes \Id_m)\right] \\
        &= \grad_{\ptwo} (\pone, \ptwo) + \grad_{\ptwo'} (\pone', \ptwo'),
    \end{align*}
    which implies, as desired, 
    \begin{align*}
        \grad(\pone+\pone', \ptwo+\ptwo') &= \Big(\grad_{\pone} (\pone, \ptwo) + \grad_{\pone'} (\pone', \ptwo'), \grad_{\ptwo} (\pone, \ptwo) + \grad_{\ptwo'} (\pone', \ptwo')\Big)\\
        &= \Big(\grad_{\pone} (\pone, \ptwo), \grad_{\ptwo} (\pone, \ptwo)\Big)+\Big(\grad_{\pone'} (\pone', \ptwo'), \grad_{\ptwo'} (\pone', \ptwo')\Big)\\
        &=\grad(\pone, \ptwo)+\grad(\pone', \ptwo').
    \end{align*}
\end{proof}

\section{Convergence Analysis Proofs}
\label{app:proofs}
\begin{proposition} \label{thm:sum_strong_convex}
    If $g(X)$ is a $\mu$-strongly convex function and $h(X)$ is an $\alpha$-strongly convex function, then $f(X)=g(X)+h(X)$ is a $\mu+\alpha$-strongly convex function.
\end{proposition}
\begin{proof}
    By the defintion of strong convexity (\Cref{def:strong_convex}), $\forall X,Y \in \sjoint$
    \begin{align}
        g (X) \geq g (Y) + \langle\nabla g(Y),X-Y\rangle+\frac{\mu}{2} \|X - Y\|_F^2 ,\\
        h (X) \geq h (Y) + \langle\nabla h(Y),X-Y\rangle+\frac{\alpha}{2} \|X - Y\|_F^2.
    \end{align}
    Summing together these two expressions gives 
    \begin{align}
        &g (X) +h(X) \geq g (Y)+h(Y) + \langle\nabla g(Y)+\nabla h(Y),X-Y\rangle+\frac{\mu+\alpha}{2} \|X - Y\|_F^2 \\
        &\implies f (X) \geq f (Y) + \langle\nabla f(Y),X-Y\rangle+\frac{\mu+\alpha}{2} \|X - Y\|_F^2 ,
    \end{align}
    which implies $f$ is $(\mu+\alpha)$-strongly convex.
\end{proof}

\momconv*
\begin{proof} 
    To prove that $\mopt_t(Z)$ is $\frac{\mu}{\step}$-strongly convex, we will leverage \Cref{thm:sum_strong_convex}. Namely, by proving that $\inner{\grad(\joint_{t})}{(Z-\mom_{t-1})}$ is convex (i.e., $0$-strongly convex) and that $\frac{1}{\step}\bregdiv_\regfunc (Z\|\mom_{t-1})$ is $\frac{\mu}{\step}$-strongly convex, then 
    \begin{align} 
        \mopt_t(Z)=
         \inner{\grad(\joint_{t})}{(Z-\mom_{t-1})} +\frac{1}{\step}\bregdiv_\regfunc (Z\|\mom_{t-1}),
    \end{align} 
    must be $0+\frac{\mu}{\step}=\frac{\mu}{\step}$-strongly convex.
    
    We begin by proving that $\frac{1}{\step}\bregdiv_\regfunc (Z\|\mom_{t-1})$ is $\frac{\mu}{\step}$-strongly convex.
    From the definition of strong convexity (\Cref{def:strong_convex}), if $h$ is $\mu$-strongly convex, then the Bregman divergence $\bregdiv_h$ must also be $\mu$-strongly convex.
    This implies that
    \begin{align}
        \frac{1}{\step}\bregdiv_h (X \| Y) \geq \frac{\mu}{2\step} \|X - Y\|_F^2, \hspace{0.1in} \forall X,Y \in \sjoint,
    \end{align}
    meaning the function $\frac{1}{\step}\bregdiv_\regfunc (Z\|\mom_{t-1})$ must be $\frac{\mu}{\step}$-strongly convex.

    We now prove that $f(Z)=\inner{\grad(\joint_{t})}{(Z-\mom_{t-1})}$ is convex. For all $X,Y \in \sjoint$,
    \begin{align}
        f(X)-f(Y) = \inner{\grad(\joint_{t})}{(X-\mom_{t-1})}-\inner{\grad(\joint_{t})}{(Y-\mom_{t-1})} = \inner{\grad(\joint_{t})}{(X-Y)}.
    \end{align}
    Since $\forall Z\in\sjoint,  \: \nabla f(Z)=\grad(\joint_{t})$, this implies that 
    \begin{align}
        f(X)-f(Y) = \inner{\grad(\joint_{t})}{(X-Y)} = \inner{\nabla f(Y)}{(X-Y)},
    \end{align}
    meaning, by \Cref{def:convex}, $f$ is convex.
\end{proof}

\jointconv*
\begin{proof} 
    To prove that $\jopt_t(Z)$ is $\frac{\mu}{\step}$-strongly convex, we will leverage \Cref{thm:sum_strong_convex}. Namely, by proving that $\inner{\grad(\joint_{t-1})}{(Z-\mom_{t-1})}$ is convex (i.e. $0$-strongly convex) and that $\frac{1}{\step}\bregdiv_\regfunc (Z\|\mom_{t-1})$ is $\frac{\mu}{\step}$-strongly convex, then 
    \begin{align} 
        \jopt_t(Z)=
         \inner{\grad(\joint_{t-1})}{(Z-\mom_{t-1})} +\frac{1}{\step}\bregdiv_\regfunc (Z\|\mom_{t-1}),
    \end{align} 
    must be $0+\frac{\mu}{\step}=\frac{\mu}{\step}$-strongly convex.
    
    We begin by proving that $\frac{1}{\step}\bregdiv_\regfunc (Z\|\mom_{t-1})$ is $\frac{\mu}{\step}$-strongly convex.
    From the definition of strong convexity (\Cref{def:strong_convex}), if $h$ is $\mu$-strongly convex, then the Bregman divergence $\bregdiv_h$ must also be $\mu$-strongly convex.
    This implies that
    \begin{align}
        \frac{1}{\step}\bregdiv_h (X \| Y) \geq \frac{\mu}{2\step} \|X - Y\|_F^2, \hspace{0.1in} \forall X,Y \in \sjoint,
    \end{align}
    meaning the function $\frac{1}{\step}\bregdiv_\regfunc (Z\|\mom_{t-1})$ must be $\frac{\mu}{\step}$-strongly convex.

    We now prove that $f(Z)=\inner{\grad(\joint_{t-1})}{(Z-\mom_{t-1})}$ is convex. For all $X,Y \in \sjoint$,
    \begin{align}
        f(X)-f(Y) = \inner{\grad(\joint_{t-1})}{(X-\mom_{t-1})}-\inner{\grad(\joint_{t-1})}{(Y-\mom_{t-1})} = \inner{\grad(\joint_{t-1})}{(X-Y)}.
    \end{align}
    Since $\forall Z\in\sjoint,  \: \nabla f(Z)=\grad(\joint_{t})$, this implies that 
    \begin{align}
        f(X)-f(Y) = \inner{\grad(\joint_{t-1})}{(X-Y)} = \inner{\nabla f(Y)}{(X-Y)},
    \end{align}
    meaning, by \Cref{def:convex}, $f$ is convex.
\end{proof}

\end{document}